\documentclass[a4paper,UKenglish,cleveref, autoref, thm-restate]{lipics-v2021}

\usepackage[dvipsnames]{xcolor}
\usepackage{cancel}
\usepackage[utf8]{inputenc} 
\usepackage[T1]{fontenc}    
\usepackage{hyperref}       
\usepackage{url}            
\usepackage{booktabs}       
\usepackage{amsfonts}       
\usepackage{nicefrac}       
\usepackage{microtype}      
\usepackage{algorithmicx} 
\usepackage{lipsum}
\usepackage{wrapfig}

\usepackage{algorithm2e}

\usepackage{thm-restate}
\usepackage{marginnote}

\usepackage{ifoddpage}

\newcommand{\reviewer}[1]{%
  \checkoddpage
  \ifoddpage
    \normalmarginpar
  \else
    \reversemarginpar
  \fi
  \marginpar{\raggedright\itshape\footnotesize #1}%
}

\usepackage{amsmath,amssymb,wasysym }
\usepackage{amsthm}
\usepackage{graphicx}
\usepackage{tikz,comment}
\usepackage{pgfplots}

\newcommand{\Poiss}[1]{{{\tt  Poiss}({#1})}}
\newcommand{\Poissone}[1]{{{\tt  Poiss}_{\geq 1}({#1})}}

\newcommand{\stirlingsecond}[2]{\genfrac{\{}{\}}{0pt}{}{#1}{#2}}

\newcommand{\E}{{\mathbb{E}}}

\newcommand{\eps}{\varepsilon}

\usepackage{pifont}

\title{Efficient Uniform Sampling of Surjections via their Profiles} 

\titlerunning{Efficient Uniform Sampling of Surjections via their Profiles} 

\author{Arnaud Carayol}{Laboratoire d'Informatique Gaspard-Monge,
 Université Gustave Eiffel, France}{arnaud.carayol@univ-eiffel.fr
}{}{}

\author{Pablo Rotondo}{Laboratoire d'Informatique Gaspard-Monge,
 Université Gustave Eiffel, France}{pablo.rotondo@univ-eiffel.fr
}{}{}

\authorrunning{A. Carayol and P. Rotondo}

\Copyright{Arnaud Carayol and Pablo Rotondo}

\ccsdesc[500]{Theory of computation~Generating random combinatorial structures}

\keywords{Random sampling, Boltzmann samplers, Surjections,  Random Mappings }

\nolinenumbers

\begin{document}

\maketitle

\begin{abstract}
In this article, our aim is to develop efficient sampling algorithms for random surjections from $[n]$ to $[k]$ for all $n \geq k$. We make no assumption about the relationship between $n$ and $k$. In particular, we do not make the common assumption that the ratio $\frac{n}{k}$ is constant. All our guarantees are uniform in $n$ and $k$.

Our first insight is that all the complexity in sampling random surjections is captured by sampling a much smaller structure which we call the \emph{profile} of the surjection. More precisely, the profile associates to each occurring preimage size $s$ the number of preimages of size $s$. Using standard techniques, we show that the problem of sampling surjections reduces to sampling the profile with the induced distribution. 
This is partly explained by the fact that profiles are always sublinear, with at most $\sqrt{2n}$ entries in the worst case.

We provide a complete set of algorithms to directly sample the \emph{profile} of a random surjection with the induced distribution, covering the full parameter space. These algorithms are shown to be optimal up to logarithmic factors in the expected size of the output. Our algorithms are based on exact-size Boltzmann samplers for the profiles, which are standard rejection-based samplers. We partition the parameter space into three main regions. In each region, we optimize both the rejection rate and the cost of each sampling round.

Profiles capture a number of relevant statistics of random surjections and might be of independent interest. In a related context, profiles have been recently studied by Devroye et al. for random mappings. As a spin-off result, we answer an open question from \cite{devroyelos25} by providing an optimal algorithm also for the profiles of a random mapping when $k > n/\log n$.

The results of this article are not only of theoretical interest but lead to samplers implementable in practice, allowing one to sample profiles for values of $n$ and $k$ way beyond the sizes of surjections that can fit in memory.

\end{abstract}

\clearpage

\section{Introduction}

 This work focuses on the problem of sampling a surjection uniformly at random  from  $[n] = \{1,\ldots, n\}$ to $[k]=\{1,\ldots,k\}$ for $n \geq k \geq 2$ without making any assumptions on the asymptotic relationship between $n$ and $k$.

 Surjections from $[n]$  to  $[k]$ are a fundamental object in combinatorics and arise both explicitly and as building blocks in many constructions. For instance, they encode words of length $n$ over an alphabet of size $k$ in which every letter appears at least once. Up to a permutation of $[k]$, surjections correspond to set partitions of $[n]$ into $k$ nonempty parts.  More surprisingly, deterministic and complete finite-state automata in which all states are reachable from the initial state (accessible DFAs) can be efficiently sampled using random surjections (see \cite{bassino-nicaud07,nicaud-random-automata-mfcs2015}). Indeed, such DFAs with $n$ states over an alphabet of size $\sigma$ are in bijection with a subset of surjections from $[\sigma n + 1]$ to $[n]$. Since this subset represents an asymptotically constant fraction of all such surjections, accessible DFAs can be sampled by a rejection sampler with an expected constant number of calls to the surjection sampler.

\vspace{-3mm}
\subsection{State of the art}
\vspace{-1mm}

The number $|\mathcal{S}(n,k)|$ of surjections from $[n]$ to $[k]$ is  $k! \stirlingsecond{n}{k}$, where $\stirlingsecond{n}{k}$ denotes the Stirling number of the second kind, which counts the number of set partitions of $[n]$ into $k$ parts.

 \reviewer{This bound for the entropy is proven in Section~\ref{sec:appendix-binary-entropy-surjections}.}The Shannon entropy of the uniform distribution on $\mathcal{S}(n,k)$ is simply $\log_2(|\mathcal{S}(n,k)|)$, which is $(n \log_2 k)\times\left(1+O(\tfrac{\log\log n}{\log n})\right)$, uniformly in $n$ and $k$. As established in \cite{KnuthY1976}, any uniform random sampler for $\mathcal{S}(n,k)$ must, therefore, use this amount of fair random bits on average. In particular, in models where generating a fair random bit has constant cost, a sampling algorithm in $O(n\log k)$ is optimal (up to the multiplicative constant).

Exact-size Boltzmann samplers are the main general technique in the literature for uniform sampling of surjections. This technique is applicable for all $n \geq k$. These samplers are rejection-based: they generate candidate objects at each sampling round and repeat until the target size is obtained. Their complexity has been analyzed precisely under the assumption that $k \sim \alpha n$ for some fixed $\alpha \in (0,1)$ in \cite{bassino-nicaud07}. In this regime, their expected complexity is $O(n^{3/2})$ as they perform $O(\sqrt{n})$ sampling rounds on average, each costing $O(n)$ samples from simple distributions such as Geometric and Bernoulli.

Still under this linearity assumption, Bassino and Sportiello~\cite{BassinoSportiello2018LineartimeES} obtained a sampling algorithm with an expected complexity of $O(n)$ samples from standard random variables\footnote{Here the notion of complexity does not count the number of fair random bits used but rather the number of sampling from standard random variables. As each of these sampling may require the production of a logarithmic number of fair random bits, there is no contradiction with the entropy lower bound.}. Rather than generating a surjection directly, they reduce the problem to sampling a substructure, which they call the {\em backbone} of the surjection. They manage to lower the rejection rate of the associated Boltzmann sampler to make it constant. More precisely, the expected number of sampling rounds is constant but depends on the ratio $\alpha = k/n$. In particular, as $\alpha$ tends to $0$, the constant is $\sim \sqrt{\nicefrac{2}{e}\,\alpha}\exp(1/(2\alpha))$, which tends to infinity exponentially in $\tfrac{1}{\alpha}$ (see the end of Section~4.1 in \cite{BassinoSportiello2018LineartimeES}). In our experiments, this algorithm is not usable in practice to sample a surjection from $[n]$ to $[\frac{n}{20}]$ for $n$ in the order of $10^6$.

\subsection{Our framework}

In this article we assume that operations on real numbers can be performed in $O(1)$ with arbitrary precision. We include standard function like addition, exponential but also the special functions such as the log-gamma function, the regularized incomplete gamma function and the incomplete beta function. This kind of model is standard and referred to as an \emph{extended}  Real Arithmetic Model by some authors (e.g.,~\cite{devroyelos25}). The inclusion of the special functions in our model is justified from a practical point of view by the existence of efficient approximation schemes \cite{Temme_1994} (see discussion in Section~\ref{sec:conclusion}). Under such a model, sampling simple distributions such as Bernoulli, Poisson, Binomial can be done in $O(1)$ expected time, see~\cite{devroye1986non}. Unless stated otherwise all complexity expressed in this article are given in this model. The assumptions of the extended RAM model simplify the analysis at the cost of hiding an at most poly-logarithmic factor in the number of fair bits. As a consequence, we will only aim for optimal complexity up to logarithmic factors. We will leave for further-work the analysis in the standard computational model.

As we make no assumption on the asymptotic relations between $n$ and $k$, we write $g(n,k) = O(f(n,k))$ to denote the existence of a constant $C$, uniform in $n$ and $k$, such that
$g(n,k) \leq C f(n,k)$. The $\Theta$ notation is defined similarly adding the lower-bound. We write  $g(n,k) = \tilde{O}(f(n,k))$ to denote the existence of a fixed poly-logarithmic function $p(n,k)$ such that
$g(n,k) \leq p(n,k) \cdot f(n,k)$. The notation $\tilde{\Theta}$ is defined similarly.

\subsection{Size-vectors and profiles of surjections}

We introduce two structures underlying surjections that will play a key role in this article.
The first one, the \emph{size-vector} is a vector $(s_1,\ldots,s_k)$ which contains at position $i$ the size $s_i$ if the preimage of $i$ in the surjection. Note that $\sum_i s_i = n$.  The \emph{profile} is an even more succinct
structure that for each size $s$ in the size-vector, contains a single pair $(s,n_s)$ where $n_s$ is the number of occurrences of $s$ in the size-vector. The length $\ell$ of the profile is the number of pairs in the profile.  Remark that a profile $(s_1,n_1),\ldots,(s_\ell,n_\ell)$  of a surjection from $[n]$ to $[k]$ satisfies: $\sum_i s_i=k$ and $\sum_i s_i n_i =n$. By a slight abuse of terminology, we extend the notion of profile to any vector and not just the size-vector.

A surjection with a given size-vector $(c_1,\ldots,c_k)$ can be sampled uniformly in $O(n \log n)$ by constructing a array containing $c_i$ occurrences of $i$ for each $i \in [1,k]$ and shuffling it.
To sample uniformly a surjection with a given profile, we can first sample its size-vector in $O(k \log k)$ by shuffling an array of size $k$ containing the sizes prescribed by the profile and then sample the surjection from the size-vector. The resulting algorithm has a complexity in $O(n \log n)$. \reviewer{See Section~\ref{sec:sampling-surjection-from-the-profile} for the precise algorithm.} By adapting the algorithm presented in \cite{WongE1980}, we sample a surjection with a given profile in $O(n \log k)$ fair bits in the extended RAM model which matches the entropy lower bound.

\subsection{Length of the profiles of random surjections}
\reviewer{Detailed proofs of the claims of this section can be found in Section~\ref{sec:length-profile-surjection}}
One of the main reasons for considering profiles of surjections is that they are much smaller than the surjection themselves, as shown in the following lemma, which gives a bound for the worst case length.

\begin{restatable}{lemma}{lengthprofilesurjection}
    \label{prop:length-profile-surjection}
  For a surjection from $[n]$ to $[k]$, the length of the profile is at most $\sqrt{2n}$.
\end{restatable}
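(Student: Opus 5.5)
The plan is to use only the two constraints that any profile $(s_1,n_1),\ldots,(s_\ell,n_\ell)$ of a surjection satisfies: the sizes $s_i$ are pairwise distinct positive integers, and $\sum_i s_i n_i = n$ with every $n_i \geq 1$. (The paper writes $\sum_i s_i = k$, but the intended identity is $\sum_i n_i = k$, since the $n_i$ count the elements of $[k]$.) Only the second identity is needed here.

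First, I order the sizes increasingly, $s_1 < s_2 < \cdots < s_\ell$. Since every element of $[k]$ has a nonempty preimage, $s_1 \geq 1$, and because the sizes are distinct integers, induction gives $s_i \geq i$ for all $i$.

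Second, I use $n_i \geq 1$ to bound $n$ from below:
\[
n \;=\; \sum_{i=1}^{\ell} s_i n_i \;\geq\; \sum_{i=1}^{\ell} s_i \;\geq\; \sum_{i=1}^{\ell} i \;=\; \frac{\ell(\ell+1)}{2} \;\geq\; \frac{\ell^2}{2}.
\]
This gives $\ell^2 \leq 2n$, hence $\ell \leq \sqrt{2n}$.

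No step here is a real obstacle. The only point needing care is that the profile lists each distinct preimage size exactly once with multiplicity $n_i \geq 1$, which is exactly what makes the $s_i$ distinct and lets us drop the multiplicities. One could sharpen the bound to $\ell \leq (\sqrt{8n+1}-1)/2$, but the stated bound $\sqrt{2n}$ is all that is claimed.
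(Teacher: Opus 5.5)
Your proof is correct and is essentially the paper's argument: sort the distinct preimage sizes, use $s_i \geq i$ and multiplicities at least one to get $n \geq \ell(\ell+1)/2 \geq \ell^2/2$. You are also right that the paper's identity $\sum_i s_i = k$ is a typo for $\sum_i n_i = k$, which this proof does not use.
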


The expected length of profiles is even smaller. In particular, it is always $\tilde{O}(n^{1/3})$.
Our detailed analysis of the expected length reveals three regions of the parameter space as shown in Figure~\ref{fig:histograms}.
Intuitively, when $k$ is smaller than $n^{1/3}$, all preimages have different sizes as can be observed in the histogram. When $n^{1/3} \leq k \leq  \frac{n}{\log n}$,
the sizes of the preimages are concentrated around $\frac{n}{k}$ in a window of expected size $\tilde{\Theta}(\sqrt{\nicefrac{n}{k}})$. Finally, when $k \geq \frac{n}{\log n}$, the expected length is at most 
$4 \log n$.

Our algorithms take advantage of these three regimes\footnote{These three regions form an exact partition of the parameter space, this involves no asymptotics.} for the behavior of the length of the profiles, which we illustrate here:

    \begin{figure}[h]
    \centering
\noindent

    \begin{tikzpicture}[scale=1, anchor=mid]
            \draw[->,thick] (0,0) --  (10.5,0);
              	\draw (0,0.1) -- (0,-0.1);

        \node at (0,-0.5) {$1$};
        \node at (10.5,-0.5) {$n$};

        \fill[fill=blue!20] (0.05,0) rectangle (2.9,-0.1);

        \node at (10.8,.0) {$k$};

        \node at (1.5,.3) {$\Theta(k)$};
        \node at (1.5,-.4) {\small mostly distinct };

    	\node at (2.9,-.5) {\tiny \color{blue}$\sqrt[3]{n}$};

    	\draw[color=blue] (3,0.1) -- (3,-0.1);

        \fill[fill=blue!40] (3.1,0) rectangle (7.4,-0.1);

        \node at (5.25,.3) {$\tilde{\Theta}(\sqrt{\tfrac{n}{k}})$};
        \node[align=center] at (5.25,-.5) {\small whole window};

    	\draw (7.5,0.1) -- (7.5,-0.1);

    	\node at (7.5,-.5) {$\tfrac{n}{\log n}$};

        \node at (9.2,.3) {$\tilde{\Theta}(1)$};
        \node[align=center] at (9.2,-.8) {\small very few \\ values};
                \fill[fill=green!20] (7.6,0) rectangle (10.3,-0.1);

    \end{tikzpicture}


\begin{tikzpicture}[scale=0.5]
\begin{axis}[
    ybar interval,
    xlabel={Value},
    ylabel={Frequency},
        xtick={
        251170000,
        251180000,
        251190000,
        251200000,
        251210000
    },
    tick label style={
        /pgf/number format/fixed,
        /pgf/number format/precision=4
    },
    clip=false,
]
\addplot[fill=blue,
    fill opacity=0.2,
    draw opacity=0.6] coordinates {
(251169145.0, 88)
(251171891.25, 133)
(251174637.5, 171)
(251177383.75, 165)
(251180130.0, 226)
(251182876.25, 233)
(251185622.5, 250)
(251188368.75, 256)
(251191115.0, 299)
(251193861.25, 245)
(251196607.5, 276)
(251199353.75, 225)
(251202100.0, 202)
(251204846.25, 210)
(251207592.5, 155)
(251210338.75, 146)
(251213085.0, 124)
(251215831.25, 83)
};

\draw[thick,<->,color=red]
    (axis cs:251169145,270) -- (axis cs:251215831,270)
    node[midway, above] {\large $\approx 4.7\times 10^4$};
\end{axis}

\end{tikzpicture}
\begin{tikzpicture}[scale=0.5]
\begin{axis}[
    ybar interval,
    xlabel={Value},
    ylabel={Frequency},
    xtick={
        995000,
        998000,
        1000000,
        1002000,
        1004000
    },
    tick label style={
        /pgf/number format/fixed,
        /pgf/number format/precision=3
    },
]
\addplot[fill=blue,
    fill opacity=0.4,
    draw opacity=0.6] coordinates {
(995053.0, 3)
(995306.075, 4)
(995559.15, 7)
(995812.225, 23)
(996065.3, 76)
(996318.375, 185)
(996571.45, 419)
(996824.525, 991)
(997077.6, 2017)
(997330.675, 4096)
(997583.75, 7335)
(997836.825, 12857)
(998089.9, 20684)
(998342.975, 31796)
(998596.05, 44570)
(998849.125, 59624)
(999102.2, 74500)
(999355.275, 88276)
(999608.35, 97481)
(999861.425, 100865)
(1000114.5, 97431)
(1000367.575, 89155)
(1000620.65, 75535)
(1000873.725, 61219)
(1001126.8, 46669)
(1001379.875, 32497)
(1001632.95, 21954)
(1001886.025, 13425)
(1002139.1, 7813)
(1002392.175, 4343)
(1002645.25, 2219)
(1002898.325, 1119)
(1003151.4, 471)
(1003404.475, 217)
(1003657.55, 73)
(1003910.625, 25)
(1004163.7, 16)
(1004416.775, 7)
(1004669.85, 1)
(1004922.925, 1)
};
\end{axis}
\end{tikzpicture}
\begin{tikzpicture}[scale=0.5]
\begin{axis}[
    ybar interval,
    xlabel={Value},
    ylabel={Frequency},
]
\addplot[fill=green,
    fill opacity=0.2,
    draw opacity=0.6] coordinates {
(1, 40882256134781165)
(2, 65351495325307011)
(3, 69644198715747328)
(4, 55664155504689232)
(5, 35592319764731766)
(6, 18965125108196932)
(7, 8661803636542936)
(8, 3461537419020161)
(9, 1229637811754805)
(10, 393122466351600)
(11, 114257796196740)
(12, 30440742420775)
(13, 7486221382253)
(14, 1709563383828)
};
\end{axis}
\end{tikzpicture}
\hfill
\caption{Sample histograms corresponding to a single random profile for each of the regimes. From left to right: $n=10^{12},k=\lfloor n^{0.2}\rfloor=3981$; $n=10^{12},k=n^{1/2}=10^6$; $n=10^{18},k=0.3\times n = 3\times 10^{17}$. \label{fig:histograms}}
\end{figure}

\vspace{-5mm}
\paragraph*{Algorithms optimal in the length of the output}
Our goal in this article is to provide samplers for random profiles of surjections\footnote{Here, we consider the distribution on profiles induced by sampling a random surjection and computing its profile.} \textbf{whose expected complexity matches the expected length of the profiles (up to logarithmic factors)}. As any sampler must write its output, these algorithms are optimal in this sense. These samplers for the profile can be extended to samplers for surjections whose complexity matches the entropy lower bound. However, we believe that these samplers and the profiles of surjections are interesting for their own sake. In particular, due to their small size.

\paragraph*{Two main regions: random mappings and surjections}
Our work starts with the well known fact that, when  $k \leq \frac{n}{\log n}$, random mappings from $[n]$ to $[k]$ are in fact surjections  with very high probability\footnote{As mention in Remark~\ref{rem:tight-coupon-collector}, the bound $\frac{n}{\log n}$ is tight}.
\reviewer{see proof in Section~\ref{sec:coupon-collector}}

\begin{restatable}[Coupon Collector]{lemma}{lemmacouponcollector}
\label{lemma:coupon-collector}
Let $p(n,k)$ be the probability of a random mapping from $[n]$ to $[k]$ being a surjection. For $k\leq \tfrac{n}{\log n}$ we have $p(n,k) \geq 1 - \tfrac{1}{\log n}$.
\end{restatable}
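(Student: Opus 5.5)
The plan is a first-moment (union) bound on the number of missed values. For a uniformly random mapping $f:[n]\to[k]$, a fixed $j\in[k]$ has empty preimage with probability $(1-\tfrac1k)^n$. Let $X$ be the number of $j$ with empty preimage. Then
\[
1-p(n,k)=\Pr[X\geq 1]\leq \E[X]=k\left(1-\tfrac1k\right)^n\leq k\,e^{-n/k}.
\]
It therefore suffices to show that $k e^{-n/k}\leq \tfrac1{\log n}$ whenever $2\le k\leq \tfrac{n}{\log n}$.

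Write $g(k)=k e^{-n/k}$. Its logarithm $\log k - n/k$ is increasing in $k$, so $g$ is maximized at the largest admissible value $k_0=\tfrac{n}{\log n}$. There
\[
g(k_0)=\frac{n}{\log n}\,e^{-\log n}=\frac1{\log n},
\]
which is exactly the required bound. Hence $p(n,k)\geq 1-\tfrac1{\log n}$ for all $k\leq \tfrac{n}{\log n}$. Since $k$ is an integer at most $n/\log n$, we can use the real bound $n/\log n$ directly and no rounding issue arises.

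The only steps needing care are the two inequalities used above: the bound $(1-1/k)^n\leq e^{-n/k}$ and the monotonicity of $g$. Both are elementary. I would also note small $n$. If $\log n\le 1$, the claimed bound $1-\tfrac1{\log n}$ is nonpositive and the statement is trivial. If $n/\log n<2$, there is no admissible $k\ge2$ and nothing needs to be proved. I expect no real obstacle here; the main point is noticing that the union bound is exactly tight at the threshold $k=n/\log n$, which also explains why the paper calls this threshold tight.
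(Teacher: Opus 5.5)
Your proof is correct and matches the paper's argument: the union bound $1-p(n,k)\le k(1-1/k)^n\le k e^{-n/k}$, with your monotonicity of $k e^{-n/k}$ in $k$ spelling out the final step the paper leaves implicit. One caveat on your closing aside: the union bound equalling $1/\log n$ at $k=n/\log n$ does not by itself show the threshold is tight; the paper's tightness remark relies on Erd\H{o}s--R\'enyi asymptotics showing $p(n,k)\to 0$ just above $n/\log n$.
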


When $k \leq \frac{n}{\log n}$, a surjection from $[n]$ to $[k]$ can be sampled by rejection from random mappings. In particular, the problem of sampling profiles of surjections reduces to sampling profiles of random mappings.
Following this insight, we present in Section~\ref{sec:profiles-random-mappings} a sampler for profiles of random mappings
that is optimal up to logarithmic factors for all $n$ and $k$. In Section~\ref{sec:profiles-random-surjections}, we tackle the problem of sampling profiles of surjections when $k \geq \frac{n}{\log n}$.

\subsection{Sampling profiles of random mappings}

A first observation for sampling the profile of a random mapping from $[n]$ to $[k]$ is that we may concentrate on the case $n^{1/3} \leq k$.
Indeed \textbf{when $k$ is small (i.e., $k \leq n^{1/3})$}, we show that the profiles of random mappings have an expected size in $\Theta(k)$. In that case, we may simply sample the size-vector by sampling $k-1$ binomial using the Multinomial Method ~\cite{devroye1988multinomial,devroye2012gw} and compute its profile $O(k \log k)$. In this region, this is an optimal algorithm with complexity $\tilde{O}(k)$ (see Section~\ref{sec:profile-mapping-k-small}).

\textbf{When $n^{1/3} \leq k$,} our starting point is the exact-size Boltzmann sampler for the size-vector of a random mapping. It consists in sampling $k$ iid Poisson variables of mean $\dfrac{n}{k}$ corresponding to the different preimage sizes rejecting until their sum is equal to $n$.

Each sampling round costs $O(k)$ in our model. The probability of accepting a sample is the probability that the $k$ Poisson variables of parameter $\frac{n}{k}$ sum to $n$. Due to additive property of the Poisson distribution, this probability is equal $p_a = \Pr(\Poiss{n}=n) \sim \frac{1}{\sqrt{2 \pi n}}$. This gives an expected total of $\Theta(k \sqrt{n})$ samples from Poisson random variables.

Since we are interested in sampling the profile rather than the size-vector, we can improve the cost of each sampling round by sampling directly the profile of the vector containing the values of the $k$ Poisson variables. For this, we use the concentration of the $k$ Poisson variables around $\frac{n}{k}$ to adapt the Multinomial Method. This reduces the cost of each sampling round to an expected $\tilde{\Theta}(\sqrt{\frac{n}{k}})$.

Alone this improvement is not enough to attain the optimal complexity; there is still a big number of rejections. We carefully combine this optimization with an early rejection strategy for Boltzmann samplers that increases the acceptance probability to a constant.
This idea is not new. In fact, the idea to use a form of early rejection is already present in \cite{BassinoSportiello2018LineartimeES} and can be traced back to the seminal work of von Neumann~\cite{VonNeumann1951}. DeSalvo and Arratia~\cite{ARRATIA_DeSALVO_2016,DESALVO201765},
have shown that  several families of exact-size Boltzmann samplers can benefit from early rejection. In particular, they provide a detailed  analysis for integer partitions (from Number Theory). Our contributions are two-fold. First, we show that in our setting the early rejection probabilities can be effectively and efficiently computed which is crucial to obtain an efficient sampler. Moreover, we perform a detailed complexity analysis that is uniform in the two parameters $n$ and $k$, covering the whole space.

Combining both ideas, we obtain a recursive algorithm which performs, overall, an expected $O(\log k)$ sampling rounds  yielding a total complexity of $\tilde{O}(\sqrt{\frac{n}{k}})$.

\subsection{Sampling profiles of surjections with small preimages ($\frac{n}{\log n} \leq k$)}

In this region, unlike the previous ones, a random mapping is unlikely to be a surjection. Our starting point is the exact-size Boltzmann sampler for the size-vector of a random surjection. This sampler draws $k$ iid Poisson variables constrained to be non-zero with parameter $\omega$ until their sum equals $n$. The parameter $\omega$ can be effectively computed from $n$ and $k$. 

As in the case of random mappings, we apply the Multinomial Method to directly sample the
profile of the size-vector. As all entries in the size-vector are expected to be at most $4 \log n$, see Corollary~\ref{maximum-high-part}, a sampling round can be performed in $\tilde{O}(1)$.

The next step is to increase the acceptance probability of the sampler. We adopt a different early rejection strategy that takes advantage of the fact (evident on the histogram) that the profile of a random surjection in this regime is concentrated on two consecutive values which our algorithm must be able to locate. In order, to apply this method, deterministic tight bounds for the early rejection probabilities are needed. We succeed in lowering  the  expected number of sampling to $O(\log n)$. The overall complexity is $\tilde{O}(1)$ (see Section~\ref{sec:profiles-random-surjections}).

\subsection{Related work}
\label{sec:related-work}

Devroye and Los~\cite{devroyelos25} studied the efficient sampling of profiles of random mappings\footnote{In fact, Devroye and Los sample a vector which gives the number of preimages with every size from $0$ to the maximum size of a preimage. This vector contains the same information as the profile but can be larger due to the presence of the zero entries.} in the region $k \leq  n \leq k \log k$. This study is motivated by the efficient simulation of load-balancing algorithms. In this region, which is not needed for sampling surjections profiles, Devroye and Los~\cite{devroyelos25} give an algorithm in
$O\!\left(\nicefrac{\log k}{\log\left(\frac{4k\log k}{n}\right)}\right)$. Their algorithm matches the known lower-bound for the expected size of the largest preimage whereas our algorithm performs in $O((\log k)^2)$ in this region. They leave as open the question of finding an efficient sampling algorithm when $n > k \log k$. More precisely, they ask for an algorithm in $O(\sqrt{\tfrac{n}{k}\log k})$ in this region. In the region $\frac{n}{\log n} >k  > n^{1/3}$,  our main algorithm matches this complexity up to a $\log$-factor. While on the region $k \leq n^{1/3}$, our corresponding algorithm outperforms it.

As mentioned at the beginning of the introduction, our algorithm can be used to sample set partitions of $[n]$ into $k$ parts. To our knowledge, the sampling of set partitions has only been studied in a model where the number parts is not fixed. The partitions are drawn uniformly at random among all possible partitions of $[n]$, see~\cite{bodinidurand25,DESALVO201765}.  

\section{Profiles of random mappings from $[n]$ to $[k]$}
\label{sec:profiles-random-mappings}


\newcommand{\Binomial}{{\tt Bin}}
\newcommand{\length}[2]{L_{#1,#2}}

In this section, we present a sampler for profiles of random mappings whose expected complexity matches the expected length of the profile (up to logarithmic factors). First, we characterize the expected length of the profile for the full range of parameters.

\begin{restatable}{proposition}{proplengthprofilesrandommapping}
    \label{prop:length-profiles-random}
The length $\length{n}{k}$ of the profile of  a random mapping from $[n]$ to $[k]$ satisfies:
\begin{itemize}
\item For $k \leq n^{1/3}$, $\mathbb{E}[\length{n}{k}] = \Theta(k)$; more precisely, $c_1 k \leq \mathbb{E}[\length{n}{k}] \leq k$;
\item For $n^{1/3} \leq k \leq n/\log n$, $\mathbb{E}[\length{n}{k}] = \tilde{\Theta}(\sqrt{\nicefrac{n}{k}})$; more precisely, $c_2 \sqrt{\nicefrac{n}{k}} \leq \mathbb{E}[\length{n}{k}] \leq c_3 \sqrt{\nicefrac{n}{k} \log k}$;
\item For $n / \log n \leq k$, $\mathbb{E}[\length{n}{k}] = \tilde{\Theta}(1)$; more precisely, $\mathbb{E}[\length{n}{k}] \leq c_4 \log k$;
\end{itemize}
where $c_1, \ldots, c_4$ are positive constants independent of $n$ and $k$.
\end{restatable}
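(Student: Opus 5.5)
We write the size-vector of a uniform random mapping as $(X_1,\ldots,X_k)$. It is multinomial with $n$ trials and uniform cell probabilities, so each $X_i\sim\Binomial(n,1/k)$. The length $\length{n}{k}$ is the number of distinct values among the $X_i$. The basic identity is
\[
\mathbb{E}[\length{n}{k}]=\sum_{s\ge 0}\Pr(\exists i,\ X_i=s)\le \sum_{s\ge0}\min\bigl(1,\,k\,p_s\bigr),
\]
where $p_s=\Pr(\Binomial(n,1/k)=s)$. All upper bounds will come from this union bound together with a window argument. The trivial bound $\length{n}{k}\le k$ gives the upper bound in the first regime directly.

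\textbf{Upper bounds.} For $n^{1/3}\le k\le n/\log n$, put $\mu=n/k\ge\log n$. We split the sum at $|s-\mu|\le C\sqrt{\mu\log k}$. Inside the window we bound each term by $1$, which contributes $O(\sqrt{\mu\log k})$. Outside the window, a Chernoff/Bernstein bound gives $\Pr(|X_1-\mu|\ge t)\le 2\exp(-t^2/(3\mu))$ for $t\le\mu$. For larger $t$ we use the multiplicative form $\exp(-t\log(t/\mu)/\cdot)$. Hence $k\Pr(|X_1-\mu|\ge C\sqrt{\mu\log k})\le k^{-1}$ for large $C$. To sum the tail terms we note $\sum_{|s-\mu|>t}k p_s\le k\Pr(|X_1-\mu|>t)$, which is $O(1)$. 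This gives $c_3\sqrt{(n/k)\log k}$.

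For $k\ge n/\log n$ we have $\mu\le\log n$, and $\log n = O(\log k)$ since $k\ge n/\log n$. Values $s\le e^2\log n$ contribute at most that many terms. For $s\ge e^2\log n$ we use $k p_s\le k(e\mu/s)^s\le k e^{-s}$, and these terms sum to $O(1)$ because $k\le n$. This yields $c_4\log k$.

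\textbf{Lower bounds.} For the middle regime we aim to show that every $s$ in a window $|s-\mu|\le c\sqrt{\mu}$ is hit with probability bounded below. By the local limit theorem, uniformly in this range of parameters, $p_s\ge c'/\sqrt{\mu}$ there. Then $\mathbb{E}[\#\{i:X_i=s\}]=kp_s\ge c'k/\sqrt{\mu}\ge c'$, since $k\ge n^{1/3}$ gives $k^{3/2}\ge\sqrt n$, i.e.\ $k\ge\sqrt{\mu}$. To turn this expectation into a probability we use negative association of multinomial counts. The indicators $\mathbf 1[X_i=s]$ are negatively correlated, so the second-moment bound $\Pr(N_s>0)\ge \mathbb{E}[N_s]^2/\mathbb{E}[N_s^2]\ge \mathbb{E}N_s/(1+\mathbb{E}N_s)$ holds. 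Alternatively, negative association gives $\Pr(N_s=0)\le (1-p_s)^k\le e^{-kp_s}$ directly. Either way each such $s$ contributes a constant, so the total is $\ge c_2\sqrt{n/k}$.

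For $k\le n^{1/3}$ we need to show that most $X_i$ are distinct. We have $\mathbb{E}[\length{n}{k}]\ge k-\mathbb{E}[\#\{i<j: X_i=X_j\}]$. The collision probability is $\Pr(X_1=X_2)\le\max_s\Pr(X_2=s)=O(1/\sqrt{\mu})$, computed conditionally on $X_1$ with $X_2\mid X_1\sim\Binomial(n-X_1,1/(k-1))$. So the expected number of collisions is $O(k^2\sqrt{k/n})=O(k^{5/2}/\sqrt n)$. This is at most $k/2$ when $k^{3/2}\le c\sqrt n$, i.e.\ $k\le c' n^{1/3}$. For $c'n^{1/3}\le k\le n^{1/3}$ the middle-regime argument applies and gives $\Omega(\sqrt{n/k})=\Omega(k)$, so $c_1$ absorbs the constants.

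\textbf{Main obstacle.} The delicate step is the uniform local bound $p_s\ge c/\sqrt{\mu}$ near the mean, together with $\max_s p_s\le C/\sqrt{\mu}$, for binomials with $\mu$ possibly small. These bounds must hold uniformly in $n$ and $k$. I would prove them via Stirling's formula applied to the ratio $p_{s+1}/p_s$, or compare with Poisson($\mu$) when $k$ is large. In the lower-bound regime we have $\mu\ge\log n\to\infty$, so only large $\mu$ is needed there; small cases can be absorbed into constants.
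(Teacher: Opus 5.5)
Your route is genuinely different from the paper's. The paper Poissonizes. It bounds the profile length of $k$ iid $\Poiss{n/k}$ variables, using the window bound of Proposition~\ref{prop:prop-window-2} for the upper bounds, and Chebyshev concentration for $k\le\sqrt{\lambda}$ plus monotonicity in $k$ (Theorems~\ref{thm:lower-bound-whp-small-k} and~\ref{thm:optimal-length-profile}) for the lower bounds. It then transfers these to the multinomial: via the external factor-$4$ comparison of Theorem~\ref{th:transfert-upper} for the upper bounds, and via the half-vector comparison of Proposition~\ref{prop:hp} (resting on Proposition~\ref{prop:up-scaled-proba-mappings}) for the lower bounds.

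You work directly with the marginal $\Binomial(n,1/k)$ and the pairwise joint law. This needs no transfer results and no concentration, only expectations; the price is uniform local estimates for binomials. Your upper bounds and your collision argument for $k\le c'n^{1/3}$ are sound, up to two small details. First, the conditional law $\Binomial(n-X_1,\frac{1}{k-1})$ degenerates for $k=2$, but there $\Pr(X_1=X_2)=\Pr(X_1=n/2)=O(n^{-1/2})$ directly. Second, you use $k\le n$ in the third regime although the statement allows $k>n$; cutting the sum at $e^2\log k$ fixes this.

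The step that fails as written is the middle-regime lower bound. The indicators $\mathbf{1}[X_i=s]$ are not monotone in $X_i$, so negative association of multinomial counts, which concerns monotone functions of disjoint coordinate sets, says nothing about them. In fact they are positively correlated near the mean. Given $X_1=\mu$, the variable $X_2\sim\Binomial(n-\mu,\frac{1}{k-1})$ has the same mean $\mu$ but smaller variance, hence more mass at $\mu$. Two concrete counterexamples:
\begin{itemize}
\item For $n=6$, $k=3$, $s=2$: $\Pr(X_1=X_2=2)=\frac{90}{729}>\bigl(\frac{240}{729}\bigr)^2=p_2^2$.
\item For $n=k=2$, $s=1$: $\Pr(N_1=0)=\frac12>(1-p_1)^2=\frac14$.
\end{itemize}
So neither the claimed negative correlation nor the bound $\Pr(N_s=0)\le(1-p_s)^k$ holds.

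The fix is available with your own tools. The second-moment bound only needs $\Pr(X_1=s,X_2=s)\le Cp_s^2$ on the window. This follows from $\Pr(X_2=s\mid X_1=s)\le\max_t\Pr(\Binomial(n-s,\frac{1}{k-1})=t)=O(\mu^{-1/2})\le Cp_s$. It then gives $\Pr(N_s>0)\ge kp_s/(1+Ckp_s)$, still bounded below by a constant.

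Simpler still, and closer to the paper's monotonicity argument: $\length{n}{k}$ is at least the number of distinct values among $X_1,\ldots,X_m$ for any $m\le k$. Applying your collision bound with $m=\min(k,\lfloor\sqrt{\mu}/C\rfloor)$ gives $\mathbb{E}[\length{n}{k}]\ge m/2=\Omega(\min(k,\sqrt{n/k}))$. This covers both lower-bound regimes at once and removes the need for the lower local estimate $p_s\ge c'/\sqrt{\mu}$.
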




\subsection{Small values of $k$ : the region $k \leq n^{1/3}$}
\label{sec:profile-mapping-k-small}

In this region of the parameter space, the expected length of the profile is $\Theta(k)$. This means that we can afford to sample the full size-vector of length $k$ to sample the profile. Recall that the random size vector $(Y_1,\ldots,Y_k)$  follows the multinomial distribution ${\tt Multi}(n;\nicefrac 1 k,\ldots,\nicefrac 1 k)$. 

Using the Multinomial Method, this distribution can be sampled sequentially by drawing $k-1$ binomial variables as follows. First draw $Y_1$ as a binomial $n_1\gets{\Binomial}(n,\nicefrac{1}{k})$.  Then $Y_2$, conditioned on $Y_1=n_1$, is draw binomial $n_2\gets {\Binomial}(n-n_1,\nicefrac{1}{(k-1)})$ and so on, $n_i \gets {\Binomial}(n - \sum_{j<i}n_j, \nicefrac{1}{(k-i+1)})$ until $i=k-1$. Finally define $n_k=n - \sum_{j<k}n_j$. In the extended RAM model, each sampling of a binomial variable cost $O(1)$. The profile is computed in $O(k \log k)$ by sorting the size vector. This leads to an expected complexity $\tilde{O}(k)$. Thanks to Lemma~\ref{lemma:coupon-collector}, profiles of random surjections can be sampled by rejection in an expected constant number of tries. 

\begin{proposition}
    For $k \leq n^{1/3}$, the profiles of random mappings and random surjections from $[n]$ to $[k]$ with expected complexity ${O}(k)$.
\end{proposition}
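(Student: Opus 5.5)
The plan is to analyse the algorithm just described in two parts: correctness of the sampled distribution, then expected cost, first for random mappings and then for surjections via Lemma~\ref{lemma:coupon-collector}.

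\emph{Correctness for mappings.} The size-vector $(Y_1,\ldots,Y_k)$ of a uniform random mapping follows ${\tt Multi}(n;\nicefrac1k,\ldots,\nicefrac1k)$. Conditionally on $Y_1=n_1,\ldots,Y_{i-1}=n_{i-1}$, the remaining $n-\sum_{j<i}n_j$ points are uniformly spread over the $k-i+1$ remaining values. Hence $Y_i$ is ${\Binomial}(n-\sum_{j<i}n_j,\nicefrac1{k-i+1})$, and $Y_k$ is determined by the others. This justifies the sequential Multinomial Method. The profile is a deterministic function of the size-vector, so computing it from the sampled vector yields exactly the induced distribution.

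\emph{Cost for mappings.} The method draws $k-1$ binomials, each at $O(1)$ expected cost in the extended RAM model, so the size-vector costs $O(k)$ in expectation. The only delicate point is turning the vector into a profile in $O(k)$ rather than the $O(k\log k)$ given by sorting. I would insert the $k$ values into a hash table keyed by size, with counters, which takes $O(k)$ expected time, and then read off the pairs $(s,n_s)$. Since the profile is an unordered set of pairs, no sorting is needed. If an ordered output is required, I would fall back on $O(k\log k)=\tilde O(k)$, consistent with the optimality claim up to logarithmic factors. I expect this to be the main obstacle to the literal $O(k)$ bound; the rest is routine.

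\emph{Surjections.} I would sample as above and reject unless every $n_i\ge1$, which is checkable in $O(k)$ (equivalently, the profile has no pair with $s=0$). Conditioned on acceptance, the size-vector is that of a uniform surjection, because the uniform mapping conditioned to be surjective is uniform on $\mathcal S(n,k)$. So the accepted profile has the correct distribution.

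It remains to bound the expected number of rounds, which is $1/p(n,k)$ by the geometric distribution. Since $k\le n^{1/3}\le n/\log n$ once $n$ exceeds a fixed constant $n_0$, Lemma~\ref{lemma:coupon-collector} gives $p(n,k)\ge 1-\nicefrac1{\log n}\ge\nicefrac12$ for $n\ge \max(n_0,e^2)$. For the finitely many pairs $(n,k)$ with $n$ below this threshold and $k\le n$, we have $p(n,k)>0$, and its minimum over these pairs is a positive constant. Thus the expected number of rounds is $O(1)$ uniformly in $n$ and $k$. Multiplying by the $O(k)$ expected cost per round (rounds are independent, so Wald's identity applies) gives total expected complexity $O(k)$.
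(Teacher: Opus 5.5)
Your proof is correct and follows the paper's route: the Multinomial Method with $k-1$ binomial draws for the size-vector, then rejection from mappings to surjections with an expected constant number of rounds via Lemma~\ref{lemma:coupon-collector}. The one difference is how the profile is extracted. The paper sorts the size-vector in $O(k\log k)$, so its own argument only gives $\tilde O(k)$ (or $O(k)$ if one counts just the random-variate draws, as the paper does elsewhere). Your hash-table step gives the stated $O(k)$ expected bound, and you also handle the uniformity over small $n$ explicitly.
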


\subsection{Exact-size Boltzman sampler for random mappings}

The labeled combinatorial class of random mappings into $[k]$ can be specified as the sequence of the preimage sets of each element $1$ to $k$. In the framework of Flajolet's symbolic method, this translates into the following formal specification
${\tt Seq}_{=k}({\tt Set}(\mathcal{Z}))$. This specification directly yields an exact-size Boltzmann sampler (see~\cite{boltzmann-flajolet-duchon-et-al}) for uniform random mappings from $[n]$ to $[k]$. More importantly for our purposes, it induces a natural rejection-based procedure for sampling for the profile of random mappings.

\begin{algorithm}[H]
\DontPrintSemicolon
\caption{Naive Boltzmann sampler for profile of random mappings from $[n]$ to $[k]$.}
\label{algo:naive-boltzmann}
\Repeat{$\sum_{j=1}^k X_j = n$}{
  Sample $X_1,\ldots,X_k$ with distribution $\Poiss{\omega := \frac{n}{k}}$ independently\;
}
\Return The profile of $(X_1,\ldots,X_k)$\;
\end{algorithm}


    

If we omit the final computation of the profile which is performed in $O(k \log k)$, this algorithm is analyzed by resorting to Wald's equation~\cite{Blackwell}: the expected cost is $\frac{\mathbb{E}[S_{n,k}]}{a(n,k)}\,,$
where $S_{n,k}$ is the cost of a single sampling round and $a(n,k)$ the acceptance probability, i.e., the probability that $\sum_{j=1}^k X_j = n$ on a single sampling round. Due to the additivity of the Poisson distribution,  $a(n,k) = \Pr(\Poiss{n}=n) \sim \frac{1}{\sqrt{2 \pi n}}$. In the extended RAM model where Poisson variable can be sampled in $O(1)$, this leads to a total expected cost of $O(k \sqrt{n})$. 


\subsection{Efficient sampling of the profile of $k$ iid Poisson variables}
\label{sec:sampling-fast}

In this section, we show how to efficiently sample the profile of $k$ iid Poisson variables with a fixed parameter $\lambda$. 

The standard Multinomial Method consists in drawing 
the portion $m_0$ of the $k$ Poisson variable that have value $0$ by sampling a $\Binomial({n},{p_0})$ where $p_0$ is the probability $P(\Poiss{\lambda}=0)$, then we draw the portion $m_1$ of the $k-n_0$ Poisson variable that have value $1$ by sampling a $\Binomial({k-n_0},{p_1})$ where $p_1$ is the probability $P(\Poiss{\lambda}=1 | \Poiss{\lambda} \geq 1)$ and so on. We stop as soon as all the variables have been assigned a value (i.e., $m_0+\ldots+m_\ell = k$). In other terms, the expected number of binomial variables to sample is the expected value $\max(X_1,...,X_k)$.

  \begin{wrapfigure}{R}{0.5\textwidth}
    \begin{minipage}{0.5\textwidth}
      \begin{algorithm}[H] \scriptsize               
      \caption{The improved {\em SimulateProfile} procedure, which samples the profile of $(X_1,\ldots,X_k)$ iid ${\tt Poiss}(\lambda)$. We suppose we can effectively compute the probabilities involved $p$, $r_j$ and $\ell_j$. \label{fig:procedure-simulate-profile} } 

         {\bf function} SimulateProfile($k$,$\lambda$):\\
        profile := $\{\}$\;
        $p$ := $\Pr(\Poiss{\lambda}\geq \lfloor \lambda \rfloor)$\;
        $R \gets {\tt Bin}(k,p)$\;
        $L := k - R$\;
        $j$ := $\lfloor \lambda \rfloor$\;
        \While{$R>0$}{
             $r_j := \Pr(\Poiss{\lambda}=j \, |  \, \Poiss{\lambda}\geq j)$\;
            $m \leftarrow {\tt Bin}(R,r_j)$\;
             add $(j,m)$ to profile {\bf if} $m>0$\;
            $R := R - m$\;
            $j := j+1$\;
        }
        $j := \lfloor \lambda \rfloor - 1$\;
        \While{$L>0$}{
          $\ell_j := \Pr(\Poiss{\lambda}=j \, |  \, \Poiss{\lambda}\leq j)$\;
          $m \leftarrow {\tt Bin}(L,\ell_j)$\;
            add $(j,m)$ to profile {\bf if} $m>0$\;
            $L := L - m$\;
            $j := j-1$\;
        }
    \Return profile\;
      \end{algorithm}
    \end{minipage}
  \end{wrapfigure}
This approach is only efficient when $\lambda$ is small as in general the expected value of the maximum of the Poisson variables far exceeds the expected length of the profile. To lower the complexity, we take advantage of the fact that the $k$ Poisson variables are concentrated in a small window around $\lambda$. In Proposition~\ref{prop:prop-window-2}, we show that on average, all of the $k$ samples belong to a window, centered around the $\lambda$, of length $O(\sqrt{\lambda \log k} + \log k)$. Hence we do not start our exploration at the value $0$ but rather at $\tilde{\lambda}:=\lfloor \lambda \rfloor$, the mode of the distribution. First we decide, sampling a single binomial, how many variables are greater or equal than $\tilde{\lambda}$, and how many strictly smaller. Then we iterate on each side using the multinomial method. The resulting algorithm is given in Algorithm~\ref{fig:procedure-simulate-profile}.

As the iteration on the right-side (i.e. $X_j \geq \lfloor \lambda \rfloor$) employs exactly $\max(X_1,\ldots,X_k,{\tilde{\lambda}-1})-(\tilde{\lambda}{-1})$ binomial samples and the iteration on the left-side $\tilde{\lambda} - \min(X_1,\ldots,X_k,\tilde{\lambda})$ binomial samples. 
The random variable $W_{k,\lambda}$ counting the number of binomial samples in a run of Algorithm~\ref{fig:procedure-simulate-profile} is described in the following proposition.

\begin{restatable}{proposition}{propwindowmapping}
\label{prop:prop-window-2}
     Let $\lambda>0$ and $k\geq 2$. Consider $X_1,\ldots,X_k$ be iid random variables, distributed according to $\Poiss{\lambda}$.  Then the expected value of the random variable $W_{k,\lambda} :=2 + \max(X_1,\ldots,X_k,\lfloor{\lambda}\rfloor-1) - \min(X_1,\ldots,X_k,\lfloor{\lambda}\rfloor)$ satisfies $\E[W_{k,\lambda}] \leq  2 + 4 \log k$ if $\lambda \leq \log k$ and  $\E[W_{k,\lambda}] 
      \leq 3 + 5 \sqrt{\lambda \log k},$  \quad  if $\lambda > \log k$.


\end{restatable}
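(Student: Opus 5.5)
We bound the two one-sided excursions separately. With $\tilde\lambda=\lfloor\lambda\rfloor$, write
\[
W_{k,\lambda}=1+\bigl(\max(X_1,\ldots,X_k,\tilde\lambda-1)-(\tilde\lambda-1)\bigr)+\bigl(\tilde\lambda-\min(X_1,\ldots,X_k,\tilde\lambda)\bigr).
\]
Set $M^+=(\max_i X_i-\tilde\lambda)^+$ and $M^-=(\tilde\lambda-\min_i X_i)^+$. Then $W_{k,\lambda}=2+M^++M^-$, and it suffices to bound $\E[M^+]$ and $\E[M^-]$.

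For each we use the tail-sum formula together with a union bound:
\[
\E[M^+]=\sum_{t\geq 1}\Pr(M^+\geq t)\leq \sum_{t\geq1}\min\bigl(1,\,k\Pr(\Poiss{\lambda}\geq \tilde\lambda+t)\bigr).
\]
We split this sum at a threshold $t_0$: the first $t_0$ terms are bounded by $1$ each, and the remaining ones by $k$ times a Chernoff tail. The analogous treatment applies to $M^-$ using the lower tail. When $\lambda\leq\log k$ the lower side is trivial, since $M^-\leq\tilde\lambda\leq\log k$ deterministically.

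For the Poisson tails we use the standard Chernoff bounds
\[
\Pr(\Poiss{\lambda}\geq\lambda+x)\leq \exp\Bigl(-\frac{x^2}{2(\lambda+x/3)}\Bigr),\qquad
\Pr(\Poiss{\lambda}\leq\lambda-x)\leq\exp\Bigl(-\frac{x^2}{2\lambda}\Bigr).
\]
The floor shifts the argument by less than $1$, which we absorb into the additive constants.

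\textbf{Regime $\lambda>\log k$.} Take $t_0\approx 2\sqrt{\lambda\log k}$. Then $x\leq t_0\leq 2\lambda$ in the relevant range, so the denominator is at most $c\lambda$, and $k\,e^{-x^2/(c\lambda)}$ beyond $t_0$ sums geometrically (Gaussian-like decay at scale $\sqrt\lambda$) to $O(\sqrt{\lambda/\log k})=O(\sqrt\lambda)$. Since $\lambda>\log k$, this is also $O(\sqrt{\lambda\log k})$. For $x>\lambda$ we switch to the exponential-type tail $e^{-3x/8}$ and get an $O(1)$ remainder. Collecting terms should give roughly $2\sqrt{\lambda\log k}$ for each side plus small constants, which fits within $3+5\sqrt{\lambda\log k}$.

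\textbf{Regime $\lambda\leq\log k$.} For $M^+$ we use the cruder bound $\Pr(\Poiss{\lambda}\geq m)\leq (e\lambda/m)^m$. Taking $t_0\approx 3\log k$, so that $m\geq \tilde\lambda+t_0\geq e^2\log k\cdot(\text{const})$, we get $k(e\lambda/m)^m\leq k\,e^{-m}$, and the tail sum beyond $t_0$ is $O(1/k)$. This gives $\E[M^+]\leq 3\log k+O(1)$, and with $M^-\leq\log k$ the total is $2+4\log k$.

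The main obstacle is tuning the constants so that the claimed bounds $4$ and $5$ come out exactly, in particular handling the boundary $x\sim\lambda$ in the Bernstein denominator and the floor offsets. We would first choose the thresholds $t_0$ generously and check numerically whether the leftover sums (geometric or Gaussian) fit into the additive slack, and only then optimise constants if needed.
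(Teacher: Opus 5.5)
\textbf{Gap: the explicit constants are not established.} Your decomposition needs a small correction: it should read $W_{k,\lambda}\le 2+M^++M^-$, since if all $X_i<\tilde\lambda$ the first bracket is $0$, not $1$. With that, union bounds and Chernoff tails do give the right orders $O(\log k)$ and $O(\sqrt{\lambda\log k})$. But the proposition asserts the constants $4$ and $5$, and you postpone exactly these to numerical tuning. The estimates you do state do not reach them.

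\emph{Case $\lambda\le\log k$.} The step $k(e\lambda/m)^m\le k e^{-m}$ requires $m\ge e^2\lambda$. When $\lambda$ is close to $\log k$, your cutoff only gives $m\gtrsim\tilde\lambda+3\log k\approx 4\log k$, short of $e^2\lambda\approx 7.4\log k$. There the terms are about $k^{1+4\log(e/4)}\approx k^{-0.55}$, so the remainder is not $O(1/k)$. More seriously, even granting $\E[M^+]\le 3\log k+O(1)$, adding $\E[M^-]\le\log k$ gives $2+4\log k+O(1)$. The split $3\log k+\log k$ already uses the entire budget $4\log k$, leaving nothing for the remainder or for rounding $t_0$; at $k=2$ the whole budget is $4\log 2\approx 2.77$.

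\emph{Case $\lambda>\log k$.} The sum beyond $t_0$ is of order $\sqrt{\lambda/\log k}=\sqrt{\lambda\log k}/\log k$. This is a multiple of the main term, with relative size $1/\log k\approx 1.44$ at $k=2$, not a small constant. So ``roughly $2\sqrt{\lambda\log k}$ per side'' is not what your estimates give. For example, take $k=2$, large $\lambda$, $t_0=2\sqrt{\lambda\log k}$, and the Bernstein denominator bounded by $c\lambda$ with $c\ge 8/3$ (needed to cover $x\le\lambda$). Then the two tails together add at least about $0.8\sqrt{\lambda\log k}$. You land near $4.8\sqrt{\lambda\log k}$ or more, plus rounding constants, against a budget of $1+5\sqrt{\lambda\log k}$. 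Whether this fits depends on refinements the proposal does not provide: exploiting $x/3\ll\lambda$ when $\lambda\gg\log k$, and a separate argument when $\lambda\asymp\log k$ or $k$ is small.

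\textbf{The missing idea.} The paper bounds extremes with Jensen's inequality on the generating function, which removes all tail-sum and rounding losses. For $a>1$, $a^{\E[\max_iX_i]}\le\E[a^{\max_iX_i}]\le k\,e^{\lambda(a-1)}$; for $a\in(0,1)$ the analogous inequality bounds $\E[\min_iX_i]$ from below.
\begin{itemize}
\item With $a=e$ and $\lambda\le\log k$, this gives $\E[\max_iX_i]\le\log k+(e-1)\lambda\le e\log k$. The crude bound $W_{k,\lambda}\le 2+\max_iX_i+\lambda$ then yields $2+4\log k$.
\item With $a=1\pm\sqrt{\log k/\lambda}$ and $\lambda>\log k$, it gives $\E[\max_iX_i]\le\lambda+2\sqrt{\lambda\log k}+\log k$ and $\E[\min_iX_i]\ge\lambda-2\sqrt{\lambda\log k}+\log k$. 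Hence $\E[\max_iX_i-\min_iX_i]\le 4\sqrt{\lambda\log k}$, with the $\log k$ terms cancelling. The paper then uses $W_{k,\lambda}\le 2+(\max_iX_i-\min_iX_i)+|X_1-\tilde\lambda|$ and the Poisson mean absolute deviation for the remaining $\sqrt{\lambda\log k}+1$.
\end{itemize}
The same device repairs your own decomposition. For $\lambda\le\log k$ you have $M^++M^-\le\max(\max_iX_i,\tilde\lambda)$, so Jensen gives
$\E[M^++M^-]\le\log\bigl(e^{\tilde\lambda}+k\,e^{(e-1)\lambda}\bigr)\le e\log k+\log(1+k^{1-e})<4\log k$ for all $k\ge 2$.
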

\reviewer{See proof in Section~\ref{sec:proof-window}}

Applied to a sampling round of Algorithm~\ref{algo:naive-boltzmann}, which corresponds to $\lambda = \nicefrac{n}{k}$, the algorithm {\em SimulateProfile} has an expected cost of $O(\sqrt{\frac{n}{k} \log k})$ if $\frac{n}{k} \geq \log k$ and $O(\log k)$ if $\frac{n}{k} \leq \log k$ uniformly in $n\geq 1$ and $k\geq 2$.

\subsection{The early rejection method}

In this section, we explain the early rejection technique mentioned in the introduction and popularized by De Salvo and his collaborators. We consider the toy example of Algorithm~\ref{algo:naive-boltzmann} when $k=2$. 
In this case, the complexity of the algorithm is given by the expected number of sampling rounds and is $\Theta(\sqrt{n})$.

A sampling round consists of sampling two iid Poisson random variables with parameter $\omega=\nicefrac{n}{2}$ and it rejects when the sum of the two variables $X_1$ and $X_2$ is different from $n$. An equivalent procedure would be to sample $x\gets X_1$ and accept with probability $p_x$, which is the probability that the second Poisson variable equals $n-x$.  
The key observation here is that this  probability $p_x = \Pr(\Poiss{\omega}=n-x)$ is bounded independently of $x$ by $q  = \Pr(\Poiss{\omega}=\lfloor \omega \rfloor)) \sim \nicefrac{1}{\sqrt{2 \pi  \frac{n}{2} }}$. \reviewer{see Lemma~\ref{lemma:mode-poisson}} This directly follows from the fact that $\lfloor \omega \rfloor$ is a mode of $\Poiss{\omega}$.

Now we consider a slight variation of the above algorithm, which uses the acceptance probability $\nicefrac{p_x}{q}$ instead of $p_x$. As $p_x \leq q$ for all $x$, $\nicefrac{p_x}{q}$ is indeed a probability. 

The probability of this new procedure generating a pair $(x_1,x_2)$ satisfying $x_1+x_2=n$, is just the probability of generating this pair given that we accept, namely
 \[
 \frac{\Pr(\Poiss{\omega}=x_1)\times \tfrac{p_{x_1}}{q}}{\sum_x \Pr(\Poiss{\omega}=x)\times \tfrac{p_{x}}{q} } =  \frac{\Pr(\Poiss{\omega}=x_1)\times p_{x_1}}{\sum_x \Pr(\Poiss{\omega}=x)\times p_x } \,.
 \]
 Note that the factor $q$ cancels out and we obtain the same probability as with the first rejection procedure. However, the acceptance probability for the second procedure is $a'=\frac{a}{q}$ where $a$ is the acceptance probability of the first procedure. A direct computation shows that $a'=\frac{\Pr(\Poiss{n}=n)}{\Pr(\Poiss{\omega}=\lfloor\omega\rfloor)}\sim \frac{1/\sqrt{2\pi n}}{1/\sqrt{2\pi \omega}} = \frac{\sqrt{2}}{2}$ as $\omega = \frac{n}{2}$.
 Hence the acceptance rate of the second procedure is constant.

\subsection{Optimal sampler for random mappings}

  \begin{wrapfigure}{R}{0.5\textwidth}
    \begin{minipage}{0.5\textwidth}

      \begin{algorithm}[H] \scriptsize               
      \caption{Algorithm to produce a random mapping profile. Here $D(x,a)=x^a e^{-x} / a!$.} 
\label{fig:procedure-final-mappings}
         function RandomMapProfile($n$,$k$):\\
    \If{$k$ is too small}{
        call procedure from Section~\ref{sec:profile-mapping-k-small}.
    }
    $\omega$ := $n/k$\;
    \While{True}{
        $k_\ell$ := $\lfloor k /2\rfloor$\;
        $k_r$ := $k-k_l$\;
        $profile_\ell\gets$ SimulateProfile($k_\ell$,$\omega$);\qquad\qquad \CommentSty{$\blacktriangleright$ Multinomial method} \;
        $s := sum(profile_\ell)$\; 
        \If{$s \leq n$}{
            $p $ := $D(\omega\cdot k_r,n-s)$ \;
            $q$  := $D(\omega\cdot k_r, \lfloor \omega \cdot k_r\rfloor)$\;
            \If{Bernoulli(p/q)}{
                $profile_r\gets$ RandomMapProfile($n-s$,$k_r$)\;
                \Return $(profile_\ell+profile_r)$\;
            }
         }
    }
    \
      \end{algorithm}
          \end{minipage}
  \end{wrapfigure}

Our general profile sampler (see  Algorithm~\ref{fig:procedure-final-mappings}) is recursive and works for all values of $n$ and $k$. 

If $k \leq n^{1/3}$, we apply the procedure described in Section~\ref{sec:profile-mapping-k-small}. Otherwise, we accelerate the sampling of the profile of $X_1,\ldots,X_k$ iid $\Poiss{\omega}$ with $\omega=n/k$ conditioned on $X_1 + \cdots + X_k = n$. We sample the profile $\pi_\ell$ of the first $k_\ell=\lfloor k/2\rfloor$ Poisson variables using Algorithm~\ref{fig:procedure-simulate-profile}. From this profile, we compute the sum  $s_\ell = X_1 + \cdots + X_{k_\ell}$ . The probability $p_s$ that when sampling the $k_r = k - k_\ell$ remaining variables the total sum is $n$ is precisely $P(X_{k_\ell+1} + \cdots + X_k = n - s) = \Pr(\Poiss{k_r\omega}=n-s)$. Remark that the probability $p_s$ is bounded uniformly in $s$ by the mode $q=\Pr(\Poiss{k_r\omega}=\lfloor k_r\omega\rfloor)$. Thus we again up-scale the acceptance rate by accepting with probability $\frac{\Pr(\Poiss{k_r\omega}=n-s)}{q}$. 

As in our toy example, the acceptance probability at this point is the acceptance probability of the naive algorithm $\Pr(\Poiss{n}=n)$ divided $q$ which is equal to $\Pr(\Poiss{\frac{n}{2}}=\frac{n}{2})$ when $k$ is even. \reviewer{See Proposition~\ref{prop:up-scaled-proba-mappings}.}This quantity converges asymptotically to $\tfrac{\sqrt{2}}{2}$ as $n\to \infty$. Moreover, it is bounded by below by $\frac{\sqrt{2}}{2}\times (1-\tfrac{2.5}{n})\geq 0.1$ for all $n\geq 3$ and $k<n$, not just even $k$.

Once accepted, the profile of the first half of the preimages sizes is fixed to be $\pi_r$ and it remains to draw the profile of the second half which is equivalent to uniformly sampling the profile $\pi_r$ of a random mapping from $[n-s]$ to $[k_r]$ (which is done recursively) and to return $\pi_\ell + \pi_r$. 
\reviewer{See the detailed proof in Section~\ref{sec:proof-main-rm}.}

\begin{restatable}{theorem}{thmrmmain}
    \label{thm:random-mapping}
    The global cost of Algorithm~\ref{fig:procedure-final-mappings} is 
    $O((\log k)(\sqrt{\tfrac{n}{k}\log k} + \log k))$, uniformly on $n\geq 1$, $k\geq 2$.
\end{restatable}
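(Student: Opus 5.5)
We analyse Algorithm~\ref{fig:procedure-final-mappings} by unrolling its recursion. A call with parameters $(n,k)$ runs a number $T$ of sampling rounds, each consisting of one call to \emph{SimulateProfile}$(k_\ell,\omega)$ plus $O(1)$ work. After one successful round it makes a single recursive call on $(n-s,k_r)$ with $k_r=\lceil k/2\rceil$. The recursion depth is therefore $O(\log k)$, including the final call to the procedure of Section~\ref{sec:profile-mapping-k-small} when $k$ becomes too small. The total cost is thus a sum over $O(\log k)$ levels, and it suffices to show that each level costs in expectation $O(\sqrt{\tfrac{n}{k}\log k}+\log k)$, with the same constant uniformly.

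\textbf{Cost of one level.} At a given level, write $(n',k')$ for the current parameters. The rounds are i.i.d. attempts. By Proposition~\ref{prop:up-scaled-proba-mappings}, each is accepted with probability
\[
\frac{\Pr(\Poiss{n'}=n')}{\Pr(\Poiss{k_r\omega'}=\lfloor k_r\omega'\rfloor)}\geq 0.1 .
\]
Small $n'$, say $n'\le 2$, is handled directly in $O(1)$. Since the cost of a round and the acceptance decision are dependent, Wald's equation gives an expected cost of the level of $\E[S]/a\le 10\,\E[S]$, where $S$ is the cost of one round. Here $S$ is dominated by $W_{k_\ell,\omega'}$. Proposition~\ref{prop:prop-window-2} then gives $\E[S]=O(\sqrt{\omega'\log k'}+\log k')$, where $\omega'=n'/k'$ is the current ratio. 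One also adds the $O(1)$ cost of computing $p,q$ and the Bernoulli in the extended RAM model.

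\textbf{Main obstacle: the ratio may drift.} The parameter $\omega'$ at deeper levels is random. Conditioned on acceptance, $(n'-s,k_r)$ is exactly a random mapping subproblem. The key point is that the size-vector of a uniform random mapping $[n]\to[k]$ is exchangeable. The subproblem at depth $d$ therefore has $n^{(d)}$ equal to the sum of the last $k^{(d)}$ coordinates of a ${\tt Multi}(n;1/k,\dots,1/k)$ vector, i.e.\ $n^{(d)}\sim{\tt Bin}(n,k^{(d)}/k)$. Hence $\E[n^{(d)}/k^{(d)}]=n/k$.

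Since the level cost bound $g(\omega')=\sqrt{\omega'\log k}+\log k$ is concave in $\omega'$, Jensen's inequality gives $\E[g(\omega^{(d)})]\le g(n/k)$. We bound $\log k^{(d)}\le\log k$, and the small-$k$ cutoff costs only $O(k^{(d)})$, which is $O(\log k)$-compatible. This last point needs care: if the threshold is $k^{(d)}\le (n^{(d)})^{1/3}$, I would instead bound that final call by $O(k^{(d)})\le O(\sqrt{\omega^{(d)}})$, since $k^3\le n$ implies $k\le\sqrt{n/k}$, and apply Jensen again.

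Summing over the $O(\log k)$ levels yields $O\bigl((\log k)(\sqrt{\tfrac nk\log k}+\log k)\bigr)$. The step I expect to require the most care is justifying that the acceptance lower bound of Proposition~\ref{prop:up-scaled-proba-mappings} holds at every level uniformly, including odd $k'$ and small $n'$. Besides this, the concavity argument must be applied to the conditional law of the subproblems.
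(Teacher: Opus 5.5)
Your argument is the paper's proof. It uses a per-level Wald bound from the $\geq 0.1$ up-scaled acceptance probability and Proposition~\ref{prop:prop-window-2}, then Jensen's inequality on the concave level bound via $\mathbb{E}[N_i/k_i]=n/k$, then a sum over the $O(\log k)$ levels; you also supply two details the paper only asserts, namely $N_i\sim\Binomial(n,k_i/k)$ (which gives $\mathbb{E}[N_i/k_i]=n/k$) and the control of the small-$k$ base case through $k\le\sqrt{n/k}$ when $k^3\le n$. You leave out the cost of forming $\mathit{profile}_\ell+\mathit{profile}_r$ at each level: the paper bounds the expected length of each half's profile by a single level's cost (using symmetry for $\mathit{profile}_r$), whereas bounding $|\mathit{profile}_r|$ by the total cost of the recursive call would lose a factor $\log k$.
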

\begin{proof}[Sketch]
Let $(N_0,k_0), \ldots, (N_m,k_m)$ denote the values of the  parameters at the start of each recursive call. The pair $(N_0,k_0)$ is equal to $(n,k)$, the values of the $k_i$'s are deterministic in $k$ and $i$ but the $N_i$'s are random variables.

Let $S_{N_i,k_i}$ denote the total cost of the calls to $\texttt{SimulateProfile}$ during the $i$-th recursive call. For \emph{deterministic} $n$ and $k$, Wald's equation gives us the following upper-bound $\mathbb{E}[S_{n,k}] \leq c(\sqrt{\tfrac{n}{k}\log k} + \log k)$ for some constant $c >0$ using Prop.~\ref{prop:prop-window-2} and the constant lower-bound on the acceptance probability.

This, together with the concavity of $x\mapsto \sqrt{x}$, shows that $\mathbb{E}[S_{N_i,k_i}] \leq c(\sqrt{\mathbb{E}[\tfrac{N_i}{k_i}]\log k} + \log k)$. To conclude, we establish that $\mathbb{E}[\nicefrac{N_i}{k_i}]=\nicefrac{n}{k}$ for each $i$. It follows the expected cost in number of sampling of the Poisson of the algorithm is $E[S_{N_0,k_0} + \cdots + S_{N_m,k_m}] \leq c'\, ( \log k)\, (\sqrt{\tfrac{n}{k}\log k} + \log k)$ for some constant~$c'$. 

With a suitable representation of the profiles, the complexity of summing the profiles is of the same order.
\end{proof}

\section{Profiles of surjections from $[n]$ to $[k]$ for $k \geq \frac{n}{\log n}$}
\label{sec:profiles-random-surjections}

Following the case of random mappings, our starting point is the naive Boltzmann sampler for size-vectors of random surjections. This algorithm samples $k$ iid $\Poissone{\omega}$ random variables $X_1, \ldots, X_k$ until $X_1+\cdots+X_k =n$. Here, the Poisson random variables are constrained to be non-zero: $\Pr(\Poissone{\omega} = j) = \tfrac{\omega^j}{j!}\,(e^\omega-1)^{-1}$ for $j\in \mathbb{Z}_{\geq 1}$.

The algorithm is correct for every choice of $\omega>0$, but the performance varies. To maximize the chance that $\sum_{j=1}^k X_j = n$, we choose $\omega=\omega(n,k)$ that makes $\sum_{j=1}^k \E[X_j] = n$ i.e.,
$k \frac{\omega e^\omega}{e^\omega -1} = n\,, $ known as the saddle-point equation.
The solution is unique and depends only on $\nicefrac{n}{k}$, since $\omega \to \frac{\omega e^\omega}{e^\omega -1}$ is strictly increasing.

The following proposition summarizes the main features of the behaviour of $\omega(n,k)$ \reviewer{The proof can be found in Appendix~\ref{appendix:properties-omega-surjections}.}
\begin{restatable}{proposition}{propertiesomega}
        \label{prop:properties-omega-surjections}
    The solution $\omega=\omega(n,k) > 0$ of $k \frac{\omega e^\omega}{e^\omega -1} = n$ satisfies, for all $n>k\geq 1$,
    \[
    0 < \tfrac{n}{k}-1\leq \omega \leq \tfrac{n}{k}\,,\qquad \Delta \leq k\omega\leq 2 \Delta\,,\quad (\Delta:=n-k)\,, \qquad 0\leq 2\Delta -k\omega \leq \tfrac{1}{6}k \omega^2\,.
    \]

\end{restatable}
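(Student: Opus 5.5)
The plan is to study the function $f(\omega)=\frac{\omega e^\omega}{e^\omega-1}=\frac{\omega}{1-e^{-\omega}}$ on $(0,\infty)$. Since $\omega(n,k)$ is defined by $f(\omega)=n/k$, every bound in the statement will come from an elementary two-sided inequality on $f$, combined with the monotonicity of $f$ already noted before the statement.

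First, we have $f(\omega)-\omega=\frac{\omega}{e^\omega-1}$, and this quantity lies in $(0,1]$ because $e^\omega-1\geq\omega$. This gives $\omega< f(\omega)\le \omega+1$. Substituting $f(\omega)=n/k$ immediately yields $\tfrac nk-1\le\omega\le\tfrac nk$. Positivity of $\tfrac nk -1$ is just $n>k$. Existence and uniqueness follow because $f$ is continuous and strictly increasing, with $f(0^+)=1$ and $f(\infty)=\infty$, and $n/k>1$.

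For the remaining two chains, we multiply by $k$ and write everything in terms of $\omega$. We have $\Delta=n-k=k(f(\omega)-1)$. So $\Delta\le k\omega\le 2\Delta$ is equivalent to $f(\omega)-1\le\omega\le 2(f(\omega)-1)$. Similarly, the last chain $0\le 2\Delta-k\omega\le \tfrac16 k\omega^2$ is equivalent to
\[
0\le 2f(\omega)-2-\omega\le \tfrac{\omega^2}{6}.
\]
The left part of this is the same as $\omega\le 2(f(\omega)-1)$, so it suffices to prove three scalar inequalities: $f(\omega)\le 1+\omega$, which was already shown; $f(\omega)\ge 1+\omega/2$; and $f(\omega)\le 1+\omega/2+\omega^2/12$.

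These are the Bernoulli-number expansion $f(\omega)=1+\frac\omega2+\frac{\omega^2}{12}-\frac{\omega^4}{720}+\cdots$ truncated at the right order, and they need to hold globally, not just asymptotically. I would clear denominators by multiplying by $e^\omega-1>0$.

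For $f\ge 1+\omega/2$, the inequality becomes $\omega e^\omega\ge(1+\omega/2)(e^\omega-1)$. This rearranges to $g(\omega):=(\omega/2-1)e^\omega+1+\omega/2\ge0$. We have $g(0)=0$, $g'(\omega)=\tfrac{\omega-1}{2}e^\omega+\tfrac12$ with $g'(0)=0$, and $g''(\omega)=\tfrac{\omega}{2}e^\omega\ge0$, so the inequality follows.

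For the upper bound, set $h(\omega)=(1+\omega/2+\omega^2/12)(e^\omega-1)-\omega e^\omega$; we need $h\ge0$. Expanding in power series, the coefficient of $\omega^m$ in $h$ is
\[
\frac{1}{m!}+\frac{1}{2(m-1)!}+\frac{1}{12(m-2)!}-\frac{1}{(m-1)!}
=\frac{1}{m!}\Bigl(1-\tfrac m2+\tfrac{m(m-1)}{12}\Bigr)=\frac{(m-3)(m-4)}{12\,m!}.
\]
This vanishes for $m=3,4$ and is nonnegative for every integer $m\ge1$. The constant term and the low-order terms also vanish as expected. Hence $h\ge0$ for $\omega>0$.

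The main obstacle I expect is exactly this last global bound. Derivative-chasing on $h$ would require several rounds, so the coefficient computation is the route I would take. I would double-check the low-order coefficients ($m=1,2$) by hand to make sure no sign slips. Everything else reduces to $e^\omega\ge1+\omega$ and a convexity argument.
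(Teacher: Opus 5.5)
Your proposal is correct and follows essentially the paper's argument. The paper writes $k\omega-\Delta = k\,\frac{e^\omega-1-\omega}{e^\omega-1}$ and reduces everything to $0\le \frac{e^\omega-1-\omega}{e^\omega-1}\le \frac{\omega}{2}$ and $\frac{e^\omega-1-\omega}{e^\omega-1}\ge \frac{\omega}{2}-\frac{\omega^2}{12}$, checked by comparing power-series coefficients. These are exactly your three inequalities on $f$, since $\frac{e^\omega-1-\omega}{e^\omega-1}=1+\omega-f(\omega)$; you use convexity for the middle one and spell out the $\frac{(m-3)(m-4)}{12\,m!}$ computation that the paper only calls ``a similar argument''.

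One small fix: that closed formula is valid only for $m\ge 3$ (at $m=1,2$ it would give $\tfrac12$ and $\tfrac1{12}$), while the true coefficients there are $0$. The conclusion $h\ge 0$ is unaffected.
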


Similarly to the random mapping case, we can sample directly the profile of the vector $(X_1,\ldots,X_k)$ using the Multinomial Method (see \ref{sec:sampling-fast}). Sampling in order $j=1,2,\ldots$, the Multinomial Method costs $\max(X_1,\ldots,X_k)$ samples from a Binomial variable. \reviewer{See Appendix~\ref{appendix:length-surjections-large-k} for the proof of the proposition.} The following proposition gives a simple bound for the expected value of the maximum:

\begin{restatable}{proposition}{profilehigh}
    \label{maximum-high-part}
    Let $n > k\geq 1$ be integers. Consider iid random variables $X_1,\ldots,X_k$ distributed according to ${\tt Poiss}_{\geq 1}(\omega)$ where $\omega=\omega(n,k)$ satisfies $k\omega \frac{e^\omega}{e^\omega - 1 }=n$.

    If $k\geq \frac{n}{\log n}$, we have $\E[\max(X_1,\ldots,X_k)]\leq 4 \log n\,.$
\end{restatable}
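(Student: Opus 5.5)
We bound the expected maximum by the tail sum
\[
\E[\max_i X_i] = \sum_{j\geq 0}\Pr(\max_i X_i > j) \leq \sum_{j\ge 0}\min\bigl(1, k\Pr(X_1>j)\bigr),
\]
using a union bound for large $j$. We then cut the sum at a threshold $j_0$ of order $\log n$. The terms with $j<j_0$ contribute at most $j_0$, and the remaining tail should contribute $O(1)$.

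The key input is a bound on $\omega$ when $k\geq n/\log n$. By Proposition~\ref{prop:properties-omega-surjections}, $\omega\leq n/k\leq \log n$. For the tail of ${\tt Poiss}_{\geq 1}(\omega)$ we have, for $j\geq 1$,
\[
\Pr(X_1 > j) = \frac{1}{e^\omega-1}\sum_{i>j}\frac{\omega^i}{i!}.
\]
Since $\frac{\omega}{e^\omega-1}\leq 1$, this is at most
\[
\sum_{i>j}\frac{\omega^{i-1}}{i!} \leq \frac{\omega^{j}}{(j+1)!}\cdot\frac{1}{1-\omega/(j+2)}.
\]
Using $(j+1)!\geq ((j+1)/e)^{j+1}$, we get for $j\geq 2\omega$
\[
\Pr(X_1>j)\leq \frac{2}{\omega}\Bigl(\frac{e\omega}{j+1}\Bigr)^{j+1}.
\]
To avoid the $1/\omega$ blow-up for small $\omega$, it is cleaner to keep the bound $\omega^j/(j+1)!\cdot 2$, valid whenever $j+2\geq 2\omega$.

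Now take $j_0=\lceil 4\log n\rceil - 1$, or tune the constant as needed. For $j\geq j_0$ we have $j+1\geq 4\log n\geq 4\omega$. Hence
\[
\frac{\omega^j}{(j+1)!}\leq \frac{(e\omega/(j+1))^{j+1}}{\omega}
\]
when $\omega\geq1$, and directly $\leq 1/(j+1)!$ when $\omega<1$. In both cases the bound is at most $(e/4)^{j+1}\max(1,1/\omega)$, or $1/(j+1)!$. Multiplying by $k\leq n$ gives
\[
k\Pr(X_1>j)\leq 2n(e/4)^{j+1}.
\]
At $j+1=4\log n$ we have $(e/4)^{4\log n}=n^{4\log(e/4)}\approx n^{-1.54}$, so the tail terms sum geometrically to $O(n^{-0.5})$, which is a small constant.

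The main obstacle is making the constants honest: we want the total to be $\leq 4\log n$ for every $n>k$, not just $O(\log n)$. The bulk part already costs $j_0+1\approx 4\log n$, leaving essentially no slack, so the cutoff must be chosen slightly below $4\log n$. For instance, cut at $j+1\geq 3.5\log n$ or $\geq e^{1.5}\omega$, where the ratio $e\omega/(j+1)$ is small enough that $n(\cdot)^{j+1}\leq n^{-c}$. The geometric tail then fits in the remaining $\tfrac12\log n$; this requires $\log n\geq$ a modest constant. Small $n$ would be handled directly, since $\max X_i\leq n-k+1$ deterministically (every $X_i\geq1$ and the sum is constrained only in expectation). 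Alternatively, small $n$ can be dispatched by a crude bound such as $\E[\max]\leq 1+\sum_i\E[X_i-1]=1+\Delta$, noting that $k\geq n/\log n$ forces $\Delta$ small for small $n$. The case $\omega<1$ is easier, since then $\Pr(X_1>j)\leq 2/(j+1)!$.
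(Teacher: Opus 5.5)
\textbf{The gap: the constant $4$ is never established.} Your tail-sum/union-bound scheme is legitimate and immediately gives $\E[\max_i X_i]=O(\log n)$. It does not, however, deliver the stated constant $4$, and the specific steps you propose for closing the constant fail.

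The culprit is the estimate $\omega/(e^\omega-1)\le 1$. It replaces the normalisation $1/(e^\omega-1)\approx e^{-\omega}$ by $1/\omega$. For $\omega\approx\log n$ (the case $k\approx n/\log n$) this costs a factor of order $n/\log n$. With that loss, together with $k\le n$ and $\omega\le\log n$, your bound on the tail beyond $j+1=c\log n$ is about $2n^{1-c\log(c/e)}$. This is bounded only when $c\log(c/e)>1$, i.e.\ $c\gtrsim 3.59$. Consequently:
\begin{itemize}
\item the cutoff $3.5\log n$ you suggest leaves a tail of order $n^{0.11}$, not $n^{-c}$;
\item the cutoff $e^{1.5}\omega$ leaves a tail of order $n e^{-2.24\,\omega}$, which is useless whenever $\omega\ll\log n$;
\item a cutoff $c\log n$ with $3.6<c<4$ beats $4\log n$ with your estimates only once $n$ is in the several hundreds.
\end{itemize}

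\textbf{The small-$n$ fallbacks do not cover the remaining range.} The claim that $\max_i X_i\le n-k+1$ holds deterministically is false. The $X_i$ are unconditioned iid ${\tt Poiss}_{\geq 1}(\omega)$ variables and are unbounded; as you note yourself, the sum equals $n$ only in expectation. The bound $\E[\max_i X_i]\le 1+\Delta$ is valid, but it already fails at $n=18$, $k=7$, where $1+\Delta=12>4\log 18\approx 11.56$.

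\textbf{How to repair it.} If you want to keep your route, the missing ingredient is to retain the factor $e^{-\omega}$. For instance, use $1/(e^\omega-1)\le e^{-\omega}/(1-e^{-1})$ for $\omega\ge 1$ and treat $\omega<1$ separately. The critical constant then drops to about $e$, leaving slack of order $\log n$ to absorb the tail.

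The paper avoids all of this case analysis with a one-line exponential-moment argument. By Jensen and $2^{\max_i X_i}\le\sum_i 2^{X_i}$,
\[
2^{\E[\max_i X_i]}\le k\,\E[2^{X_1}]=k(e^\omega+1)\le 2ke^\omega .
\]
Hence $\E[\max_i X_i]\le 1+(\log k+\omega)/\log 2$. Using $k\le n$ and $\omega\le n/k\le\log n$, this gives $\E[\max_i X_i]\le 1+\tfrac{2}{\log 2}\log n\le 4\log n$ for $n\ge 3$. The identity $\E[2^{X_1}]=(e^{2\omega}-1)/(e^\omega-1)=e^\omega+1$ is exactly where the normalisation you discarded gets used.
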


\reviewer{The proof of this claim can be found in Appendix~\ref{appendix:length-surjections-large-k}.}
Hence when $k \geq \frac{n}{\log n}$, the expected cost of a sampling round is $O(\log n)$. This bound is tight when $k \sim \frac{n}{\log n}$ but for instance, in the {small difference regime} $n-k=O(n^\theta)$ for some fixed $\theta<1$, the cost is $O(1)$.

It remains to analyze the acceptance probability of the Boltzmann sampler.
Contrarily to the random mapping case, the acceptance probability $a(n,k)$ of the exact size Boltzmann sampler does not have simple expression. However, we succeed in obtaining a uniform estimate of $a(n,k)$.

\begin{restatable}{theorem}{acceptanceprobasurjection}
        \label{thm:order-acceptance-prob-surj}
        Consider integers $1\leq k < n$.
    The acceptance probability satisfies $a(n,k) = \Theta(\nicefrac{1}{\sqrt{\Delta}})$, with $\Delta=\Delta(n,k):=n-k$, where the hidden constants are  uniform. 
\end{restatable}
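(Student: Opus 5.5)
Write the acceptance probability explicitly. With $X_j$ iid $\Poissone{\omega}$, we have
\[
a(n,k)=\Pr\Big(\sum_{j=1}^k X_j=n\Big)=\frac{k!\,\stirlingsecond{n}{k}\,\omega^n}{n!\,(e^\omega-1)^k},
\]
since $[z^n](e^z-1)^k = k!\stirlingsecond{n}{k}/n!$. The plan is to shift variables so that the relevant quantity is $\Delta=n-k$ rather than $n$. Set $Y_j:=X_j-1\geq 0$; then $a(n,k)=\Pr(\sum_j Y_j=\Delta)$.

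\textbf{Upper bound (local limit via characteristic functions).} Write
\[
a=\frac1{2\pi}\int_{-\pi}^{\pi}\phi(t)^k e^{-it\Delta}\,dt,
\qquad
\phi(t)=\frac{e^{\omega e^{it}}-1}{e^{it}(e^\omega-1)}
\]
for the characteristic function of $Y_1$. By the saddle-point choice of $\omega$, $\E[\sum_j Y_j]=\Delta$. Let $\sigma^2:=k\,\mathrm{Var}(Y_1)$. From the Poisson structure, $\mathrm{Var}(Y_1)$ is of order $\E[Y_1]$ uniformly. Indeed $\E[Y_1]=\Delta/k$, and in the small-$\omega$ regime $Y_1$ is roughly Bernoulli$(\omega/2)$, so the variance is comparable to the mean. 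Proposition~\ref{prop:properties-omega-surjections} ($\Delta\leq k\omega\leq 2\Delta$) gives $\sigma^2=\Theta(\Delta)$, and concretely $\sigma^2=\Theta(k\omega)$.

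The key estimate is $|\phi(t)|\leq \exp(-c\,\omega t^2)$ for $|t|\leq\pi$, with $c$ uniform. For $\omega\leq 1$ this follows from expanding $|e^{\omega e^{it}}-1|$. For $\omega$ large it follows from $|e^{\omega e^{it}}|=e^{\omega\cos t}$, treating the $-1$ terms as corrections. Integrating then gives $a\leq \frac{1}{2\pi}\int e^{-ck\omega t^2}dt=O(1/\sqrt{k\omega})=O(1/\sqrt\Delta)$. When $\Delta$ is bounded the claim is trivial, since $a\le1$.

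\textbf{Lower bound.} A Gaussian local limit theorem cannot be used directly, because uniformity is delicate. Instead I would show that $\Pr(|\sum_j Y_j-\Delta|\leq C\sigma)\geq 1/2$ by Chebyshev, so that some value in that window has probability at least $1/(4C\sigma+2)$. To transfer this to the value $\Delta$ itself, I need a smoothness/log-concavity argument. $Y_1$ has a log-concave distribution (a shifted truncated Poisson), and a sum of independent log-concave variables is log-concave. The pmf $f$ of $\sum_j Y_j$ is therefore unimodal and log-concave, with mean $\Delta$ and variance $\sigma^2$. For a log-concave integer distribution, the pmf at the mean (rounded) is at least $c/\sqrt{\sigma^2+1}$; this follows from known bounds relating the mode, mean and variance of discrete log-concave laws (e.g.\ Bobkov–Marsiglietti–Melbourne). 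Hence $a(n,k)\geq c'/\sqrt{\Delta+1}=\Omega(1/\sqrt\Delta)$. Combined with the upper bound, this gives $a(n,k)=\Theta(1/\sqrt\Delta)$.

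\textbf{Main obstacle.} I expect the hardest step to be making the characteristic-function bound $|\phi(t)|\leq e^{-c\omega t^2}$ uniform over all $\omega\in(0,\infty)$, especially for $\omega$ near $0$. There $\phi(t)\approx 1-\tfrac{\omega}{2}(1-e^{it})$, so the bound must be checked against the Bernoulli-like behaviour. My first attempt would be to split into $\omega\leq 1$ (a direct Taylor bound on $|\phi|^2$) and $\omega>1$ (domination by the untruncated Poisson characteristic function, whose modulus is $e^{-\omega(1-\cos t)}$).

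As an alternative that avoids Fourier analysis for the upper bound, log-concavity also bounds the maximum of the pmf by $O(1/\sigma)$, so both directions could rest on discrete log-concavity together with $\sigma^2=\Theta(\Delta)$.
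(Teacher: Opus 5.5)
Your argument is sound, but it takes a genuinely different route from the paper. The paper never studies the law of $\sum_j Y_j$. It uses the identity $a(n,k)=\frac{\omega^n k!}{(e^\omega-1)^k n!}\stirlingsecond{n}{k}$ and imports the Chelluri--Richmond--Temme uniform asymptotics for Stirling numbers, with a separate regime $k\geq n-n^{1/3}$. This gives the sharp form $a(n,k)=(2\pi\sigma^2)^{-1/2}\bigl(1+O(\Delta^{-1})\bigr)$ with $\sigma^2=n\bigl(1-\frac{\omega}{e^\omega-1}\bigr)$, which is exactly your $k\,\mathrm{Var}(Y_1)$. Proposition~\ref{prop:properties-omega-surjections} then converts this into $\Theta(1/\sqrt{\Delta})$. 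Your route combines three ingredients: log-concavity of the convolution (Hoggar), the variance comparison $\sigma^2=\Theta(\Delta)$, and discrete log-concave bounds relating the maximum, the mean and the variance. It loses the constant $1/\sqrt{2\pi}$. In exchange it is elementary, needs neither a deep external asymptotic result nor a case split, and applies verbatim to any exact-size Boltzmann sampler whose component law is log-concave.

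Two soft spots should be tightened.

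First, the Fourier bound as sketched does not work near $t=0$ for large $\omega$. There the triangle-inequality bound $|\phi(t)|\le \frac{e^{-\omega(1-\cos t)}+e^{-\omega}}{1-e^{-\omega}}$ is not even $\le 1$ once $\omega(1-\cos t)$ is of order $e^{-\omega}$. That region cannot be discarded when $k\gg e^{\omega}$, so you would need a separate second-order estimate there. It is simpler to drop the Fourier part and take the upper bound from the log-concave inequality $\max f=O\bigl((1+\sigma^2)^{-1/2}\bigr)$.

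Second, the transfer from the mode to the mean need not rest on a vague citation; it has a short proof with constant $1/e$ when the mean $\mu$ is an integer.
\begin{itemize}
\item Suppose a mode $m$ satisfies $m>\mu$ (the case $m<\mu$ is symmetric), and set $d=m-\mu$, $L=\log\bigl(f(m)/f(\mu)\bigr)$ and $\beta=L/d$.
\item The three-slope inequality for the concave function $\log f$ gives $f(\mu-i)\le f(\mu)e^{-\beta i}$ for $i\ge 1$, and $f(\mu+i)\ge f(\mu)e^{\beta i}$ for $0\le i\le d$.
\item Substitute these into $\sum_{i\ge 1} i f(\mu-i)=\sum_{i\ge1} i f(\mu+i)$, using $\sum_{i\ge1} i e^{-\beta i}=\frac{1}{4\sinh^2(\beta/2)}\le \beta^{-2}$ and $\sum_{i=1}^d i e^{\beta i}\ge \int_0^d x e^{\beta x}\,dx$.
\item This yields $e^{L}(L-1)\le 0$, hence $L\le 1$, i.e.\ $f(\mu)\ge f(m)/e$.
\end{itemize}
Combine this with your Chebyshev bound $\max f\ge 1/(4C\sigma+2)$ and with $\sigma^2=\Theta(\Delta)$, which you only justified heuristically. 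That comparison does hold uniformly: $\frac{\Delta}{2(e-1)}\le \sigma^2\le 2\Delta$ follows from $\Delta\le k\omega\le 2\Delta$ by splitting at $\omega=1$ and at $n=2k$. Together these complete your lower bound.
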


\begin{proof}[Proof sketch]
The acceptance probability $a(n,k)$ is equal to $[z^n] \left(\frac{e^{\omega z}-1}{e^\omega -1}\right)^k$, by considering the product of the $k$ probability generating functions. Then we notice that the estimate for $a(n,k)$ is equivalent to an estimate of $\stirlingsecond{n}{k}$. \reviewer{See Appendix~\ref{appendix:properties-omega-surjections} for the proof of Theorem~\ref{thm:order-acceptance-prob-surj}.}Indeed,  $\tfrac{k!}{n!}\stirlingsecond{n}{k} =   [z^n] (e^z-1)^k$, therefore $\frac{\omega^n}{(e^\omega-1)^k} \tfrac{k!}{n!}  \stirlingsecond{n}{k}= [z^n] \left(\frac{e^{\omega z}-1}{e^\omega -1}\right)^k = a(n,k)$. Chelluri, Richmond and Temme have shown~\cite{ChelluriRichmondTemme} an estimate for $\stirlingsecond{n}{k}$, involving $\omega=\omega(n,k)$. Then Proposition~\ref{prop:properties-omega-surjections} completes the proof.
\end{proof}

\subsection{Up-scaling the acceptance probability for $k\geq \left(\tfrac{1}{3}+\delta\right) n$}

\label{sec:upscaling-surjections-large-k}

Consider a sample of the $k$ iid Poisson variables $X_1,\ldots,X_k$. Let $M_1$ denote the number of $1$s among the $X_i$'s, $M_2$ the number of $2$s, and so on. Now consider $M_{1,2}=M_1+M_2$, the number of $1$s and $2$s in the sample. Note that there are $k-M_{1,2}$ variables that are at least $3$. If we want that $X_1+\ldots+X_k$ to be $n$, or equivalently that $\underbrace{M_1+2M_2}_{\geq  M_{1,2}}+\underbrace{3M_3 + \cdots+kM_k}_{\geq 3(k-M_{1,2})}=n$, it must be the case that $M_{1,2}+3(k-M_{1,2})\leq n$.

In other terms, for the variables to sum to $n$, $M_{1,2}$ must be at least $G:=\left\lceil \tfrac{3k-n}{2}\right\rceil$. Remark that this lower bound on the value of $M_{1,2}$ is deterministic. If we take into account our assumption that $k\geq \left(\tfrac{1}{3}+\delta\right) n$, this implies that $M_{1,2} \geq G \geq \frac{3}{2}\delta n \geq \frac{3}{2} \delta k$.

To sample the profile of $X_1,\ldots,X_k$, we sample $M_{1,2}$, $M_3$, $M_4$, and so on using the Multinomial Method. Let $\hat{k}$ denote the sampled value of $M_{1,2}$, and let $\hat{n}$ denote the value $n - \sum_{j\geq 3} j M_j$. If $\hat{k}$ is smaller than the constant $G$, we reject immediately, as the $X_i$'s cannot sum $n$. Otherwise, it remains to sample $M_1$ and $M_2$ knowing their sum $\hat{k}$. For the sum of the $X_i$'s to be $n$, the pair $(M_1,M_2)$ must be equal to the unique solution $(2\hat{k}-\hat{n}, \hat{n}-\hat{k})$ of the following system:
\[
\left\{
\begin{array}{lclcl}
    M_1+2M_2 &= & \hat{n} &= &n  - \sum_{j\geq 3} j M_j\,,\\ M_1+M_2&= & \hat{k} &=& M_{1,2}\,.
\end{array}
\right.
\]
The probability $p_{\tt accept}=p(\hat{n},\hat{k})$ that after sampling $M_1$ and $M_2$, the $X_i$'s sum to $n$ is precisely:
{\footnotesize
\begin{center}$p(\hat{n},\hat{k}) := \displaystyle \Pr(M_1 = 2\hat{k}-\hat{n}\,,\ M_2=\hat{n}-\hat{k} \ |\ M_{1,2}=\hat{k}) = \binom{\hat{k}}{2\hat{k}-\hat{n}} \left(\frac{\omega}{\omega+\omega^2/2}\right)^{2\hat{k}-\hat{n}} \left(\frac{\omega^2/2}{\omega+\omega^2/2}\right)^{\hat{n}-\hat{k}}\,.
$\end{center}}
Hence, we can accept directly with this probability instead of sampling $M_1$ (conditioned by the value $M_{1,2}$). In order to upscale this acceptance probability, we must find a uniform bound on $p_{\tt accept}=p(\hat{n},\hat{k})$ for all $\hat{n}$ and all $\hat{k} \geq G$. Indeed, if $\hat{k} < G$, we immediately reject, which corresponds to a null acceptance probability, trivially bounded by any constant.

\begin{restatable}{lemma}{boundprobabilitysurjectionsupperk}
        \label{lemma:bound-probability-surjections}
    Suppose $n>k>\tfrac{1}{3}n$. The acceptance probability $p_{\tt accept}$ at the early rejection step is uniformly bounded by
    $q:=\nicefrac{\exp(\nicefrac{1}{12})}{\sqrt{\pi G \frac{\omega/2}{(1+\omega/2)^2}}}\,,$ $G=\lceil \tfrac{3k-n}{2}\rceil$.
\end{restatable}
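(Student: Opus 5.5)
We bound $p(\hat n,\hat k)$ uniformly, using that it is a binomial point probability. Write $\pi:=\frac{\omega^2/2}{\omega+\omega^2/2}=\frac{\omega/2}{1+\omega/2}$, so that $1-\pi=\frac{1}{1+\omega/2}$ and $\pi(1-\pi)=\frac{\omega/2}{(1+\omega/2)^2}$. Then $p(\hat n,\hat k)=\Pr(\Binomial(\hat k,\pi)=\hat n-\hat k)$. This is zero unless $0\le \hat n-\hat k\le \hat k$, and early rejection already gives $p=0$ when $\hat k<G$. So it suffices to prove that for every integer $m\ge G$ and every $j$,
\[
\Pr(\Binomial(m,\pi)=j)\le \frac{e^{1/12}}{\sqrt{\pi_0\, m\,\pi(1-\pi)}},
\]
where $\pi_0$ denotes the constant $3.14\ldots$. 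The claimed bound then follows by monotonicity in $m$, since $m\ge G$.

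\textbf{Step 1: reduce to the mode.} The maximum over $j$ of a binomial pmf is attained at a mode $j^\ast\in\{\lfloor (m+1)\pi\rfloor,\lceil (m+1)\pi\rceil-1\}$. So we only need to bound $\binom{m}{j^\ast}\pi^{j^\ast}(1-\pi)^{m-j^\ast}$.

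\textbf{Step 2: Stirling with Robbins' bounds.} Apply $\sqrt{2\pi_0 N}(N/e)^N\le N!\le \sqrt{2\pi_0 N}(N/e)^N e^{1/(12N)}$, valid for $N\ge1$. This gives
\[
\binom{m}{j}\pi^j(1-\pi)^{m-j}\le \frac{e^{1/12}}{\sqrt{2\pi_0\, m\, x(1-x)}}\;e^{-m\,\mathrm{KL}(x\|\pi)},\qquad x=j/m.
\]
The constant $e^{1/12}$ comes from $e^{1/(12m)}\le e^{1/12}$ in the numerator, together with dropping the (favourable) lower-bound corrections for $j!$ and $(m-j)!$. Since $\mathrm{KL}\ge0$, the exponential factor is at most $1$. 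It remains to replace $x(1-x)$ by $\pi(1-\pi)$, and the factor $\sqrt2$ of slack is what lets us do this. The edge cases $j\in\{0,m\}$, where Stirling does not apply to $0!$, are handled directly. For example $(1-\pi)^m\le e^{-m\pi}$, and $\sup_{t>0} t^{1/2}e^{-t}=(2e)^{-1/2}$ gives $(1-\pi)^m\le (2e\,m\pi)^{-1/2}$, which is below the target because $2e>\pi_0$; the case $j=m$ is symmetric.

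\textbf{Step 3 (main obstacle): the mode versus the variance.} At the mode, $|x-\pi|\le 1/m$, so $x(1-x)\ge \pi(1-\pi)-O(1/m)$. We need $2x(1-x)\ge \pi(1-\pi)$, that is, $x(1-x)\ge \tfrac12\pi(1-\pi)$. This holds as soon as $m\pi(1-\pi)$ is bounded below by a small constant. In the complementary regime, $m\pi(1-\pi)$ is tiny and the claimed bound $q$ exceeds $1$, so it is trivially true. Concretely, I would split at $m\pi(1-\pi)=1/\pi_0$.
\begin{itemize}
\item If $m\pi(1-\pi)<1/\pi_0$, then $q>e^{1/12}>1\ge p$, and there is nothing to prove.
\item Otherwise, both $m\pi$ and $m(1-\pi)$ are at least $1/\pi_0$. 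Using $j^\ast\ge (m+1)\pi-1$ and $m-j^\ast\ge (m+1)(1-\pi)-1$, a short computation should give $j^\ast(m-j^\ast)\ge \tfrac12 m^2\pi(1-\pi)$.
\end{itemize}
The second case is the delicate inequality. If it fails for very small $j^\ast$ (for instance $j^\ast=1$), I would instead bound those boundary values directly via $\Pr(\Binomial(m,\pi)=j)\le \frac{(m\pi)^j}{j!}e^{-\pi(m-j)}$, which is comfortably below $q$ when $m\pi$ is of order $1$.
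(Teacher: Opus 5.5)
Your proof is correct, and it takes a genuinely different route from the paper. The paper proceeds in three steps. It first uses Lemma~\ref{lemma:max-binomial}, which says $a\mapsto\max_j\binom{a}{j}v^j(1-v)^{a-j}$ is decreasing, to reduce to $\hat k=G$. It then compares the binomial pmf to a Poisson pmf of mean $(G+1)p$ via Devroye's ratio bound (Lemma~\ref{lemma:compare-binomial-poisson}), choosing whichever of the two binomial parameters is $\ge\frac12$ so that this mean is at least $1$. Finally it applies the Poisson mode bound $1/\sqrt{2\pi\lfloor\mu\rfloor}\le 1/\sqrt{\pi\mu}$.

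You instead apply Robbins' bounds directly to the binomial coefficient. This gives a bound that is decreasing in $m$, so you need neither the monotonicity lemma nor the Poisson comparison. The price is handling the edge cases $j^\ast\in\{0,m\}$ and an elementary inequality at the mode. Both proofs spend the same factor $\sqrt2$ of slack hidden in the $\pi$ (rather than $2\pi$) of $q$: the paper spends it on $\lfloor\mu\rfloor\ge\mu/2$, you on the gap between $x(1-x)$ and $\pi(1-\pi)$. The paper's route is shorter given the cited lemmas and carries over verbatim to Lemma~\ref{lemma:bound-probability-surjections-2}. Yours is self-contained.

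Your Step~3, which you left as ``a short computation should give'', is true. It holds at every interior mode with no lower bound on $m\pi(1-\pi)$, so your case split and the fallback for $j^\ast=1$ are unnecessary. However, literally multiplying your two real lower bounds does not prove it when $j^\ast\in\{1,m-1\}$. For example, if $(m+1)\pi$ is just above $1$, that product is near $0$ while the target is about $m/2$. You must use the integrality $j^\ast\ge1$.

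A clean argument runs as follows. Your two mode inequalities say exactly that $\pi\in[\frac{j^\ast}{m+1},\frac{j^\ast+1}{m+1}]$. The map $(j,\pi)\mapsto(m-j,1-\pi)$ preserves the pmf, the mode condition, $x(1-x)$ and $\pi(1-\pi)$, so you may assume $\pi\le\frac12$.
\begin{itemize}
\item If $\frac{j^\ast+1}{m+1}\le\frac12$: since $t(1-t)$ is increasing on $[0,\frac12]$, and $j^\ast+1\le 2j^\ast$,
\[
\pi(1-\pi)\le\frac{(j^\ast+1)(m-j^\ast)}{(m+1)^2}\le\frac{2j^\ast(m-j^\ast)}{m^2}=2x(1-x).
\]
\item Otherwise: $\frac{m-1}{2}<j^\ast\le\frac{m+1}{2}$, so $|j^\ast-\frac m2|\le\frac12$, and
\[
2x(1-x)\ge\frac{m^2-1}{2m^2}\ge\frac14\ge\pi(1-\pi),
\]
where the middle inequality uses $m\ge2$ (forced by $1\le j^\ast\le m-1$).
\end{itemize}
Together with your edge-case bound for $j^\ast\in\{0,m\}$, this completes the proof.
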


\begin{proof}[Proof Sketch]
\reviewer{See Appendix~\ref{appendix:lemma-bound-proba-surjections} for the full details of the proof of this lemma and the theorem below.} First we note that the function $g \mapsto \max_j p(j,g)$ is decreasing. It follows that $p(\hat{n},\hat{k}) \leq \max_j p(j,G)$. As $j \mapsto p(j,G)$ is the probability mass function of $\Binomial(G,v)$ with $v:=\frac{\omega}{\omega+\omega^2/2}$, it has a mode at $a := \lfloor (G+1) v \rfloor$. Using well-known bounds for the probabilities of the binomial distribution, we obtain the claimed inequalities.
\end{proof}

Under the assumption that $k\geq \left(\tfrac{1}{3}+\delta\right) n$, which implies that $G\geq \tfrac{3}{2}\delta n\geq \tfrac{3}{2}\delta k$, and using the inequalities of Proposition~\ref{prop:properties-omega-surjections}, $q$ is shown to be $\Theta(\sqrt{\delta / \Delta})$ uniformly over $n$ and $k$.

The overall acceptance probability is $\frac{a(n,k)}{q}$. Using the bounds for $a(n,k)$ obtained in Theorem~\ref{thm:order-acceptance-prob-surj}, we can show it is constant.

\begin{restatable}{theorem}{samplingroundsalgoupperupper}
        \label{thm:sampling-rounds-algo-upper-upper}
    The expected number of sampling rounds 
    is $\Theta(\sqrt{1/\delta})$, uniformly on $n>k > \tfrac{1}{3}n$, where $\delta := \tfrac{k}{n}-\tfrac{1}{3}$.
\end{restatable}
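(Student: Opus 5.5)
The expected number of sampling rounds of the up-scaled sampler is the inverse of its overall acceptance probability. By the same cancellation argument as in the toy example for mappings, this probability equals $a(n,k)/q$, with $q$ the bound of Lemma~\ref{lemma:bound-probability-surjections}. Indeed, a round produces a given admissible profile with probability equal to its probability in the naive Boltzmann sampler divided by $q$, so the output law is unchanged and the acceptance is scaled by $1/q$. The plan is therefore to show
\[
\frac{q}{a(n,k)} = \Theta\bigl(\sqrt{1/\delta}\bigr)
\]
uniformly. Theorem~\ref{thm:order-acceptance-prob-surj} gives $a(n,k)=\Theta(1/\sqrt{\Delta})$ with $\Delta=n-k$, so it suffices to prove $q=\Theta(\sqrt{1/(\delta\Delta)})$. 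Note that this is the reciprocal form: the ratio $q/a$ must come out as $\sqrt{1/\delta}$.

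\textbf{Step 1: the factor $\frac{\omega/2}{(1+\omega/2)^2}$.} For $k>n/3$ we have $n/k<3$, so Proposition~\ref{prop:properties-omega-surjections} gives $0<\omega\le n/k<3$. Hence $(1+\omega/2)^2\in[1,25/4]$, and the factor is $\Theta(\omega)$ with absolute constants.

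\textbf{Step 2: the product $G\omega$.} Since $q=\Theta\bigl(1/\sqrt{G\omega}\bigr)$, I need $G\omega=\Theta(\delta\Delta)$. Proposition~\ref{prop:properties-omega-surjections} gives $\Delta\le k\omega\le 2\Delta$, so $\omega=\Theta(\Delta/k)=\Theta(\Delta/n)$, using $k\in(n/3,n)$. It remains to show $G=\Theta(\delta n)$. From $G=\lceil (3k-n)/2\rceil$ and $3k-n=3\delta n$, we get $G\ge \tfrac32\delta n$ and $G\le \tfrac32\delta n+1$.

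\textbf{Main obstacle: the rounding.} The ceiling could break the upper bound when $\delta n$ is tiny. But $3k-n$ is a positive integer, so $3\delta n\ge 1$. Then $G\le \tfrac32\delta n+1\le \tfrac92\delta n$, which gives $G=\Theta(\delta n)$. Combining with Step 2, $G\omega=\Theta(\delta n\cdot \Delta/n)=\Theta(\delta\Delta)$, so $q=\Theta(1/\sqrt{\delta\Delta})$.

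\textbf{Conclusion and validity check.} The expected number of rounds is
\[
\frac{q}{a(n,k)}=\Theta\!\left(\frac{1/\sqrt{\delta\Delta}}{1/\sqrt{\Delta}}\right)=\Theta\bigl(\sqrt{1/\delta}\bigr).
\]
Before applying Wald's equation I would also check that $p_{\tt accept}/q\le 1$, which is Lemma~\ref{lemma:bound-probability-surjections}. All constants come from Theorem~\ref{thm:order-acceptance-prob-surj}, Proposition~\ref{prop:properties-omega-surjections}, and the explicit $e^{1/12}/\sqrt{\pi}$, so they are uniform in $n$ and $k$.
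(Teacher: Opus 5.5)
Your proof is correct and follows the paper's route: the expected number of rounds is $q/a(n,k)$, with $a(n,k)=\Theta(1/\sqrt{\Delta})$ from Theorem~\ref{thm:order-acceptance-prob-surj}, and $q$ is controlled through $G$, $\omega<3$ and $\Delta\le k\omega\le 2\Delta$ from Proposition~\ref{prop:properties-omega-surjections}. Your integrality step is the right way to obtain the $\Omega(\sqrt{1/\delta})$ direction: since $3k-n\ge 1$, you get $G\le \tfrac92\delta n$. The paper's proof bounds $G$ above only by $3k$, which caps the acceptance probability by a constant and so gives just $\Omega(1)$ rounds.
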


\subsection{General up-scaling of the acceptance probability}
\label{sec:upscaling-surjections-general}

It would be tempting to generalize the previous argument by targeting the pair $M_{i,i+1}$ with maximal expected value.
The difficulty is that for the above argument to work, we need a deterministic lower-bound on the size of the $M_{i,i+1}$.
In order to extend the idea, we make the bound looser. Define $D := \lfloor \frac{2n}{k}\rfloor + \mathbf{1}_{\lfloor\frac{2n}{k}\rfloor\equiv 1\bmod 2}$. Remark that  $k\geq \tfrac{2n}{D+1}$ holds.
For $X_1 + \cdots X_k$ to sum to $n$ or equivalently for $M_1+2M_2+3M_3+\ldots$ to sum to $n$, we must have $M_1+\ldots+M_D \geq \frac{(D+1)k-n}{D} \geq \tfrac{n}{D}$ by essentially the same argument from the previous section.

Considering that $D$ was defined to be even, $M_1 + \cdots + M_D = M_{1,2}+M_{3,4}+\ldots+M_{D-1,D} \geq \frac{(D+1)k-n}{D}$. The key observation is that, by the pigeon-hole principle, there exists $j\in \{1,3,\ldots,D-1\}$ such that $M_{j,j+1}\geq G \geq \tfrac{2n}{D^2}$ with $G := 2 \frac{(D+1)k-n}{D^2}$.

\reviewer{see Section \ref{annex:upscaling-surjections-general}}
Our algorithm samples, using the Multinomial Method, the value of $M_{1,2}$, $M_{3,4}$ until a value, say $M_{j,j+1}$, is at least $G$. If no such $j$ exists, we reject. We sample all the values of $M_1, M_2 \ldots$ conditioned by the values of the values $M_{\ell,\ell+1}$ already drawn except for $M_j$ and $M_{j+1}$.

\medskip

Let $\hat{k}:=M_{j,j+1}$ and $\hat{n}:=n-\sum_{i\neq j,j+1} i M_i$. Using the same reasoning as in the previous section, the only possible value for the pair $(M_j,M_{j+1})$ is $(m_j,m_{j+1})$ with $m_j=((j+1)\hat{k}-\hat{n}$ and $m_{j+1}=\hat{n}-j\times \hat{k}$. Hence, the acceptance probability is
\[
\small
p(\hat{k},\hat{n}):=\binom{\hat{k}}{(j+1)\hat{k} - \hat{n}} \cdot \left(\frac{1}{1+\tfrac{\omega}{j+1}}\right)^{(j+1)\hat{k} - \hat{n}} \left(\frac{\tfrac{\omega}{j+1}}{1+\tfrac{\omega}{j+1}}\right)^{\hat{n} - j\cdot \hat{k}}\,.
\]

By the same argument from the previous section, we show that we can up-scale the acceptance probability by \reviewer{For the proof of this section, and the details of the algorithm, see Appendix~\ref{annex:upscaling-surjections-general}.} \[
q := \frac{e^{\nicefrac{1}{12}}}{\sqrt{\pi G \frac{\omega/D}{(1+\omega/D)^2}}}\,,\qquad G:= 2 \frac{(D+1)k-n}{D^2}\,,\qquad D := \lfloor \tfrac{2n}{k}\rfloor + \mathbf{1}_{\lfloor\frac{2n}{k}\rfloor\equiv 1\bmod 2}\,.
\]

Again, by a similar argument, we obtain the order of the expected number of rounds.
\begin{restatable}{theorem}{theoremsamplingroundsalgoupper}
        \label{thm:sampling-rounds-algo-upper}
    The expected number of sampling rounds is $\Theta(n/k)$, uniformly on $n>k\geq 1$.
\end{restatable}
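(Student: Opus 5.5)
The expected number of sampling rounds equals $q/a(n,k)$, where $a(n,k)$ is the acceptance probability of the naive Boltzmann sampler. This is because the upscaled procedure accepts with overall probability exactly $a(n,k)/q$, by the same cancellation argument as in the toy example: $q$ factors out of every accepted outcome. Theorem~\ref{thm:order-acceptance-prob-surj} gives $a(n,k)=\Theta(1/\sqrt{\Delta})$ uniformly, with $\Delta=n-k$. So the whole task reduces to showing
\[
q = \frac{e^{1/12}}{\sqrt{\pi G\,\frac{\omega/D}{(1+\omega/D)^2}}}=\Theta\!\left(\frac{n}{k}\cdot\frac{1}{\sqrt{\Delta}}\right),
\]
that is, $G\,\frac{\omega/D}{(1+\omega/D)^2}=\Theta\!\left(\Delta k^2/n^2\right)$ uniformly. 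I also have to check that the uniform bound $p(\hat k,\hat n)\le q$ really holds, and it is this validity check that forces the particular form of $q$. I take it from the appendix argument, which mirrors Lemma~\ref{lemma:bound-probability-surjections}: monotonicity in $\hat k$, then the binomial mode bound with parameter $G$ and success probability $\frac{\omega/(j+1)}{1+\omega/(j+1)}$, worst-cased over $j\le D-1$.

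I estimate each factor using Proposition~\ref{prop:properties-omega-surjections}. First, $D$ is an even integer with $\frac{2n}{k}-1< D\le \frac{2n}{k}+1$. Since $n>k$, we get $D\ge 2$ and $D=\Theta(n/k)$. Second, $G=2\frac{(D+1)k-n}{D^2}$. Using $(D+1)k\ge 2n$ gives $(D+1)k-n\ge n$, and $(D+1)k-n\le 2n+2k\le 4n$. Hence $G=\Theta(n/D^2)=\Theta(k^2/n)$. Third, $\omega\le n/k$ and $D>\frac{2n}{k}-1\ge \frac{n}{k}$ (using $n\ge k$), so $\omega/D\le 1$. Therefore $(1+\omega/D)^2\in[1,4]$, and the fraction is $\Theta(\omega/D)=\Theta(\omega k/n)$.

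Combining these, $G\frac{\omega/D}{(1+\omega/D)^2}=\Theta\!\left(\frac{k^2}{n}\cdot\frac{k\omega}{n}\right)$. Proposition~\ref{prop:properties-omega-surjections} gives $\Delta\le k\omega\le 2\Delta$, so this is $\Theta(k^2\Delta/n^2)$. Then $q=\Theta\!\left(\frac{n}{k\sqrt{\Delta}}\right)$, and $q/a(n,k)=\Theta\!\left(\frac{n}{k\sqrt\Delta}\cdot\sqrt\Delta\right)=\Theta(n/k)$. All constants are uniform because every ingredient is uniform.

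The main obstacle is not this arithmetic but the two facts it rests on. The first is that the round structure of the general algorithm (scanning $M_{1,2},M_{3,4},\dots$, rejecting when no pair reaches $G$, then accepting with $p/q$) really yields acceptance probability exactly $a(n,k)/q$. This needs $p(\hat k,\hat n)\le q$ for every admissible $j$ and $\hat k\ge G$. The worst $j$ changes the binomial parameter, and I would handle this by bounding $\frac{\omega/(j+1)}{(1+\omega/(j+1))^2}\ge$ its value at $j+1=D$, using monotonicity on the range $\omega/(j+1)\ge \omega/D$ together with care when $\omega/(j+1)>1$. If that monotonicity fails for small $j$, I would use the symmetry $x\mapsto 1/x$ of $x/(1+x)^2$. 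The second fact is the lower bound $\Theta$ rather than just $O$, which comes directly from the two-sided bound on $a(n,k)$ in Theorem~\ref{thm:order-acceptance-prob-surj}. A final edge point is whether $G$ is rounded up; this changes $G$ by at most a constant factor once $G\ge 1$, which holds since $G\ge 2n/D^2=\Omega(k^2/n)$. For very small $k$ this needs $G\ge 2n/D^2\ge 1/2$, which still gives uniform constants.
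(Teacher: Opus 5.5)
Granting the bound $p\le q$, your argument is the paper's proof. The number of rounds is $q/a(n,k)$ with $a(n,k)=\Theta(\Delta^{-1/2})$, and the same estimates $D=\Theta(n/k)$, $G=\Theta(k^2/n)$, $\omega/D\le 1$, $\Delta\le k\omega\le 2\Delta$ give $q=\Theta(n/(k\sqrt{\Delta}))$.

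The gap is in the prerequisite you single out, and the repair you propose for it fails. Write $f(x)=x/(1+x)^2$ and $x_j=\omega/(j+1)$. When $j+1<\omega$, the symmetry gives $f(x_j)=f((j+1)/\omega)$. Since $f$ is increasing on $(0,1]$ and $\omega/D\le 1$, the inequality you need, $f(x_j)\ge f(\omega/D)$, is equivalent to $(j+1)D\ge\omega^2$. For $j=1$ this reads $2D\ge\omega^2$. That fails for every $n/k\ge 5$, because $D\le 2n/k+1$ while $\omega=(1-e^{-\omega})\,n/k\ge 0.98\,n/k$ there.

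This is not only a weakness of the method; the inequality $p\le q$ itself fails. Take $n/k=50$: then $f(\omega/2)\approx f(25)\approx 0.037$, against $f(\omega/D)\approx f(1/2)\approx 0.22$. Suppose the scan stops at $j=1$ with $M_{1,2}=\lceil G\rceil$; the pigeonhole principle only guarantees that some pair reaches $G$, possibly only the first one. Then for large $k$ the largest possible value of $p$ is about $(2\pi\cdot 0.037\,G)^{-1/2}\approx 2.07\,G^{-1/2}$. This exceeds $q\approx 1.30\,G^{-1/2}$, and the configuration has positive probability.

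The paper's Lemma~\ref{lemma:bound-probability-surjections-2} rests on the same false claim, namely that the per-$j$ bound increases with $j$. So for large $n/k$, neither your plan nor the paper shows that the step $\mathrm{Bernoulli}(p/q)$ is legitimate, and hence that the number of rounds is $q/a(n,k)$. A repair has to change $q$ or the thresholds, after which the count must be recomputed. For instance, $j$-dependent thresholds $G_j\propto 1/f(x_j)$ still fit the pigeonhole budget.

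Separately, drop your closing remark on rounding. The paper uses $G=2((D+1)k-n)/D^2$ unrounded in $q$, so no rounding is needed. Your claim $2n/D^2\ge 1/2$ is also false, since $2n/D^2\approx k^2/(2n)$. Indeed, with $\lceil G\rceil$ in place of $G$, the count would be $\Theta(\sqrt n)$ rather than $\Theta(n/k)$ for $k\ll\sqrt n$.
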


Combining this theorem with Proposition~\ref{maximum-high-part} yields an $O( (\log n)^2)$ algorithm for $k\geq \tfrac{n}{\log n}$.

\section{Conclusion}
\label{sec:conclusion}
We presented algorithms for sampling profiles of random mappings and random surjections from $[n]$ to $[k]$, for all $n$ and $k$.
The expected complexity of these samplers in the extended RAM model matches the expected length of the output up to a logarithmic factor. Our complexity analysis does not assume any asymptotic relationship between $n$ and $k$. In particular, our results cover all possible regimes. For instance, when $n-k=O(n^\theta)$ with $\theta<1$, the expected number of sampling rounds is $O(1)$ by Theorem~\ref{thm:sampling-rounds-algo-upper-upper}, while the profile length can also be shown to be $O(1)$ in expectation. In this region, the overall expected complexity of the algorithm is therefore $O(1)$.

Our algorithms are only optimal up to logarithmic factors. For example, as mentioned in the introduction, Devroye and Los obtained a more efficient sampler for profiles of random mappings when $k \leq n \leq k\log k$. We believe that the approach developed here for surjections can be adapted to that setting, and that doing so should allow us to match their complexity.

Our results are not only theoretical but also lead to efficient samplers that we implemented
in both Python and C++. For instance, the slower Python implementation was used to sample histograms for values of $n$ ranging from $10^{12}$ to $10^{18}$, depending on the regime.

For performance reasons and to facilitate the implementation of our algorithms, we designed them so that they can be implemented without relying on arbitrary-precision libraries. However, a naive computation of the various probabilities involved would lead to precision issues such as error propagation and catastrophic cancellation~\cite{floatingpoint}. For example, Temme already discusses catastrophic cancellation in the computation of Poisson densities~\cite[Section~3.3]{Temme_1994}.

In fact, the probabilities in our algorithms are meant to be computed using special functions (which are assumed to be computable in $O(1)$ time in our extended RAM model).  These special functions can be computed efficiently to high precision~\cite{Temme_1994} and are implemented in standard libraries. Table~\ref{fig:prob_special_functions} below lists the special functions associated with each probability distribution we need. One needs to be careful about the quality of these implementations, especially in extreme regimes that our algorithms are likely to require.
{\small
\begin{figure}[htbp]
    \centering
    \begin{tabular}{|c|c|c|}
        \hline
        \textbf{Probability} &
        \textbf{Special function} \\
        \hline
        $\Pr(\Poiss{\lambda}\geq k)$ & $P(k,\lambda)$, {regularized upper incomplete $\Gamma$-function}  \\
        $\Pr(\Poiss{\lambda}\leq k)$  & $Q(k+1,\lambda)$, {regularized lower incomplete $\Gamma$-function} \\
        $\Pr(\Poiss{\lambda} = k)$ & $\partial_\lambda P(k+1,\lambda)$, {derivative of regularized upper incomplete $\Gamma$-function} \\
        $\Pr({\tt Bin}(N,v) = k)$ & $\frac{\partial_v B(v;\ k+1,\ N-k+1)}{N+1} $, derivative Incomplete Beta  \\
        \hline
    \end{tabular}
        \caption{Probabilities and special functions associated.}
    \label{fig:prob_special_functions}
\end{figure}
}

\bibliography{references}

\clearpage

\appendix

\setcounter{theorem}{0}
\renewcommand{\thetheorem}{A.\arabic{theorem}}
\renewcommand{\theproposition}{A.\arabic{theorem}}
\renewcommand{\thelemma}{A.\arabic{theorem}}
\renewcommand{\theremark}{A.\arabic{theorem}}

\section{Coupon collector}
\label{sec:coupon-collector}

\lemmacouponcollector*

\begin{proof}
The probability $i\in [k]$ not appearing in the image is $(1-1/k)^n$. The union bound then implies that $p(n,k)\geq 1 - k\, (1-1/k)^n$. Since $(1-1/k)^n\leq e^{-n/k}$, the result follows.
\end{proof}

\begin{remark}
\label{rem:tight-coupon-collector}
The bound $\frac{n}{\log n}$ is tight. In fact, it follows from the asymptotic estimates in~\cite{erdHos1961classical} that if $k\geq \tfrac{n}{\log n} \times \left(1+2\tfrac{\log \log n}{\log n}\right)$, then $p(n,k)\to 0$.
\end{remark}

\section{Binary entropy of surjections}
\label{sec:appendix-binary-entropy-surjections}

In this section, we derive an estimate for the binary entropy of random surjections. We could obtain more precise bounds using~\ref{thm:estimate-stirling-2} but for the reader's convince we establish the following bound from first principles.

\begin{proposition}
The following hold uniformly on $n$ and $k$:
 \[
  \log_2(|\mathcal{S}(n,k)|) = (n \log_2 k) \times \left(1 + O(\tfrac{\log \log n}{\log{n}})\right).
 \]
\end{proposition}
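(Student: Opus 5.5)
We need to show $\log_2 |\mathcal{S}(n,k)| = n\log_2 k\,(1+O(\tfrac{\log\log n}{\log n}))$ uniformly in $n\ge k\ge 2$. The plan is to sandwich $|\mathcal{S}(n,k)|$ between explicit quantities.

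\textbf{Upper bound.} We use the trivial inequality $|\mathcal{S}(n,k)|\le k^n$, which gives $\log_2|\mathcal{S}(n,k)|\le n\log_2 k$.

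\textbf{Lower bound.} We construct many surjections explicitly. Fix an injection of $\{1,\dots,k\}$ into $[n]$ onto the images $1,\dots,k$, so that element $i$ is mapped to $i$ for $i\le k$. Map the remaining $n-k$ elements arbitrarily. This yields $|\mathcal{S}(n,k)|\ge k^{n-k}$, hence $\log_2|\mathcal{S}(n,k)|\ge (n-k)\log_2 k$. The relative error is $k/n$. This suffices when $k\le n\log\log n/\log n$, but fails when $k$ is close to $n$. In that range we need a better bound and will use $|\mathcal{S}(n,k)|\ge k!\,k^{n-k}$ (choose a bijection on the first $k$ elements). Then $\log_2 |\mathcal{S}(n,k)|\ge \log_2 k! + (n-k)\log_2 k \ge k\log_2(k/e)+(n-k)\log_2 k = n\log_2 k - k\log_2 e$. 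The relative error becomes $\frac{k\log_2 e}{n\log_2 k}\le \frac{\log_2 e}{\log_2 k}$.

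\textbf{Case split.}
\begin{itemize}
\item If $k\ge \log n$: then $\log k\ge \log\log n$. The error from the second bound is $O(1/\log k)$, which is not $O(\log\log n/\log n)$ when $k$ is only around $\log n$. So the two bounds must be combined: the relative error is at most $\min(k/n,\ \log_2 e/\log_2 k)$ (the first bound only contributes when $k\le n/2$, where $k/n$ is meaningful).
\begin{itemize}
\item If $k\le n\log\log n/\log n$: then $k/n\le \log\log n/\log n$.
\item Otherwise $k\ge n\log\log n/\log n\ge \sqrt n$ for large $n$, so $\log k\ge \tfrac12\log n$, and the error is $O(1/\log n)$.
\end{itemize}
\item If $k<\log n$: then $k/n$ is tiny, well within $O(\log\log n/\log n)$.
\end{itemize}

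Small $n$ is absorbed into the constant, because $\log_2|\mathcal{S}(n,k)|\ge 1$ whenever $n>k\ge 2$ and both sides are bounded for bounded $n$. The degenerate case $n=k$ gives $\log_2 k!$ against $k\log_2 k$, which is covered by the second bound.

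\textbf{Main obstacle.} The only delicate point is the middle threshold $k\approx n\log\log n/\log n$. Neither crude bound alone works uniformly there, so I will verify carefully that the case split $k\lessgtr n\log\log n/\log n$ yields the stated error in both regimes with uniform constants.
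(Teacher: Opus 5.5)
Your proof is correct, but it takes a different route from the paper's.

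\textbf{The paper's route.} The paper splits at $k=n/\log n$.
\begin{itemize}
\item Below this threshold it invokes the coupon-collector lemma, $|\mathcal{S}(n,k)|\ge (1-\tfrac{1}{\log n})k^n$. This gives an \emph{absolute} deficit $n\log_2 k-\log_2|\mathcal{S}(n,k)|=O(1/\log n)$.
\item Above the threshold it counts only surjections with a balanced size-vector, $|\mathcal{S}(n,k)|\ge n!/((q+1)!)^k$ with $q\le\log n$. Stirling estimates then give a deficit $O(n\log\log n)$. This is where the $\log\log n$ loss enters: $k\log((q+1)!)$ is about $n\log\log n$ when $q\approx\log n$.
\end{itemize}

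\textbf{Your route.} You use the elementary count $|\mathcal{S}(n,k)|\ge k!\,k^{n-k}$. The maps are distinct because the restrictions to $[k]$ and to $[n]\setminus[k]$ determine the map. Combined with $k!\ge (k/e)^k$, this bounds the deficit by $k\log_2 e$. It needs neither the coupon-collector lemma nor Stirling estimates on multinomials, so it is more self-contained. The paper's argument, in exchange, gives a far sharper statement when $k\le n/\log n$: there $|\mathcal{S}(n,k)|=k^n(1-o(1))$.

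\textbf{Your case analysis can be dropped.} Keep the relative deficit in its exact form $\frac{k\log_2 e}{n\log_2 k}=\frac{k}{n\ln k}$ rather than weakening it to $\log_2 e/\log_2 k$. Since $k\mapsto k/\ln k$ is increasing for $k\ge 3$ and $2/\ln 2=4/\ln 4$, this is at most $1/\ln n$ for all $2\le k\le n$ once $n\ge 4$. So your single bound already gives the stronger uniform estimate $1+O(1/\log n)$. The auxiliary bound $k^{n-k}$, the split at $k\lessgtr\log n$, and the threshold $n\log\log n/\log n$ are then unnecessary.

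\textbf{Small $n$.} Your justification for small $n$ is slightly off. What actually matters is that the relative deficit always lies in $[0,1]$, while $\log\log n/\log n$ is bounded away from $0$ on any finite range of $n\ge 3$.
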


 \begin{proof}
Up to increasing the constant, we can assume that $n \geq 3$. As all surjections are random mapping, we simply need to provide suitable upper-bound for the quantity
\[
 0 \leq n \log_2 k - \log_2(|\mathcal{S}(n,k)|)\,.
\]

-- If $k \leq \frac{n}{\log n}$, then Lemma~\ref{lemma:coupon-collector} guarantees that $\frac{|\mathcal{S}(n,k)|}{k^n} \geq 1 - \frac{1}{\log n}$.  By taking the $\log$ and using the inequality $-\log(1-\frac{1}{x}) \leq \tfrac{1}{x-1}$, valid for all $x>1$, we have that:
\[
n \log_2 k - \log_2(|\mathcal{S}(n,k)|) \leq \frac{1}{(\log n)-1} = O(\tfrac{1}{\log n})\,.
\]

-- If $k \geq \frac{n}{\log n}$. Consider the integer division $n = q k + r$ with $0 \leq r < k$. Note that $q \leq \log n$.
Consider surjections with the property that $1$ through $r$ have preimages of size $q+1$, while $r+1$ through $k$ have preimages of size $q$. The number of such surjections is at least
$\frac{n!}{(q+1)!^{r}(q)!^{k-r}} \geq \frac{n!}{((q+1)!)^k}.$

In particular, $|\mathcal{S}(n,k)| \geq \frac{n!}{((q+1)!)^k}$. Taking the $\log$ and using the fact that $\log(m!) = m \log m + O(m)$ uniformly in $m\geq 1$, $\log( \frac{n!}{(q+1!)^k}) = n \log n + O(n \log \log n)$ by remarking that $q k \leq n$ and $q \leq \log(n)$.
In all, we have:
\[
n \log_2 k - \log_2(|\mathcal{S}(n,k)|) =  O(n \log \log n)\,,
\]
and the result follows.
\end{proof}

\section{Random sampling of a surjection with a given profile}
\label{sec:sampling-surjection-from-the-profile}
In this section, we present an algorithm to sample a surjection with a given profile in $O(n \log k)$ fair bits in the RAM model, which matches the entropy lower bound. The core of the algorithm is a slight adaptation of the algorithm for weighted sampling without replacement presented in \cite{WongE1980}.

As explained in the article, the size-vector of a random surjection with a given profile can be sampled in $O(k \log k)$ time using $O(k \log k)$ fair random bits.

So it is enough to provide a sampler for surjections with a given size-vector $(a_1,\ldots,a_k)$. In \cite{WongE1980}, the authors provide an algorithm to accomplish this task with complexity $O(n \log k)$ in the RAM model but using $O(n \log n)$ fair random bits.

The general idea is to sample an element in $[k]$ using the weights of the size-vector, then update the size-vector to remove the sampled element, and repeat until the values of all elements have been fixed.

To perform this task efficiently, the size-vector is represented as a search tree of depth $O(\log k)$ whose leaves are the elements of the size-vector with their weight. The internal nodes of this tree are labeled by pairs $(L,R)$, where $L$ (resp. $R$) contains the sum of the weights of the leaves in the left (resp. right) subtree. The authors show how this tree can be updated after sampling an element in $O(\log k)$ time in the RAM model.

To perform the sampling, the authors sample a number $x$ in $[1,m]$, where $m$ is the sum of the values in the size-vector. They then perform a search for $x$ in the tree in $O(\log k)$. This consumes $O(n \log n)$ fair random bits in total.

Instead of sampling a number in $[1,m]$, we go down the tree and, at every node labeled by $(L,R)$, go to the left child with probability $\frac{L}{L+R}$ and to the right child otherwise. This requires sampling at most $\log k$ independent Bernoulli random variables. Sampling a Bernoulli random variable requires a constant expected number of fair random bits. In total each sampling of the $n$ elements can be done $O(\log k)$ using an expected number of $O(\log k)$ fair random bits.

\section{Results on Poisson random variables}

In this section, we regroup results on Poisson random variables used in the reminder of the appendix.

\subsection{Mode and maximum probability}

\newcommand{\poisson}[2]{\mathrm{Poiss}_{#1}(#2)}

It is well-known to $\lfloor \lambda \rfloor$ is a mode for a Poisson random variable of parameter $\lambda$ (see for instance \cite[Eq.~4.22]{JohnsonKK2005}).

\begin{lemma}
    \label{lemma:mode-poisson}
For $\lambda > 0$ and for all $k \geq 0$, $\poisson{\lambda}{k} \leq \poisson{\lambda}{\lfloor \lambda \rfloor}$
\end{lemma}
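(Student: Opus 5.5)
The plan is to compare consecutive probabilities of the mass function. Write $\poisson{\lambda}{k} = e^{-\lambda}\lambda^k/k!$ for $k\geq 0$. For $k\geq 1$ the ratio of consecutive terms is
\[
\frac{\poisson{\lambda}{k}}{\poisson{\lambda}{k-1}} = \frac{\lambda}{k}\,.
\]
This ratio is at least $1$ exactly when $k\leq \lambda$, and at most $1$ exactly when $k\geq \lambda$. The sequence $k\mapsto \poisson{\lambda}{k}$ is therefore non-decreasing on $\{0,\ldots,\lfloor\lambda\rfloor\}$ and non-increasing on $\{\lfloor\lambda\rfloor,\lfloor\lambda\rfloor+1,\ldots\}$.

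The key steps, in order, are as follows.

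First, for $k\leq \lfloor\lambda\rfloor$ I chain the inequalities $\poisson{\lambda}{j-1}\leq \poisson{\lambda}{j}$ for $j=k+1,\ldots,\lfloor\lambda\rfloor$. Each holds because $j\leq\lfloor\lambda\rfloor\leq\lambda$ gives $\lambda/j\geq 1$. This yields $\poisson{\lambda}{k}\leq\poisson{\lambda}{\lfloor\lambda\rfloor}$.

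Second, for $k>\lfloor\lambda\rfloor$ I chain $\poisson{\lambda}{j}\leq\poisson{\lambda}{j-1}$ for $j=\lfloor\lambda\rfloor+1,\ldots,k$. Each holds because $j\geq\lfloor\lambda\rfloor+1>\lambda$ gives $\lambda/j<1$.

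Both chains are finite telescoping products, so the statement follows immediately.

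There is no real obstacle here. The only care point is the edge case $\lambda<1$, where $\lfloor\lambda\rfloor=0$. In that case the first chain is empty, and the second chain shows the mass function is decreasing from $k=0$, which is consistent with the claim. When $\lambda$ is an integer, both $\lambda-1$ and $\lambda$ are modes; the argument still works because the ratio equals $1$ at $j=\lambda$.
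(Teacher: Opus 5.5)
Your proof is correct and complete: the ratio $\lambda/k$ of consecutive terms gives monotonicity on both sides of $\lfloor\lambda\rfloor$, and you handle the edge cases $\lambda<1$ and integer $\lambda$ properly. The paper gives no proof of this lemma and only cites it as well known from Johnson, Kemp and Kotz (Eq.~4.22); your consecutive-ratio argument is the standard derivation behind that reference, so it simply makes the appendix self-contained.
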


We will need the following bounds on the maximum probability of a Poisson random variable.

\begin{lemma}
\label{lemma:max-probability-poisson}
For $\lambda\geq 1$, $\Pr(\Poiss{\lambda}=\lfloor\lambda\rfloor)=\frac{\lambda^{\lfloor\lambda\rfloor}}{(\lfloor\lambda\rfloor)!}e^{-\lambda}$ satisfies
    \[
    \frac{e^{-2}}{\sqrt{2\pi \lfloor\lambda\rfloor}} \leq \frac{\lambda^{\lfloor\lambda\rfloor}}{(\lfloor\lambda\rfloor)!}e^{-\lambda}\leq \frac{1}{\sqrt{2\pi \lfloor\lambda\rfloor}}
    \]
\end{lemma}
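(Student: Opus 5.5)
Write $m:=\lfloor\lambda\rfloor\geq 1$ and $p_m:=\frac{\lambda^{m}}{m!}e^{-\lambda}$. The plan is to combine explicit Stirling bounds for $m!$ with elementary bounds on the function $f(\lambda):=m\log\lambda-\lambda$ over the interval $\lambda\in[m,m+1)$.

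For the upper bound, note that for fixed $m$ the function $\lambda\mapsto m\log\lambda-\lambda$ has derivative $m/\lambda-1\leq 0$ on $[m,m+1)$. It is therefore maximised at $\lambda=m$, which gives $p_m\leq \frac{m^m e^{-m}}{m!}$. The lower Stirling bound $m!\geq\sqrt{2\pi m}\,(m/e)^m$, valid for all $m\geq1$, then yields $p_m\leq 1/\sqrt{2\pi m}$ directly.

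For the lower bound, the same monotonicity means the minimum over the interval is approached as $\lambda\to m+1$. This gives
\[
p_m \;\geq\; \frac{(m+1)^m e^{-(m+1)}}{m!} \;=\; \frac{m^m e^{-m}}{m!}\cdot\Bigl(1+\tfrac1m\Bigr)^m e^{-1} \;\geq\; \frac{m^m e^{-m}}{m!}\, e^{-1},
\]
using $(1+1/m)^m\geq 1$; in fact $(1+1/m)^m\geq 2$, which is even better. Next, the upper Stirling bound $m!\leq \sqrt{2\pi m}\,(m/e)^m e^{1/(12m)}$ gives $\frac{m^m e^{-m}}{m!}\geq \frac{e^{-1/12}}{\sqrt{2\pi m}}$. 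Combining the two estimates yields $p_m\geq e^{-1-1/12}/\sqrt{2\pi m}\geq e^{-2}/\sqrt{2\pi m}$.

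No step here looks genuinely hard. The only care needed is to invoke the correct, explicit and non-asymptotic form of Stirling's bounds (Robbins' inequalities) valid for every $m\geq 1$, and to get the direction of monotonicity in $\lambda$ right. The slack in the constant $e^{-2}$ means that even a cruder estimate, such as dropping the $(1+1/m)^m$ factor entirely, still suffices.
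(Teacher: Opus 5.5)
Your proof is correct and is essentially the paper's argument: the paper reduces the upper bound to the value at $\lambda=\lfloor\lambda\rfloor$ via $1+x\le e^x$ applied to the factor $(1+\{\lambda\}/m)^m e^{-\{\lambda\}}$ (equivalent to your monotonicity of $m\log\lambda-\lambda$). For the lower bound it uses $\lambda^m\ge m^m$ and $e^{-\lambda}\ge e^{-(m+1)}$, which is your estimate with the $(1+1/m)^m$ factor dropped, and then applies Robbins' Stirling bounds in both directions, as you do. Your quotation of Robbins' upper bound $m!\le\sqrt{2\pi m}\,(m/e)^m e^{1/(12m)}$ is more careful than the paper's intermediate claim $\frac{m^m e^{-m}}{m!}\ge \frac{e^{-1/(12(m+1))}}{\sqrt{2\pi m}}$, which is false as written (e.g.\ at $m=1$) but harmless for the final $e^{-1}$ bound.
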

\begin{proof}
    Writing $\{\cdot\}$ for the fractional part $\{x\}:=x-\lfloor x\rfloor$, we have
    \[
    \frac{\lambda^{\lfloor\lambda\rfloor}}{(\lfloor\lambda\rfloor)!}e^{-\lambda}= \frac{\lfloor\lambda\rfloor^{\lfloor\lambda\rfloor}}{(\lfloor\lambda\rfloor)!}e^{-\lfloor\lambda\rfloor} \left( 1 + \frac{\{\lambda\}}{\lfloor\lambda\rfloor}\right)^{\lfloor\lambda\rfloor} e^{-\{\lambda\}}
    \]
     By the inequality $1+x\leq e^x$, valid for all $x$, we deduce
    \[
    \left( 1 + \frac{\{\lambda\}}{\lfloor\lambda\rfloor}\right)^{\lfloor\lambda\rfloor} e^{-\{\lambda\}}\leq 1\,.
    \]
    While, by the classical inequalities for the Stirling formula~\cite{Rob55} we have
    \[
    \frac{\lfloor\lambda\rfloor^{\lfloor\lambda\rfloor}}{(\lfloor\lambda\rfloor)!}e^{-\lfloor\lambda\rfloor} \leq \frac{1}{\sqrt{2\pi \lfloor\lambda\rfloor}}\,.
    \]

    For the lower bound, it is clear that $ \frac{\lambda^{\lfloor\lambda\rfloor}}{(\lfloor\lambda\rfloor)!}e^{-\lambda} \geq  \frac{\lfloor\lambda\rfloor^{\lfloor\lambda\rfloor}}{(\lfloor\lambda\rfloor)!}e^{-(1+\lfloor\lambda\rfloor)}$
            By the classical inequalities for Stirling formula~\cite{Rob55}
    \[
    \frac{\lfloor\lambda\rfloor^{\lfloor\lambda\rfloor}}{(\lfloor\lambda\rfloor)!}e^{-\lfloor\lambda\rfloor} \geq \frac{1}{\sqrt{2\pi \lfloor\lambda\rfloor}} e^{-1/(12(\lfloor\lambda\rfloor+1))} \geq \frac{e^{-1}}{\sqrt{2\pi \lfloor\lambda\rfloor}} \,,
    \]
    thus the proof is complete. \qedhere
\end{proof}

The deviation of a Poisson random variable $X$ with parameter $\lambda >0$ from its mean satisfies: \[
\mathbb{E}[|X-\lambda|] = 2 \frac{\lambda^{\lfloor\lambda\rfloor+1}}{(\lfloor\lambda\rfloor)!} e^{-\lambda}\,,
\]
see \cite[Eq.~4.19]{JohnsonKK2005}.

\begin{proposition}
    \label{prop:deviation-mean-poisson}
    Let $\lambda \geq 1$ and let $X$ be a Poisson random variable with parameter $\lambda$. Then
    \[
    \frac{2}{e^2}\sqrt{\frac{2}{\pi}\lambda}\leq \mathbb{E}[|X-\lambda|] \leq \sqrt{\frac{2}{\pi}\lambda} + \sqrt{\frac{2}{\pi\lfloor\lambda\rfloor}} \,.
    \]
\end{proposition}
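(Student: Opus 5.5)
The plan is to start from the closed form quoted just before the statement, $\mathbb{E}[|X-\lambda|] = 2\,\frac{\lambda^{\lfloor\lambda\rfloor+1}}{(\lfloor\lambda\rfloor)!}e^{-\lambda} = 2\lambda\cdot \Pr(\Poiss{\lambda}=\lfloor\lambda\rfloor)$. Both bounds then reduce to the two-sided estimate of Lemma~\ref{lemma:max-probability-poisson} on the modal probability, multiplied by $2\lambda$. Write $m:=\lfloor\lambda\rfloor\geq 1$ and $f:=\{\lambda\}\in[0,1)$, so that $\lambda = m+f$.

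\textbf{Upper bound.} Lemma~\ref{lemma:max-probability-poisson} gives
\[
\mathbb{E}[|X-\lambda|]\leq \frac{2\lambda}{\sqrt{2\pi m}} = \sqrt{\frac{2}{\pi}}\,\frac{m+f}{\sqrt m} = \sqrt{\frac{2}{\pi}}\Big(\sqrt m + \frac{f}{\sqrt m}\Big).
\]
Now use $\sqrt m\leq \sqrt\lambda$ and $f<1$. This yields $\sqrt{\tfrac{2}{\pi}\lambda} + \sqrt{\tfrac{2}{\pi}}\tfrac{1}{\sqrt m}$, which is exactly the stated $\sqrt{\tfrac{2}{\pi}\lambda}+\sqrt{\tfrac{2}{\pi\lfloor\lambda\rfloor}}$.

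\textbf{Lower bound.} Lemma~\ref{lemma:max-probability-poisson} gives
\[
\mathbb{E}[|X-\lambda|]\geq \frac{2\lambda e^{-2}}{\sqrt{2\pi m}}.
\]
Since $m\leq\lambda$, we have $\lambda/\sqrt m\geq \sqrt\lambda$. Hence this is at least $2e^{-2}\sqrt{\lambda}/\sqrt{2\pi} = \tfrac{2}{e^2}\sqrt{\tfrac{1}{2\pi}\lambda}$. This is off from the claimed $\tfrac{2}{e^2}\sqrt{\tfrac{2}{\pi}\lambda}$ by a factor of $2$, so the crude lemma constant is not quite enough.

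\textbf{Main obstacle: the constant in the lower bound.} To recover the factor of $2$, I would redo the lower estimate of the modal probability more carefully rather than quote the lemma. Write
\[
\frac{\lambda^m}{m!}e^{-\lambda} = \frac{m^m e^{-m}}{m!}\,(1+f/m)^m e^{-f}.
\]
Robbins' bound gives $\frac{m^me^{-m}}{m!}\geq \frac{e^{-1/12}}{\sqrt{2\pi m}}$. For the second factor, use $(1+f/m)^m\geq 1+f$, which gives $(1+f)e^{-f}\geq 2/e$ for $f\in[0,1)$. Combining these with $\lambda/\sqrt m\geq\sqrt\lambda$ gives
\[
\mathbb{E}[|X-\lambda|]\geq 2\cdot\frac{2}{e}\,e^{-1/12}\sqrt{\frac{\lambda}{2\pi}} = \frac{4e^{-13/12}}{\sqrt{2\pi}}\sqrt\lambda \approx 0.54\sqrt\lambda.
\]
The target constant is $\tfrac{2}{e^2}\sqrt{2/\pi}\approx 0.216$, so this comfortably exceeds it. If I wanted to avoid Robbins, the fallback would be to check the loss factor $\sqrt{1/4}$ against the slack available directly.
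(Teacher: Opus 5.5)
Your proof is correct and follows the paper's route. You use the closed form $\mathbb{E}[|X-\lambda|]=2\lambda\Pr(\Poiss{\lambda}=\lfloor\lambda\rfloor)$, the same upper-bound computation via Lemma~\ref{lemma:max-probability-poisson}, and a Robbins-type lower bound on the modal probability combined with $\lambda/\sqrt{\lfloor\lambda\rfloor}\geq\sqrt{\lambda}$.

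The factor-of-$2$ shortfall you flagged is genuine, and it is also present in the paper's own lower-bound chain. That chain discards $(1+f/m)^m$, bounds $e^{-f}$ by $e^{-1}$, and weakens the Robbins factor to $e^{-1}$, so it only reaches $\frac{2}{e^2}\sqrt{\lambda/(2\pi)}$. Your sharper estimates, $(1+f/m)^m e^{-f}\geq 2/e$ together with $e^{-1/12}$, are what actually deliver the stated constant, with room to spare ($\approx 0.54$ versus $\approx 0.216$).
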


\begin{proof}
    By the inequality for the mode, see Lemma~\ref{lemma:max-probability-poisson}, we find
    \[
    \mathbb{E}[|X-\lambda|]  \leq 2 \frac{\lambda}{\sqrt{2\pi \lfloor\lambda\rfloor}} = 2 \frac{\lfloor\lambda\rfloor+\{\lambda\}}{\sqrt{2\pi \lfloor\lambda\rfloor}}  = \sqrt{\frac{2}{\pi}\lfloor\lambda\rfloor} + \{\lambda\}\times \sqrt{\frac{2}{\pi \lfloor\lambda\rfloor}}\,.
    \]

    For the lower bound
    \[
    \mathbb{E}[|X-\lambda|]  \geq 2 \lambda \frac{\lfloor\lambda\rfloor^{\lfloor\lambda\rfloor}}{(\lfloor\lambda\rfloor)!}e^{-\lfloor\lambda\rfloor} e^{-\{\lambda\}} \geq 2e^{-1}\lambda \frac{\lfloor\lambda\rfloor^{\lfloor\lambda\rfloor}}{(\lfloor\lambda\rfloor)!}e^{-\lfloor\lambda\rfloor} \,.
    \]
    Again, by the classical inequalities for Stirling formula~\cite{Rob55}
    \[
    \frac{\lfloor\lambda\rfloor^{\lfloor\lambda\rfloor}}{(\lfloor\lambda\rfloor)!}e^{-\lfloor\lambda\rfloor} \geq \frac{1}{\sqrt{2\pi \lfloor\lambda\rfloor}} e^{-1/(12(\lfloor\lambda\rfloor+1))} \geq \frac{e^{-1}}{\sqrt{2\pi \lambda}} \,.
    \]
    Thus the inequality follows.
\end{proof}

\section{Bounds on the maximum and minimum of iid Poisson random variables}

In this section, we derive bounds on the maximum and minimum of iid Poisson random variables using the following general lemma.

\begin{restatable}{lemma}{lemmaineqmaxmin}
    \label{lemma:inequality-max-exp-rv}
    Fix $k\geq 1$ and consider iid random variates $X_1,\ldots,X_k$ taking on values on $\mathbb{Z}_{\geq 0}$, with probabilities $p_j = P(X_i=j)$. Let $F(x)=\sum p_j x^j$ be their pgf. Assume $F(x)$ to be convergent on $J:=(1,C)$, where $C>1$ might be infinite.

    \begin{enumerate}
        \item[a)] For every $a\in J$,
    \[
    \mathbb{E}[\max(X_1,\ldots,X_k)]\leq \frac{\log k + \log F(a)}{\log a}\,.
    \]

        \item[b)] For every  $a\in (0,1)$,
    \[
    \mathbb{E}[\min(X_1,\ldots,X_k)]\geq \frac{-\log k + \log (1/F(a))}{\log (1/a)}\,.
    \]
    \end{enumerate}
\end{restatable}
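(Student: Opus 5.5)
We use the standard exponential-moment (Chernoff-type) argument. Jensen's inequality is applied to the convex map $x\mapsto a^x$, and the exponential of a maximum is bounded by the sum of exponentials.

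For part a), fix $a\in J$ and set $M:=\max(X_1,\ldots,X_k)$. Since $a>1$, the function $x\mapsto a^x$ is convex and increasing. Jensen gives $a^{\E[M]}\leq \E[a^{M}]$. Next, $a^{M}=\max_i a^{X_i}\leq \sum_{i=1}^k a^{X_i}$ pointwise, because all terms are nonnegative. Taking expectations, $\E[a^M]\leq k\,\E[a^{X_1}]=k F(a)$, and this quantity is finite because $a\in J$. Taking logarithms and dividing by $\log a>0$ gives $\E[M]\leq \frac{\log k+\log F(a)}{\log a}$. If $\E[M]$ were infinite, Jensen would still make sense in the extended reals; alternatively, the bound $\E[a^M]<\infty$ already forces $\E[M]<\infty$.

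Part b) is symmetric. Fix $a\in(0,1)$ and set $m:=\min(X_1,\ldots,X_k)$. The function $x\mapsto a^x$ is convex and decreasing, so Jensen again yields $a^{\E[m]}\leq \E[a^{m}]$. Now $a^{m}=\max_i a^{X_i}\leq\sum_i a^{X_i}$, hence $\E[a^m]\leq kF(a)$. Here $F(a)\leq 1$ converges trivially and $\E[m]\leq \E[X_1]$. Taking logarithms gives $\E[m]\log a\leq \log k+\log F(a)$. Dividing by $\log a<0$ flips the inequality, so $\E[m]\geq \frac{\log k+\log F(a)}{\log a}=\frac{-\log k+\log(1/F(a))}{\log(1/a)}$.

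There is no real obstacle. The only points needing care are the direction of the inequality when dividing by $\log a<0$ in b), and justifying Jensen's inequality when the expectations might be infinite. For b), $m\geq 0$ and $a^m\in(0,1]$, so both sides are finite as soon as $\E[m]<\infty$. In the case $\E[m]=\infty$ the lower bound holds trivially.
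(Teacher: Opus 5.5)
Your proof is correct and follows essentially the same route as the paper: Jensen's inequality for the convex map $x\mapsto a^x$, the pointwise bound $a^{\max_i X_i}\le\sum_i a^{X_i}$ (and $a^{\min_i X_i}=\max_i a^{X_i}\le\sum_i a^{X_i}$ when $a<1$), then taking logarithms and dividing by $\log a$, with the sign flip in part b). Your remarks on the finiteness of the expectations are a small addition that the paper leaves implicit.
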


\begin{proof}
    By Jensen's inequality $a^{\E[\max(X_1,\ldots,X_k)]} \leq \E[a^{\max(X_1,\ldots,X_k)}]$, for $a\in (0,1)$ or $a>1$. Then
\[
a^{\E[\max(X_1,\ldots,X_k)]} \leq \E[a^{\max(X_1,\ldots,X_k)}] \leq \E\Big[\sum_{i=1}^k a^{X_i}\Big] = k \E[a^X] \,.
\]
\vspace{-0.1cm}
As $\E[a^X] = F(a)$, the proof follows by taking logs. When $a>1$ we have that $\log a > 0$ and the sense of the inequalities are preserved.

For the minimum, the same inequalities $a^{\mathbb{E}[\min(X_1,\ldots,X_k)]}\leq k \mathbb{E}[a^X]$ holds for all $a>0$. If $a<1$, then $\log a < 0$ and the sense of the inequalities is reversed.
\end{proof}

\subsection{Bounds for Poisson random variables}

We derive concrete bounds for the maximum and minimum of iid Poisson random variables using Lemma~\ref{lemma:inequality-max-exp-rv} by optimizing over the parameter $a$.

\begin{corollary}
    \label{cor:max-k-poiss}
        Let $\lambda > 0$ and $k\in \mathbb{Z}_{\geq 1}$ such that  $\log k \geq \lambda$. Let $X_1,\ldots,X_k$ be iid Poisson random variables with parameter $\lambda$, then
        \[
        \mathbb{E}[\max(X_1,\ldots,X_k)] \leq 3 \log k\,.
        \]
\end{corollary}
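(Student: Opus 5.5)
We apply part a) of Lemma~\ref{lemma:inequality-max-exp-rv} to the Poisson pgf $F(a)=e^{\lambda(a-1)}$, which converges for every $a>1$ (so $C=\infty$). This gives, for every $a>1$,
\[
\mathbb{E}[\max(X_1,\ldots,X_k)]\leq \frac{\log k + \lambda(a-1)}{\log a}\,.
\]
The only remaining task is to choose $a$ well, using the hypothesis $\lambda\leq \log k$ to absorb the $\lambda$ term.

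Since $\lambda\leq\log k$, the numerator is at most $\log k\,(1+(a-1))=a\log k$. Hence
\[
\mathbb{E}[\max(X_1,\ldots,X_k)]\leq \frac{a}{\log a}\,\log k .
\]
The function $a\mapsto a/\log a$ on $(1,\infty)$ is minimized at $a=e$, where it equals $e<3$. Taking $a=e$ therefore yields $\mathbb{E}[\max]\leq e\log k\leq 3\log k$.

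There is one degenerate case to settle. If $k=1$, then $\log k=0$ forces $\lambda\leq 0$, which contradicts $\lambda>0$. So $k\geq 2$ and $\log k>0$, and the bound is meaningful.

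There is no real obstacle here: the only choices are to bound $\lambda$ by $\log k$ inside the numerator and to take $a=e$.
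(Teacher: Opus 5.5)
Your proof is correct and follows the paper's own argument: apply part a) of Lemma~\ref{lemma:inequality-max-exp-rv} with the Poisson pgf $F(a)=e^{\lambda(a-1)}$, take $a=e$, and use $\lambda\le\log k$ to reach $e\log k\le 3\log k$. Your observation that $k=1$ is ruled out by $\lambda>0$ is a small extra check that the paper does not make.
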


\begin{proof}
        We employ Lemma~\ref{lemma:inequality-max-exp-rv}. Remark that $F(x) = e^{\lambda (x-1)}$ in this case, and that the hypothesis are clearly verified as the radius of convergence is infinite.

    Set $a=e$ in Lemma~\ref{lemma:inequality-max-exp-rv}. We deduce, as $\lambda \leq \log k$  by hypothesis
    \[
    \mathbb{E}[\max(X_1,\ldots,X_k)] \leq \log k + \lambda\times (e-1) \leq e\times \log k\,.
    \]
    The final inequality follows from $e<3$.
\end{proof}

\begin{proposition}
    \label{prop:diff-max-min-k-poisson}
    Let $\lambda > 0$ and $k\in \mathbb{Z}_{\geq 1}$ such that  $\log k < \lambda$. Let $X_1,\ldots,X_k$ be iid Poisson random variables with parameter $\lambda$, then
    \[
    \mathbb{E}[\max(X_1,\ldots,X_k)-\min(X_1,\ldots,X_k)] \leq 4 \sqrt{\lambda \log k}\,.
    \]
\end{proposition}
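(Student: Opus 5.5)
We apply Lemma~\ref{lemma:inequality-max-exp-rv} twice, once for the maximum and once for the minimum, with $F(x)=e^{\lambda(x-1)}$, and then optimize the free parameter.

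\textbf{Maximum.} Take $a=e^{t}$ with $t>0$ in part a). This gives
\[
\mathbb{E}[\max] \leq \frac{\log k + \lambda(e^t-1)}{t}.
\]

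\textbf{Minimum.} Take $a=e^{-t}$ in part b). This gives
\[
\mathbb{E}[\min] \geq \frac{-\log k - \lambda(e^{-t}-1)}{t} = \frac{\lambda(1-e^{-t})-\log k}{t}.
\]

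\textbf{Combining.} Subtracting, and using linearity of expectation for the difference,
\[
\mathbb{E}[\max-\min] \leq \frac{2\log k + \lambda\,(e^t+e^{-t}-2)}{t} = \frac{2\log k}{t} + \frac{2\lambda(\cosh t - 1)}{t}.
\]

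\textbf{Choice of $t$.} Set $t=\sqrt{\log k/\lambda}$. The hypothesis $\log k<\lambda$ guarantees $t<1$. This is the only place the hypothesis enters, and it is also what allows the exponential terms to be controlled.

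\textbf{Main estimate.} The key step is bounding $\cosh t-1$ on $(0,1)$. For $t\in(0,1)$ the series gives
\[
\cosh t - 1 = \frac{t^2}{2}\sum_{j\geq 0}\frac{2\,t^{2j}}{(2j+2)!} \leq \frac{t^2}{2}\,\cosh(1)\cdot c,
\]
and more simply $\cosh t-1\leq \tfrac{t^2}{2}(1+t^2/12+\cdots)\leq 0.55\,t^2$. Hence the second term is at most $1.1\lambda t$.

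\textbf{Conclusion.} With this $t$, the first term equals $2\sqrt{\lambda\log k}$ and the second is at most $1.1\sqrt{\lambda\log k}$. The total is at most $3.1\sqrt{\lambda\log k}\leq 4\sqrt{\lambda\log k}$. The constant $4$ leaves ample slack, so even a crude bound on $\cosh$ suffices.

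The case $k=1$ is trivial: then $\max=\min$, so both sides are $0$.
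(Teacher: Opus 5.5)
Your proof is correct and follows the paper's route: both apply Lemma~\ref{lemma:inequality-max-exp-rv} a) to the maximum and b) to the minimum, with the parameter at distance $\sqrt{\log k/\lambda}$ from $1$. You parametrize by $a=e^{\pm t}$ instead of the paper's $a=1\pm t$, which moves the elementary estimate from $\log(1\pm t)$ to $\cosh t-1=\sum_{j\ge 0}t^{2j+2}/(2j+2)!\le(\cosh 1-1)\,t^2$ on $(0,1)$; this gives the sharper constant $\approx 3.1$, and you also treat $k=1$ explicitly. Delete the stray first inequality in your main estimate, the one involving the undefined constant $c$, since the series bound just stated already suffices.
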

\begin{proof}
    We employ Lemma~\ref{lemma:inequality-max-exp-rv}. Remark that $F(x) = e^{\lambda (x-1)}$ in this case, and that the hypothesis are clearly verified as the radius of convergence is infinite.

    Let $a=1 + \sqrt{\tfrac{\log k}{\lambda}}$ on the item~(a) from Lemma~\ref{lemma:inequality-max-exp-rv}. We have
    \[
    \mathbb{E}[\max(X_1,\ldots,X_k)] \leq \frac{\log k + \lambda \sqrt{\tfrac{\log k}{\lambda}}}{ \log (1 + \sqrt{\tfrac{\log k}{\lambda}})} \,.
    \]

    We remark that, for $t\in [0,1)$, $\log(1+t)\geq t - \tfrac{t^2}{2}$. Thus, setting $t= \sqrt{\tfrac{\log k}{\lambda}}$, we deduce $\log (1 + \sqrt{\tfrac{\log k}{\lambda}}) \geq  \sqrt{\tfrac{\log k}{\lambda}} \times (1  - \tfrac{1}{2} \sqrt{\tfrac{\log k}{\lambda}})$.

    Remark that, for $t\in (0,1)$, \[\tfrac{1}{1-\tfrac{1}{2}t} = 1 + \tfrac{1}{2}t+\tfrac{1}{4}t^2+\ldots \leq 1+ t (\tfrac{1}{2}+\tfrac{1}{4}+\ldots) = 1 + t\,. \]
    Thus we conclude, combining these last two inequalities, \[
    \tfrac{1}{\log(1+ \sqrt{\tfrac{\log k}{\lambda}})} \leq \frac{1}{ \sqrt{\tfrac{\log k}{\lambda}} \times (1 - \tfrac{1}{2} \sqrt{\tfrac{\log k}{\lambda}})} \leq \frac{1+ \sqrt{\tfrac{\log k}{\lambda}}}{ \sqrt{\tfrac{\log k}{\lambda}}}\,.
    \]
    We deduce
    \[
    \mathbb{E}[\max(X_1,\ldots,X_k)] \leq \left(\log k + \lambda \sqrt{\tfrac{\log k}{\lambda}}\right)\times \frac{1+ \sqrt{\tfrac{\log k}{\lambda}}}{ \sqrt{\tfrac{\log k}{\lambda}}}\ = \lambda + 2 \sqrt{\lambda \log k} + \log k\,.
    \]

    Next, set $a=1 - \sqrt{\tfrac{\log k}{\lambda}} > 0$ in item~(b) of Lemma~\ref{lemma:inequality-max-exp-rv},
        \[
    \mathbb{E}[\min(X_1,\ldots,X_k)] \geq \frac{-\log k + \lambda \sqrt{\tfrac{\log k}{\lambda}}}{ \log \left(\nicefrac{1}{(1- \sqrt{\frac{\log k}{\lambda}})}\right)} \,.
    \]
    We remark that the numerator is positive, indeed
    \[
    -\log k + \lambda \sqrt{\tfrac{\log k}{\lambda}} = -\log k + \sqrt{\lambda \log k} > 0\Leftrightarrow \sqrt{\lambda} > \sqrt{\log k} \Leftrightarrow \lambda > \log k\,.
    \]
    Thus we concentrate on the denominator. We recall that $\log \frac{1}{1-t}  = t + \tfrac{t^2}{2}+\tfrac{t^3}{3}+\ldots\leq t + t^2+t^3+\ldots = \frac{t}{1-t}$ for all $t\in (0,1)$. It follows that
    \[
    \frac{1}{ \log \left(\nicefrac{1}{(1- \sqrt{\frac{\log k}{\lambda}})}\right)} \geq \frac{1-\sqrt{\frac{\log k}{\lambda}}}{\sqrt{\frac{\log k}{\lambda}}}\,.
    \]
    We obtain, as the numerator is positive,
    \[
    \mathbb{E}[\min(X_1,\ldots,X_k)] \geq (-\log k + \lambda \sqrt{\tfrac{\log k}{\lambda}})\times \frac{1-\sqrt{\frac{\log k}{\lambda}}}{\sqrt{\frac{\log k}{\lambda}}} =  \lambda - 2 \sqrt{\lambda \log k} + \log k \,.
    \]
    Taking the differences between the inequalities we obtain the bound.
\end{proof}

Lastly, we consider the regime $k\gg n$. Consider $k = \Omega( n^{\theta})$ for some $\theta>1$, we show that $\E[\max(X_1,\ldots,X_k)]$ is then bounded by a constant depending only on $\theta$.

\begin{proposition}
    Let $k \geq n$ be positive integers such that $k \geq n^\theta$ for some fixed $\theta > 1$. Consider iid random variables $X_1,\ldots,X_k$ distributed according to ${\tt Poiss}(\omega)$ where $\omega=\nicefrac{n}{k}$.

    Then we have \begin{equation}
        \mathbb{E}[\max(X_1,\ldots,X_k)] \leq \frac{1}{1-\tfrac{1}{\theta}} \times \left(1 + \tfrac{1}{\log k}\right)\,.
    \end{equation}
\end{proposition}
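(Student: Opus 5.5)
The plan is to apply Lemma~\ref{lemma:inequality-max-exp-rv}(a) to the Poisson pgf $F(x)=e^{\omega(x-1)}$, which has infinite radius of convergence, with a single well-chosen value of $a$, exactly as in Corollary~\ref{cor:max-k-poiss}. For every $a>1$ this gives
\[
\mathbb{E}[\max(X_1,\ldots,X_k)] \leq \frac{\log k + \omega(a-1)}{\log a} \leq \frac{\log k + \omega a}{\log a}.
\]

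First I will turn the hypothesis into a bound on $\omega$. From $k\geq n^\theta$ we get $n\leq k^{1/\theta}$. Hence
\[
\omega=\frac{n}{k}\leq k^{\frac{1}{\theta}-1}.
\]

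Next I choose $a := k^{1-1/\theta}$. This makes $\omega a\leq 1$, so the numerator is at most $\log k+1$. It also makes $\log a=(1-\tfrac{1}{\theta})\log k$. Substituting both facts gives
\[
\mathbb{E}[\max(X_1,\ldots,X_k)]\leq \frac{\log k+1}{(1-\frac1\theta)\log k} = \frac{1}{1-\tfrac{1}{\theta}}\left(1+\tfrac{1}{\log k}\right),
\]
which is the claimed bound.

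The only point needing care is the requirement $a>1$ in Lemma~\ref{lemma:inequality-max-exp-rv}. Since $\theta>1$, this holds exactly when $k\geq 2$. In the degenerate case $k=1$ we have $\log k=0$, so the right-hand side is infinite (or undefined) and there is nothing to prove. Thus the proof is essentially a one-line optimization, and this choice of $a$ is the only real step.
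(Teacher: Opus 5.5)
Your proof is correct and follows the paper's argument: a single application of Lemma~\ref{lemma:inequality-max-exp-rv}(a) to $F(x)=e^{\omega(x-1)}$. The only difference is the choice of $a$. The paper takes $a=1/\omega=k/n$, gets $\frac{\log k+1}{\log k-\log n}$, and only then uses $\log n/\log k\leq 1/\theta$; you apply the hypothesis first and take $a=k^{1-1/\theta}\leq k/n$, which gives the stated bound directly.
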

\begin{proof}
Pick $a = 1/\omega = k/n$ in item~(a) of Lemma~\ref{lemma:inequality-max-exp-rv}, then we obtain       $
    \mathbb{E}[\max(X_1,\ldots,X_k)] \leq  \frac{\log k + 1}{\log k - \log n} = \frac{1}{1-\tfrac{\log n}{\log k}} \times\left(1+ \tfrac{1}{\log k}\right)\,. $ Since $\log n / \log k  \leq 1/\theta$, we are done.
\end{proof}

\section{Inequalities for the maximum of constrained Poisson}

\label{appendix:length-surjections-large-k}

In this section we give simple bounds for the maximum of  iid Poisson random variables constrained to be non-zero using \autoref{lemma:inequality-max-exp-rv}.

 \begin{proposition}
    \label{prop:profile-k-poiss-1}
     Let $\lambda > 0$ and let $k\in \mathbb{Z}_{\geq 1}$. Let $X_1,\ldots,X_k$ be iid random variables distributed according to ${\tt Poiss}_{\geq 1}(\lambda)$.

     Then
     \[
    \max(X_1,\ldots,X_k) \leq 1 + \frac{\log k + \lambda}{\log 2}\,.
     \]
 \end{proposition}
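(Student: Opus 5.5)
The statement is surely meant in expectation, $\mathbb{E}[\max(X_1,\ldots,X_k)] \leq 1 + \frac{\log k+\lambda}{\log 2}$, since the maximum of Poisson variables is not almost surely bounded. The plan is to apply item~(a) of Lemma~\ref{lemma:inequality-max-exp-rv} with $a=2$. This choice of $a$ is what produces the denominator $\log 2$ in the claimed bound. The probability generating function of ${\tt Poiss}_{\geq 1}(\lambda)$ is
\[
F(x)=\frac{e^{\lambda x}-1}{e^{\lambda}-1}.
\]
It is entire, so the hypothesis on $J=(1,\infty)$ holds, and the lemma gives
\[
\mathbb{E}[\max(X_1,\ldots,X_k)]\leq \frac{\log k+\log F(2)}{\log 2},\qquad F(2)=\frac{e^{2\lambda}-1}{e^{\lambda}-1}=e^{\lambda}+1.
\]

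It then remains to bound $\log(e^\lambda+1)\leq \lambda+\log 2$. This holds because $e^\lambda+1\leq 2e^\lambda$ for $\lambda\geq 0$. Substituting this bound yields
\[
\frac{\log k+\lambda+\log 2}{\log 2}=1+\frac{\log k+\lambda}{\log 2},
\]
which is exactly the claimed inequality.

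No step is a genuine obstacle. The only points needing care are the closed form $F(2)=e^\lambda+1$, obtained by factoring $e^{2\lambda}-1=(e^\lambda-1)(e^\lambda+1)$, and confirming that the elementary bound on $\log(e^\lambda+1)$ is tight enough to produce the stated constant $1$.
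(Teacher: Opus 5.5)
Your proof is correct and is exactly the paper's argument: apply item~(a) of Lemma~\ref{lemma:inequality-max-exp-rv} with $a=2$, use $F(2)=e^\lambda+1\leq 2e^\lambda$, and simplify. You are also right that the statement must be read as a bound on $\mathbb{E}[\max(X_1,\ldots,X_k)]$; that is how the paper's proof (via the lemma) and its later use in Proposition~\ref{maximum-high-part} treat it.
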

\begin{proof}
        We employ Lemma~\ref{lemma:inequality-max-exp-rv}. Remark that $F(x) = \frac{e^{\lambda x} - 1}{e^\lambda - 1}$ in this case, and that the hypothesis are clearly verified as the radius of convergence is infinite.

        Set $a=2$ in Lemma~\ref{lemma:inequality-max-exp-rv}. Then
        \[
        F(2) = \frac{e^{2\lambda}-1}{e^\lambda -  1} = e^{\lambda}+1 \leq 2 e^\lambda\,.
        \]
        From the inequality we obtain
        \[
        \max(X_1,\ldots,X_k) \leq \frac{\log k + \lambda + \log 2}{\log 2} \,,
        \]
        as was to be proved.
\end{proof}

As an consequence we obtain the following proposition
\profilehigh*
\begin{proof}
    We apply Proposition~\ref{prop:profile-k-poiss-1}. Notice that $1\leq \nicefrac{n}{k}\leq \log n$, $k\leq n$, and  $\omega\leq \frac{n}{k}\leq \log n$, thus $\E[\max(X_1,\ldots,X_k)]\leq (1 + \tfrac{2}{\log 2})\log n$. Then the inequality follows from $1 + \tfrac{2}{\log 2}\leq 4$.
\end{proof}

These results, however, are quite pessimistic for the case $\alpha=k/n\to 1$.
The following result shows that the expected value is constant as soon as $m(n)=O(n^\theta)$ for $\theta<1$.
\begin{proposition}
    Let $n > k$ be positive integers such that $n - k = m(n)$ with $m(n)=O(n^\theta)$ for some fixed $\theta \in (0,1)$. Consider iid random variables $X_1,\ldots,X_k$ distributed according to ${\tt Poiss}_{\geq 1}(\omega)$ where $\omega=\omega(n,k)$ satisfies the saddle-point equation $k\omega \frac{e^\omega}{e^\omega - 1 }=n$.

    Then we have \begin{equation}
        \mathbb{E}[\max(X_1,\ldots,X_k)] \leq 2 + \tfrac{\theta}{1-\theta} + o_\theta(1)\,.
    \end{equation}
\end{proposition}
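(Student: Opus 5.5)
We apply item~(a) of Lemma~\ref{lemma:inequality-max-exp-rv} with the pgf $F(x)=\frac{e^{\omega x}-1}{e^\omega-1}$ of ${\tt Poiss}_{\geq 1}(\omega)$. The idea is to take $a$ of order $1/\omega$, exactly as in the $k\geq n^\theta$ proposition for plain Poisson variables. The point is that $\omega$ is small in this regime.

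By Proposition~\ref{prop:properties-omega-surjections}, $k\omega\leq 2\Delta$ with $\Delta=n-k=m(n)=O(n^\theta)$. Hence $\omega\leq 2m(n)/k$. Since $k=n-m(n)\sim n$, this gives $\omega=O(n^{\theta-1})\to 0$ and $\log(1/\omega)\geq(1-\theta)\log n-O(1)$.

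Choose $a=1/\omega$, which is admissible since the radius of convergence is infinite and $a>1$ for large $n$. Then
\[
F(a)=\frac{e-1}{e^\omega-1}\leq \frac{e-1}{\omega},
\]
using $e^\omega-1\geq \omega$. Lemma~\ref{lemma:inequality-max-exp-rv}(a) yields
\[
\E[\max(X_1,\ldots,X_k)]\leq \frac{\log k+\log(e-1)+\log(1/\omega)}{\log(1/\omega)} = 1+\frac{\log k+O(1)}{\log(1/\omega)}.
\]
Using $\log k\leq\log n$ together with the lower bound $\log(1/\omega)\geq (1-\theta)\log n - O(1)$, the fraction is at most $\frac{1}{1-\theta}+o_\theta(1)$. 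Since $\frac{1}{1-\theta}=1+\frac{\theta}{1-\theta}$, the total is $2+\frac{\theta}{1-\theta}+o_\theta(1)$, as claimed.

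The main obstacle is the lower bound on $\log(1/\omega)$, and it must hold uniformly. We need $\omega\leq C n^{\theta-1}$ with an explicit constant, so that $\log(1/\omega)\geq(1-\theta)\log n-\log C'$ and the $O(1)$ terms are absorbed into $o_\theta(1)$. This uses $k\geq n/2$ for $n$ large, which follows since $m(n)=o(n)$. The argument also needs $\log(1/\omega)>0$, i.e.\ $a>1$, which only holds for $n$ large enough; that is harmless for an asymptotic statement.

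One could refine the choice $a=c/\omega$ to improve constants. This is not needed, because $a=1/\omega$ already gives exactly the stated bound.
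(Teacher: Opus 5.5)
Your proposal is correct and follows the paper's proof: the same choice $a=1/\omega$ in item~(a) of Lemma~\ref{lemma:inequality-max-exp-rv}, with $\omega\to 0$ at rate $n^{\theta-1}$ giving $\log(1/\omega)\geq(1-\theta)\log n-O(1)$. You fill in details the paper leaves implicit: the bound $F(1/\omega)\leq (e-1)/\omega$, and using only the upper bound $k\omega\leq 2\Delta$ from Proposition~\ref{prop:properties-omega-surjections} in place of the asymptotic $\omega\sim 2m(n)/n$.
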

\begin{proof}
    In this case $\omega = \omega(n) \sim 2 m(n)/n \to 0$. Pick $a = 1/\omega(n)$ in item~(a) of Lemma~\ref{lemma:inequality-max-exp-rv}, then we obtain       $
    \mathbb{E}[\max(X_1,\ldots,X_k)] \leq  2 + \frac{\log m(n)}{\log n - \log m(n)} + o_\theta(1)\,.$
\end{proof}

\section{The algorithms for surjections with $k\geq \tfrac{n}{\log n}$}

\subsection{The acceptance probability for constrained Poisson}

\label{appendix:properties-omega-surjections}

In this section we prove the following theorem regarding the acceptance probability of the exact-size Boltzmann sampler for surjections.

\acceptanceprobasurjection*

First we need to characterize the optimal parameter $\omega=\omega(n,k)$
used in the Boltzmann sampler. The following proposition characterizes the main properties we will need, for the proof of Theorem~\ref{thm:order-acceptance-prob-surj}, as well as later on.

\propertiesomega*
\begin{proof}
    Write $\Delta=n-k$ and note  \[
        k\omega - \Delta = k\times \frac{e^\omega - 1 - \omega}{e^\omega - 1}\,.
    \]
    Clearly the right-hand side is positive, which implies $k\omega \geq \Delta$. Observe that this is equivalent to $\omega \geq \tfrac{n}{k}-1>0$.

    On the other hand $e^\omega - 1 - \omega \leq \tfrac{1}{2}\omega \times (e^\omega - 1)$, just by comparing coefficients. Indeed,
    \[e^\omega-1-\omega = \tfrac{\omega^2}{2} + \tfrac{\omega^3}{3!}+\tfrac{\omega^4}{4!}+\ldots\leq \tfrac{\omega}{2} \omega + \tfrac{\omega}{2} \tfrac{\omega^2}{2!} + \tfrac{\omega}{2} \tfrac{\omega^3}{3!}+\ldots  = \tfrac{\omega}{2}\times (e^\omega - 1)\] follows by comparing term by term.
    Thus
    \[
    k\omega - \Delta \leq \tfrac 1 2 k\omega\,,
    \]
    which shows that $\Delta \geq \tfrac{1}{2}k \omega$.

    For the last inequality we remark that $\frac{e^\omega - 1 - \omega}{e^\omega - 1} \geq \tfrac{\omega}{2} - \tfrac{\omega^2}{12}$ by a similar argument.   The result follows upon multiplying by $2k$ since $k\omega - \Delta = k\times \frac{e^\omega - 1 - \omega}{e^\omega - 1}$.
\end{proof}

We recall that the acceptance probability of the Boltzmann sampler

\begin{lemma}
    \label{thm:saddle-point}
    Consider integers $1\leq k < n$ and let $\omega=\omega(n,k) \in (0,\infty)$ be the solution of the saddle-point equation $k\omega \frac{e^\omega}{e^\omega-1}=n$.

    Then, uniformly on $n>k\geq 1$, we have
    \[
        a(n,k) = \sqrt{\frac{1}{2\pi n (1- \frac{\omega}{e^\omega - 1})}} \times \left(1 + O(\Delta^{-1})\right)\,, \qquad \left(\Delta=\Delta(n,k):=n-k\right)\,.
    \]

\end{lemma}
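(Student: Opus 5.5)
The plan is to express $a(n,k)$ as a Cauchy coefficient and apply a saddle-point estimate, using the paper's own recipe: reduce to an asymptotic estimate for $\stirlingsecond{n}{k}$ that is uniform in $n$ and $k$. We have
\[
a(n,k)=[z^n]\Bigl(\tfrac{e^{\omega z}-1}{e^\omega-1}\Bigr)^k=\frac{\omega^n}{(e^\omega-1)^k}\,\frac{k!}{n!}\stirlingsecond{n}{k}.
\]
So it suffices to plug in the uniform estimate of Chelluri, Richmond and Temme. Alternatively one can use the older Moser--Wyman/Temme form:
\[
\stirlingsecond{n}{k}=\frac{n!\,(e^{\omega}-1)^k}{k!\,\omega^n\sqrt{2\pi k\,\omega\, \frac{e^\omega}{e^\omega-1}\bigl(1-\frac{\omega}{e^\omega-1}\bigr)}}\bigl(1+O(\Delta^{-1})\bigr),
\]
valid uniformly for $n>k\geq1$ with $\omega$ the saddle point. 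Substituting, the prefactors cancel exactly and leave $a(n,k)=\bigl(2\pi\sigma^2\bigr)^{-1/2}(1+O(\Delta^{-1}))$.

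The second step identifies $\sigma^2$. It is the variance of the sum of $k$ iid $\Poissone{\omega}$ variables. The pgf is $F(z)=\frac{e^{\omega z}-1}{e^\omega-1}$, and with $\mu=\frac{\omega e^\omega}{e^\omega-1}$ we get $\mathrm{Var}=\mu(1+\omega-\mu)=\mu\bigl(1-\frac{\omega}{e^\omega-1}\bigr)$. The saddle-point equation gives $k\mu=n$, so $\sigma^2=n\bigl(1-\frac{\omega}{e^\omega-1}\bigr)$, matching the claimed formula.

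If I did not want to rely on the citation, I would prove the estimate directly. Write $a(n,k)=\frac{1}{2\pi}\int_{-\pi}^{\pi}F(e^{i\theta})^k e^{-in\theta}\,d\theta$ and split into a central range $|\theta|\le\theta_0\asymp(\sigma^2)^{-1/2}\Delta^{\epsilon}$ and the tails. In the central range, expand the log of the characteristic function to fourth order; the odd cumulant term integrates out at first order, yielding relative error $O(\kappa_4/\sigma^4+\kappa_3^2/\sigma^6)$. In the tails, bound $|F(e^{i\theta})|^k$ by $\exp(-c\sigma^2\theta^2)$.

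The main obstacle is uniformity as $\omega\to0$, i.e. $\Delta\ll k$. There each $X_j$ is nearly degenerate at $1$, so all the cumulants of $X_j-1$ are of order $\omega$. By Proposition~\ref{prop:properties-omega-surjections}, $\sigma^2\asymp k\omega\asymp\Delta$ and $\kappa_3,\kappa_4\asymp\Delta$ as well. This makes the error $O(1/\Delta)$ rather than $O(1/n)$. To handle this I would work with the shifted variables $X_j-1$, whose pgf is $\frac{e^{\omega z}-1}{\omega z}\cdot(\dots)$. I would also use the fact that $|F(e^{i\theta})|^2=1-\frac{2(\cosh\text{-type terms})}{\ldots}\le 1-c\,\omega\min(1,\omega)(1-\cos\theta)$ for small $\omega$, giving the tail bound uniformly. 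For large $\omega$ the situation is Poisson-like and standard, so the real work is in gluing the two regimes together with constants independent of $\omega$.
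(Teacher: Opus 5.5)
Your reduction $a(n,k)=\frac{\omega^n}{(e^\omega-1)^k}\frac{k!}{n!}\stirlingsecond{n}{k}$ matches the paper. So does your identification of $n\bigl(1-\frac{\omega}{e^\omega-1}\bigr)$ as $k$ times the variance of a $\Poissone{\omega}$ variable; the paper gets this algebraically from $2H_0(\omega)=\frac{n}{k\omega}\bigl(1-\frac{\omega}{e^\omega-1}\bigr)$. For $k\le n-n^{1/3}$ that is the whole proof.

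The gap is the near-diagonal range $n-n^{1/3}\le k<n$. There the Chelluri--Richmond--Temme estimate (Theorem~\ref{thm:estimate-stirling-2}) does \emph{not} have the saddle-point form. It reads $\stirlingsecond{n}{k}=2^{-\Delta}n^{2\Delta}/\Delta!\,\bigl(1+O(n^{-1/3})\bigr)$, so ``the prefactors cancel exactly'' fails in that range.

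Your fallback does not fix this. It asserts a single saddle-point formula for $\stirlingsecond{n}{k}$, valid uniformly for all $n>k\ge 1$ with error $O(\Delta^{-1})$. After multiplying by the explicit factor $\omega^n k!/\bigl((e^\omega-1)^k n!\bigr)$, that is literally the lemma you are asked to prove. Invoking it without a reference whose range and error term actually match is circular.

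Your direct saddle-point sketch targets the right difficulty, namely uniformity as $\omega\to 0$. But the uniform remainder control, the tail bound (the ``$\cosh$-type terms'' are a placeholder), and the gluing of the small- and large-$\omega$ regimes are all left undone. So it does not close the gap either.

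What is missing is a short explicit computation for $\Delta\le n^{1/3}$, which is how the paper handles this range.
\begin{itemize}
\item Write $a(n,k)=\bigl(\tfrac{\omega}{e^\omega-1}\bigr)^k\cdot\tfrac{k!}{n!}n^{\Delta}\cdot\tfrac{(\omega n/2)^{\Delta}}{\Delta!}\bigl(1+O(n^{-1/3})\bigr)$.
\item Proposition~\ref{prop:properties-omega-surjections} gives $0\le 2\Delta-k\omega\le\tfrac{2}{3}\Delta^2/k=O(n^{-1/3})$.
\item From this, $\bigl(\tfrac{\omega}{e^\omega-1}\bigr)^k=e^{-k\omega/2+O(k\omega^2)}=e^{-\Delta}\bigl(1+O(n^{-1/3})\bigr)$.
\item Next, $\tfrac{k!}{n!}n^{\Delta}=\exp\bigl(O(\Delta^2/n)\bigr)=1+O(n^{-1/3})$ and $(\omega n/2)^{\Delta}=\Delta^{\Delta}\bigl(1+O(n^{-1/3})\bigr)$.
\item Hence $a(n,k)=\frac{\Delta^\Delta e^{-\Delta}}{\Delta!}\bigl(1+O(n^{-1/3})\bigr)$, and Stirling's formula turns this into $(2\pi\Delta)^{-1/2}\bigl(1+O(\Delta^{-1})\bigr)$.
\item The same expansion of $\frac{\omega}{e^\omega-1}$ gives $n\bigl(1-\frac{\omega}{e^\omega-1}\bigr)=\Delta+O(n^{-1/3})$.
\item Finally, $\Delta\le n^{1/3}$ lets you absorb every $O(n^{-1/3})$ into $O(\Delta^{-1})$.
\end{itemize}
With this case added, your argument is complete.
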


For the proof of \autoref{thm:saddle-point}, we notice that the estimate for $a(n,k)$ equivalent to an estimate of $\stirlingsecond{n}{k}$. Indeed,  $\tfrac{k!}{n!}\stirlingsecond{n}{k} =   [z^n] (e^z-1)^k$, therefore $\frac{\omega^n}{(e^\omega-1)^k} \tfrac{k!}{n!}  \stirlingsecond{n}{k}= [z^n] \left(\frac{e^{\omega z}-1}{e^\omega -1}\right)^k = a(n,k)$.
We now use an estimate for $\stirlingsecond{n}{k}$ form Chelluri, Richmond and Temme~\cite{ChelluriRichmondTemme}. We cite here the result from~\cite{ChelluriRichmondTemme}.  This involves $\omega=\omega(n,k)$, which is called $u_0$ in~\cite{ChelluriRichmondTemme}.

\begin{theorem}
    \label{thm:estimate-stirling-2}
    Uniformly on  $1\leq k \leq n - n^{1/3}$ we have
    \begin{equation}
        \stirlingsecond{n}{k} = \frac{n!}{k!} \frac{(e^\omega-1)^k}{2 \omega^n} \frac{1}{\sqrt{\pi \omega k H_0(\omega)}} \left(1+O(n^{-1})\right)\,,\qquad 2H_0(u) = \frac{e^u}{e^u-1}\left(1-\frac{u}{e^u-1}\right)\,.
    \end{equation}
    If $n - n^{1/3}\leq k \leq n$, uniformly we have
    \begin{equation}
        \stirlingsecond{n}{k} = \frac{1}{2^{n-k}} \frac{n^{2(n-k)}}{(n-k)!} \left(1+O(n^{-1/3})\right)\,.
    \end{equation}
\end{theorem}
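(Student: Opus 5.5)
The plan is to prove the two regimes separately: a saddle-point (local limit) analysis of the Cauchy integral for $1\le k\le n-n^{1/3}$, and a direct combinatorial count when $\Delta=n-k\le n^{1/3}$. The step I am least sure of is the uniform $O(n^{-1})$ error in the first regime, discussed at the end of the second paragraph.

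\textbf{First regime: reduction to a local limit theorem.} I start from $\stirlingsecond{n}{k}=\frac{n!}{k!}[z^n](e^z-1)^k$ and the identity recalled just before the statement,
\[
a(n,k)=\frac{\omega^n}{(e^\omega-1)^k}\,\frac{k!}{n!}\stirlingsecond{n}{k},
\]
where $a(n,k)=\Pr(X_1+\cdots+X_k=n)$ for iid $X_i\sim\Poissone{\omega}$. With this, the first formula is equivalent to
\[
a(n,k)=\frac{1}{2\sqrt{\pi\omega kH_0(\omega)}}\bigl(1+O(n^{-1})\bigr).
\]
Put $m(\omega)=\frac{\omega e^\omega}{e^\omega-1}$, so $\E[X_i]=m(\omega)$ and $km(\omega)=n$ by the saddle-point equation. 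Then $\sigma^2=\mathrm{Var}(X_i)=\omega m'(\omega)$, and a direct computation gives $\sigma^2=\omega\cdot 2H_0(\omega)$. The claimed main term is therefore exactly the Gaussian local term $1/\sqrt{2\pi k\sigma^2}$ at the mean.

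\textbf{First regime: the saddle-point integral.} I write
\[
a(n,k)=\frac{1}{2\pi}\int_{-\pi}^{\pi}\varphi(\theta)^k e^{-in\theta}\,d\theta,\qquad \varphi(\theta)=\frac{e^{\omega e^{i\theta}}-1}{e^\omega-1},
\]
and split the range into a central part $|\theta|\le\theta_0$ and tails.
\begin{itemize}
\item On the tails, $|\varphi(\theta)|$ is bounded away from $1$ relative to $\sigma$. For small $\omega$ the variables behave like $1+\mathrm{Bernoulli}(\omega/2)$, and I use $|\varphi(\theta)|^2\le 1-c\,\sigma^2\theta^2$ on $[-\pi,\pi]$, uniformly in $\omega$. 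This makes the tail contribution exponentially small in $k\sigma^2\asymp k\omega\asymp\Delta$, using Proposition~\ref{prop:properties-omega-surjections}.
\item On the central part, I expand $k\log\varphi(\theta)-in\theta$ in cumulants: the linear term cancels by the saddle-point equation, the quadratic term is $-k\sigma^2\theta^2/2$, and the higher terms are $k\kappa_j(i\theta)^j/j!$. All cumulants satisfy $\kappa_j=O(\sigma^2)$ uniformly, both as $\omega\to0$ and as $\omega$ grows.
\item Integrating the Edgeworth expansion, the odd $\kappa_3$ term vanishes at the mean. The first surviving correction is $(\kappa_4/(8\sigma^4)-5\kappa_3^2/(24\sigma^6))/k$.
\end{itemize}

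\textbf{First regime: the main obstacle.} The hard point is showing that this relative correction, together with the remainder, is $O(n^{-1})$ uniformly. When $\omega$ is bounded below, $k\sigma^2\asymp n$ and the bound is immediate. When $\omega\to0$, the correction is of order $1/(k\omega)\asymp 1/\Delta$. My first attempt would be to compare directly with the exact Bernoulli/binomial-type model: its own Stirling corrections must absorb the leading $-1/(12k\omega)$ term, and only a genuinely $O(1/n)$ remainder should be left over. If this cancellation fails, only $O(\Delta^{-1})$ survives, which is what Lemma~\ref{thm:saddle-point} uses. I would therefore check the small-$\omega$ expansion of Chelluri--Richmond--Temme carefully at this point.

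\textbf{Second regime ($\Delta\le n^{1/3}$).} Here I count set partitions of $[n]$ into $k=n-\Delta$ blocks by their non-singleton block types. The dominant type is $\Delta$ pairs and $n-2\Delta$ singletons, which contributes
\[
\frac{n!}{2^\Delta\Delta!\,(n-2\Delta)!}=\frac{n^{2\Delta}}{2^\Delta\Delta!}\prod_{j<2\Delta}\Bigl(1-\frac jn\Bigr)=\frac{n^{2\Delta}}{2^\Delta\Delta!}\bigl(1+O(\Delta^2/n)\bigr),
\]
and $\Delta^2/n\le n^{-1/3}$. Every other type arises by merging blocks, for example a triple in place of two pairs. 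The ratio of such a class to the main one is $O(\Delta^2/n)$ per merging step, for instance $\tfrac{2}{3}\Delta^2/n$ for one triple. Summing over types is a geometric series, so the total is bounded by $O(\Delta^2/n)=O(n^{-1/3})$ times the main term. This proves the second formula.
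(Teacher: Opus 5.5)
The step you flag as the main obstacle is a genuine gap, and it cannot be closed as you hope, because no cancellation occurs. The main term in the first display is exactly the Gaussian local term $(2\pi k\sigma^2)^{-1/2}$, so the relative error is precisely your first Edgeworth correction. As $\omega\to0$ one has $\kappa_2,\kappa_3,\kappa_4=\tfrac{\omega}{2}(1+O(\omega))$. Proposition~\ref{prop:properties-omega-surjections} gives $k\sigma^2=\Delta(1+O(\omega))$. Hence
\[
\frac1k\Bigl(\frac{\kappa_4}{8\sigma^4}-\frac{5\kappa_3^2}{24\sigma^6}\Bigr)=-\frac{1}{12\Delta}\bigl(1+O(\omega)\bigr),
\]
and the remaining terms are $O(\Delta^{-2})$. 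This is just the Stirling correction in $\Pr(\Poiss{\Delta}=\Delta)=\frac{\Delta^\Delta e^{-\Delta}}{\Delta!}=\frac{1}{\sqrt{2\pi\Delta}}\bigl(1-\frac{1}{12\Delta}+\cdots\bigr)$. Your binomial comparison model has exactly the same correction at its mean, $\frac{pq-1}{12kpq}$, so it cannot absorb it. For example, $k=n-\lfloor\sqrt n\rfloor$ lies in the first range, and there the relative error is $-\frac{1}{12\sqrt n}(1+o(1))$, not $O(n^{-1})$. So the first display with $1+O(n^{-1})$ is false whenever $n^{1/3}\le\Delta=o(n)$; it is correct only for $k\le(1-c)n$. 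What is true, and what your argument proves once you use $k\sigma^2\asymp k\omega\asymp\Delta$, is the uniform error $1+O(1/(n-k))$. That is also all the paper needs: in the proof of Lemma~\ref{thm:saddle-point}, the $O(n^{-1})$ is immediately weakened to $O(\Delta^{-1})$. You should state and prove that version instead of chasing $O(n^{-1})$.

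Apart from this, your plan is a self-contained alternative to the paper, which does not prove the statement but simply quotes Chelluri, Richmond and Temme. The reduction to the local term with $\sigma^2=2\omega H_0(\omega)$ is correct. The second regime also works, provided you bound the total contribution of all block types with $s$ merges at once, since the number of types grows with $s$. Via associated Stirling numbers this total is at most $\bigl(C\Delta^2/(s(n-2\Delta))\bigr)^s$ times the main term, which then sums geometrically as you say. Two technical points in the first regime need repair. First, $|\varphi(\theta)|^2\le 1-c\,\sigma^2\theta^2$ cannot hold on all of $[-\pi,\pi]$ uniformly in $\omega$: the right side is negative once $c\sigma^2\pi^2>1$, for instance when $k=1$ and $\omega\approx n$. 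Replace it by an exponential bound $|\varphi(\theta)|\le e^{-c\sigma^2\theta^2}$; for large $\omega$, $\varphi$ behaves like the Poisson characteristic function, whose modulus is $e^{-\omega(1-\cos\theta)}$. Second, the remainder on the central arc needs uniform bounds on $\log\varphi$ over a fixed neighbourhood of $0$, not just bounds on the cumulants at $0$.
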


\begin{proof}[Proof of Prop.~\ref{thm:saddle-point}]
We show the equivalence with \autoref{thm:saddle-point}. Consider the case $k\leq n-n^{1/3}$. We write
\[
a(n,k) = \frac{\omega^n}{(e^\omega-1)^k} \frac{k!}{n!}  \stirlingsecond{n}{k}= \frac{1}{\sqrt{4 \pi \omega k H_0(\omega)}} \left(1+O(n^{-1}\right)))\,.
\]
Observe that $\frac{e^\omega}{e^\omega-1} = \frac{n}{k \omega}$  and so $2H_0(\omega) = \frac{n}{k \omega} \left(1-\tfrac{\omega}{e^\omega - 1}\right)$. Simplifying we obtain the stated estimate, but with an error $O(n^{-1})$ which is clearly $O(\Delta^{-1})$.

Consider now the case $n-n^{1/3}\leq k\leq n$. We show a slightly more precise expression \begin{equation}
    a(n,k) = \frac{\Delta^\Delta e^{-\Delta}}{\Delta!} (1+O(n^{-1/3}))\,,
\end{equation} uniformly. This implies our result as $\frac{\Delta^\Delta e^{-\Delta}}{\Delta!} = \frac{1}{\sqrt{2\pi \Delta}} (1+O(1/\Delta))$ and $n(1-\tfrac{\omega}{e^\omega-1})=\Delta+O(n^{-1/3})$. We explain the latter equation below.

Proposition~\ref{prop:properties-omega-surjections} tells us that $0\leq 2\Delta-k\omega \leq \tfrac{1}{6}k\omega^2 $ and $\omega \leq \frac{2 \Delta}{k}$, thus $0\leq 2\Delta-k\omega\leq \tfrac{1}{6} k (2\Delta/k)^2 = \tfrac{2}{3} \frac{\Delta^2}{k}$, which is $O(n^{2/3}/k) = O(n^{-1/3})$. We deduce that $\omega=\frac{2\Delta}{k}+O(n^{-4/3})$.

Write
\[
a(n,k) = \frac{\omega^n}{(e^\omega-1)^k} \frac{k!}{n!}  \frac{1}{2^{\Delta}} \frac{n^{2\Delta}}{\Delta!}\left(1+O(n^{-1/3})\right) = \left(\tfrac{\omega}{e^\omega-1}\right)^k \left(\tfrac{k!}{n!}n^\Delta\right) \frac{(\omega n / 2)^\Delta}{\Delta!}\left(1+O(n^{-1/3})\right)\,.
\]
We will now study each of the factors. First $\frac{\omega}{e^\omega-1}=1-\tfrac{\omega}{2}+O(\omega^2)$, uniformly, so that \[\left(\tfrac{\omega}{e^\omega-1}\right)^k = \exp(-k\omega/2+O(k\omega^2)) = \exp(-\Delta+O(n^{-1/3})) = e^{-\Delta}(1+O(n^{-1/3}))\,.\]

Next
\begin{align*}
    \left(\tfrac{k!}{n!}n^\Delta\right)  &= \frac{n^\Delta}{n (n-1)\ldots (n-\Delta+1)} \\
    &= \frac{1}{\exp(O(1/n+\ldots+(\Delta-1)/n))} \\
    &= \exp(O(\Delta^2/n)) = 1 + O(n^{-1/3})\,.
\end{align*}

Observe that $\omega n / 2 = \omega k / 2+ \omega \Delta/2 = \Delta + O(n^{-1/3})$ and therefore $(\omega n / 2)^\Delta = \Delta^\Delta (1+O(n^{-1/3})$. Finally, we note that $\Delta\leq n^{1/3}$ and therefore we may substitute $O(n^{-1/3})$ by $O(\Delta^{-1})$.

Finally we explain why $n(1-\tfrac{\omega}{e^\omega-1})=\Delta+O(n^{-1/3})$. Note that $1-\tfrac{\omega}{e^\omega  - 1} = \tfrac{\omega}{2}+O(\omega^2)$ as before, while $n=k+\Delta$, thus $n(1-\tfrac{\omega}{e^\omega-1}) = \frac{k \omega}{2}+O(\Delta \omega) = \Delta + O(n^{-1/3})+O(\Delta\omega)$. Since $\Delta\omega = O(\Delta^2/k) = O(n^{-1/3})$ in this regime, the assertion follows.
\end{proof}

By applying Proposition~\ref{prop:properties-omega-surjections}, the theorem is proved.

\begin{remark}
    We can prove that $a(n,k) = \frac{\Delta^\Delta}{\Delta!}e^{-\Delta} (1+O(n^{-1/3}))$ when $\Delta \leq n^{1/3}$. In particular, the probability tends to a constant if $\Delta$ is constant.
\end{remark}

\subsection{Proofs for Section~\ref{sec:upscaling-surjections-large-k}}
\label{appendix:lemma-bound-proba-surjections}
\boundprobabilitysurjectionsupperk*

  \begin{wrapfigure}{R}{0.49\textwidth}
    \begin{minipage}{0.49\textwidth}
      \begin{algorithm}[H] \scriptsize
      \caption{\label{alg:surjection-upper-upper}Algorithm for a random surjection profile when $k\geq (\tfrac{1}{3}+\delta)\times n$ for fixed $\delta>0$.
      }
\label{fig:procedure-surjections-large-k-13}
         function RandomSurjectionProfile($n$,$k$):\\
    $\omega$ := SaddlePoint($n,k$)\;
    $G$ := $\lceil\nicefrac{(3k-n)}{2}\rceil$\;
    $q$ := from Lemma~\ref{lemma:bound-probability-surjections}\;
    $p_r$ := $\Pr(\Poiss{\omega}\geq 3) / \Pr(\Poiss{\omega}\geq 1)$\;
    \While{True}{
        $profile$ := $\{\}$\;
        $R \gets {\tt Bin}(k,p_r)$\;
        $M_{1,2} := k - R$\;
        {\bf continue} {\bf if} $L<G$\;
        $N$ := $n$ \;
        $j$ := $3$\;
        \While{$R>0$}{
             $r_j := {\footnotesize\Pr(\Poiss{\omega}=j  |  \, \Poiss{\omega}\geq j)}$\;
            $m \leftarrow {\tt Bin}(R,r_j)$\;
             add $(j,m)$ to profile {\bf if} $m>0$\;
            $R := R - m$\;
            $N := N - m\cdot j$\;
            $j := j+1$\;
        }

        $m_1$ := $2M_{1,2} - N$\;
        $m_2$ := $N - M_{1,2}$ \;
        \If{$m_1\geq 0$ and $m_2\geq 0$}{
            $p $ := $ \binom{L}{n_1} \cdot (\tfrac{1}{1+\nicefrac{\omega}{2}})^{m_1}(\frac{\nicefrac{\omega}{2}}{1+\nicefrac{\omega}{2}})^{m_2}$ \;
            \If{Bernoulli(p/q)}{
                add $(1,m_1)$ to profile {\bf if} $m_1>0$\;
                add $(2,m_2)$ to profile {\bf if} $m_2>0$\;
                \Return $profile$\;
            }
         }
    }

      \end{algorithm}
          \end{minipage}
  \end{wrapfigure}

In order to prove this lemma, we use the following two lemmas regarding the probability mass function of the Binomial distribution. We remark in particular Lemma~\ref{lemma:compare-binomial-poisson} which allows us to compare the probabilities of a Binomial random variable to those of a Poisson one directly.
\begin{lemma}
    \label{lemma:max-binomial}
    Fix $v\in (0,1)$ and define $f(a):= \max_j\binom{a}{j} v^j (1-v)^{a-j}$ for integer $a\geq 0$. The function $f(a)$ is decreasing.
\end{lemma}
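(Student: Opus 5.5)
We will prove that $f(a+1)\le f(a)$ for every integer $a\ge 0$ by a coupling/mixture argument. Write $b_a(j):=\binom{a}{j}v^j(1-v)^{a-j}$ for the probability mass function of $\Binomial(a,v)$, with the convention $b_a(j)=0$ for $j<0$ or $j>a$. The starting point is Pascal's rule applied to the binomial law: adding one independent Bernoulli$(v)$ trial to $a$ trials gives
\[
b_{a+1}(j) \;=\; (1-v)\, b_a(j) + v\, b_a(j-1)\qquad\text{for all integers } j,
\]
which follows from $\binom{a+1}{j}=\binom{a}{j}+\binom{a}{j-1}$ after distributing the factors $v^j(1-v)^{a+1-j}$. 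This identity also holds at the boundary cases $j=0$ and $j=a+1$ with the zero conventions.

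From this identity, each value $b_{a+1}(j)$ is a convex combination of two values of $b_a$, namely $b_a(j)$ and $b_a(j-1)$. Hence
\[
b_{a+1}(j)\le (1-v)\max_i b_a(i) + v \max_i b_a(i) = f(a),
\]
and taking the maximum over $j$ yields $f(a+1)\le f(a)$. So $f$ is non-increasing, which is the sense of ``decreasing'' needed for the proof of Lemma~\ref{lemma:bound-probability-surjections}: there we only need $\max_j p(j,\hat k)\le \max_j p(j,G)$ for $\hat k\ge G$, which follows by iterating the inequality.

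If strict decrease is wanted, we add one observation. Equality $b_{a+1}(j)=f(a)$ with $v\in(0,1)$ would force $b_a(j)=b_a(j-1)=f(a)$. Then at the maximizing index $j$ both $j$ and $j-1$ lie in $\{0,\dots,a\}$, so $1\le j\le a$, and the ratio $b_a(j)/b_a(j-1)=\frac{(a-j+1)v}{j(1-v)}$ equals $1$. For $a\ge 1$ this can happen, so strict monotonicity does not hold in general; for instance $v=\tfrac12$ gives $f(1)=f(2)=\tfrac12$. I would therefore state and use only the weak monotonicity, which is all that the subsequent argument requires. No step is a real obstacle here: the only care needed is handling the boundary conventions in Pascal's rule.
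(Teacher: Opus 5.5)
Your proof is correct and is essentially the paper's argument: Pascal's rule writes $b_{a+1}(j)$ as a convex combination of $b_a(j-1)$ and $b_a(j)$, so $b_{a+1}(j)\le f(a)$ for every $j$. Your side remark is also correct: only weak monotonicity holds (e.g.\ $f(1)=f(2)=\tfrac12$ for $v=\tfrac12$), and weak monotonicity is all the paper's proof establishes and all it needs.
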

\begin{proof}
    First, it is clear that $a\mapsto \binom{a}{0} v^0 (1-v)^{a-0}$ is decreasing. Next, consider $\binom{a+1}{j} v^j (1-v)^{a+1-j}$ with $j\geq 1$. By Pascal's identity we have
    \begin{center}
    $\displaystyle \binom{a+1}{j} v^j (1-v)^{a+1-j} = v \binom{a}{j-1} v^{j-1} (1-v)^{a-(j-1)} + (1-v) \binom{a}{j} v^j (1-v)^{a-j}\,,$
    \end{center}
    which is a weighted average of $\binom{a}{j-1} v^{j-1} (1-v)^{a-(j-1)} $ and $\binom{a}{j} v^j (1-v)^{a-j}$. Thus we deduce that $\binom{a+1}{j} v^j (1-v)^{a+1-j}  \leq \max\{\binom{a}{j-1} v^{j-1} (1-v)^{a-(j-1)},\binom{a}{j} v^j (1-v)^{a-j}\}$ when $j\geq 1$. Thus the result follows.
\end{proof}

\begin{lemma}[see{~\cite[Exercise 4.8.1]{devroye1986non}}]

    \label{lemma:compare-binomial-poisson}
    Let $n\geq 1$ be an integer and let $p \in (0,1)$. Let $a_i$ be the probability that a Binomial random variable of parameters $n$ and $p$ takes on the value $i$, and let $b_i$ be the probability that a Poisson random v ariable with parameter $(n+1)p$ takes on the value $i$. Then  $\tfrac{a_i}{b_i} \leq \frac{e^{1/(12 (n+1))}}{\sqrt{1-p}}$ holds for every $i\geq 0$.
\end{lemma}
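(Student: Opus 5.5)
This is the classical comparison between the Binomial and Poisson mass functions. Write the ratio explicitly and bound it with Robbins' form of Stirling's formula, in the spirit of the proof of Lemma~\ref{lemma:max-probability-poisson}. Set $\lambda=(n+1)p$. For $i>n$ we have $a_i=0$ and there is nothing to prove, so fix $0\le i\le n$. Then
\[
\frac{a_i}{b_i}=\frac{n!}{(n-i)!\,(n+1)^i}\,(1-p)^{n-i}\,e^{(n+1)p}.
\]
The first step is to maximise this in $p$ for fixed $i$. The logarithm $(n-i)\log(1-p)+(n+1)p$ is concave in $p$, with derivative zero at $1-p=\frac{n-i}{n+1}$.

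However, the target bound carries the factor $1/\sqrt{1-p}$. So I would instead maximise $\sqrt{1-p}\,\frac{a_i}{b_i}$, i.e.\ the function $(n-i+\tfrac12)\log(1-p)+(n+1)p$. Its maximiser is $1-p^*=\frac{n-i+1/2}{n+1}$, and it gives
\[
\sqrt{1-p}\,\frac{a_i}{b_i}\le \frac{n!}{(n-i)!\,(n+1)^i}\Big(\frac{m+\tfrac12}{n+1}\Big)^{m+1/2}e^{\,i+1/2},\qquad m:=n-i.
\]
For $i=n$ (so $m=0$) the bound $\sqrt{1-p}\,(1-p)^{-1}\cdot\ldots$ needs separate care, since the $1/\sqrt{1-p}$ factor is then essential. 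Writing $n!=(n+1)!/(n+1)$ simplifies the right-hand side to
\[
\frac{(n+1)!}{m!}\,\frac{(m+\tfrac12)^{m+1/2}}{(n+1)^{n+3/2}}\,e^{\,n-m+1/2}.
\]

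The second step applies Robbins' inequalities to both factorials: $(n+1)!\le\sqrt{2\pi}(n+1)^{n+3/2}e^{-(n+1)}e^{1/(12(n+1))}$, and $m!\ge\sqrt{2\pi}\,m^{m+1/2}e^{-m}$. After cancellation the bound becomes
\[
e^{1/(12(n+1))}\cdot\Big(1+\tfrac{1}{2m}\Big)^{m+1/2}e^{-1/2}.
\]

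This leftover factor is the main obstacle. We have $(1+\tfrac1{2m})^{m+1/2}\to e^{1/2}$ from above, so the factor slightly exceeds $1$ and a cruder estimate is needed. To fix it, I would instead use the exact Stirling-type bound $m!\ge (m+\tfrac12)^{m+1/2}e^{-(m+1/2)}\sqrt{2\pi}$, which holds via the Gamma function (log-convexity, or Batir's inequality $\Gamma(x+1)\ge\sqrt{2\pi}((x+1/2)/e)^{x+1/2}$). With this, the leftover factor cancels exactly to $1$. The case $m=0$ is the instance $1\ge\sqrt{2\pi/e}\cdot\ldots$, which I would check directly; there $a_n/b_n=\frac{n!}{(n+1)^n}e^{(n+1)p}\le\ldots$.

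If this route fails, I would fall back on citing \cite[Exercise 4.8.1]{devroye1986non} directly, as the paper does.
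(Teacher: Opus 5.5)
\textbf{The reduction is correct, but the last step fails, and it cannot be repaired: the statement as written is false.}

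Your exact maximisation over $p$ is right. The inequality you then need, $m!\ge\sqrt{2\pi}\,\bigl((m+\tfrac12)/e\bigr)^{m+1/2}$, holds in the \emph{opposite} direction. Burnside's approximation overestimates $m!$: for $m=0,1,2$ it gives $\sqrt{\pi/e}\approx1.075$, $\approx1.027$ and $\approx2.033$, and asymptotically the ratio to $m!$ is $1+\frac{1}{24m}+\cdots$. The correct lower bound of this shape has $\sqrt{2e}$ in place of $\sqrt{2\pi}$, with equality at $m=0$. Had you carried out the $m=0$ check you deferred, you would have found $1\ge\sqrt{\pi/e}$, which is false.

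Since the maximisation is exact, your computation actually gives
\[
\sup_{0<p<1}\sqrt{1-p}\,\frac{a_i}{b_i}=e^{r_{n+1}}\,B(n-i),\qquad B(m):=\frac{\sqrt{2\pi}\,\bigl((m+\tfrac12)/e\bigr)^{m+1/2}}{m!},
\]
attained at $p=\frac{i+1/2}{n+1}$. Here $(n+1)!=\sqrt{2\pi}\,(n+1)^{n+3/2}e^{-(n+1)+r_{n+1}}$ with $r_{n+1}>\frac{1}{12(n+1)+1}$ by Robbins. The claimed bound would therefore require
\[
\log B(m)\le\tfrac{1}{12(n+1)}-r_{n+1}<\tfrac{1}{12(n+1)(12n+13)}.
\]
This fails for every $0\le m\le n$. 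Indeed $B(0)=\sqrt{\pi/e}$, and for $m\ge1$, Robbins' upper bound for $m!$ together with $\log(1+h)\ge\frac{2h}{2+h}$ gives $\log B(m)\ge\frac{1}{60m}$.

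A concrete counterexample is $n=i=1$, $p=\frac34$:
\[
\frac{a_1}{b_1}=\frac{3/4}{\frac32e^{-3/2}}=\frac{e^{3/2}}{2}\approx2.241>2e^{1/24}\approx2.085.
\]
The paper gives no argument for this lemma beyond the pointer to Devroye, so falling back on the citation does not rescue it.

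\textbf{What your method does prove is the corrected, sharp version.} $B$ is strictly decreasing, because
\[
\log B(m+1)-\log B(m)=\int_{m+1/2}^{m+3/2}\log x\,dx-\log(m+1)<0
\]
by concavity of $\log$. Hence $B(m)\le B(0)=\sqrt{\pi/e}$, and
\[
\frac{a_i}{b_i}\le\sqrt{\frac{\pi}{e}}\;\frac{e^{1/(12(n+1))}}{\sqrt{1-p}}\qquad\text{for all } i\ge0.
\]
This is sharp: at $i=n$, $p=\frac{n+1/2}{n+1}$, one has $\sqrt{1-p}\,a_n/b_n=\sqrt{\pi/e}\,e^{r_{n+1}}$.

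Where the paper uses this lemma (to define $q$ in the early-rejection steps), the extra factor $\sqrt{\pi/e}\approx1.075$ should be carried along, or absorbed by slack elsewhere, so that $p/q\le1$ is guaranteed. None of the $\Theta$-estimates change.
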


\begin{proof}[Proof of Lemma~\ref{lemma:bound-probability-surjections}]

First, using Lemma~\ref{lemma:max-binomial} we deduce that the acceptance probability is bounded by $\max_j\binom{G}{j} \left(\frac{1}{1+\omega/2}\right)^{j} \left(\frac{\omega/2}{1+\omega/2}\right)^{G-j} $. The argmax of this expression is given by $j=a$, where $a:=\lfloor (G+1)/ (1+\omega/2)\rfloor$. Thus, it will be enough to prove that $r:=\binom{G}{a} \left(\frac{1}{1+\omega/2}\right)^{a} \left(\frac{\omega/2}{1+\omega/2}\right)^{G-a}$ is bounded by $q$.

Picking $p=\tfrac{1}{1+\omega/2}$ or  $p=\tfrac{\omega/2}{1+\omega/2}$ in such a way that $p\geq 1/2$, Lemma~\ref{lemma:compare-binomial-poisson} yields the bound:
    \[
    r  \leq \frac{\exp(\nicefrac{1}{12})}{\sqrt{1-p}} \times \max_i {\tt Poiss}_{(G+1)p}(i)\,.
    \]
    Here $\max_i {\tt Poiss}_{(G+1)p}(i) = {\tt Poiss}_{(G+1) p} (\lfloor (G+1)p\rfloor)$. Recall that $G\geq 1$ by our choice and $3k-n>0$, and therefore $(G+1)p\geq 1$. Recall that  $\Pr(\Poiss{m}=\lfloor m\rfloor)\leq \tfrac{1}{\sqrt{2\pi \lfloor m\rfloor}} \leq  \frac{1}{\sqrt{\pi m}}$ for $m\geq 1$, where the second inequality is a direct verification. Letting $m=(G+1)p$ we deduce that $r\leq q$. Thus the result is proved.
\end{proof}

We provide the full pseudo-code of the algorithm from this section for completeness in Figure~\ref{fig:procedure-surjections-large-k-13}.
Now we are ready to prove the expected number of sampling rounds
\samplingroundsalgoupperupper*
\begin{proof}
From Theorem~\ref{thm:order-acceptance-prob-surj} and Lemma~\ref{lemma:bound-probability-surjections}, we deduce that the scaled-probability is at least, up to constant fixed multiplicative terms:
\begin{center}
{$\displaystyle
\frac{\sqrt{G \frac{\omega/2}{(1+\omega/2)^2}}}{\sqrt{\Delta}}  \geq \frac{ \sqrt{\tfrac{3}{2}\delta k  \frac{\omega/2}{(1+\omega/2)^2}}}{\sqrt{\Delta}} \geq \sqrt{\frac{\tfrac{3}{2}\delta \frac{\Delta/2}{(1+\omega/2)^2}}{\Delta}}  = \frac{\sqrt{\tfrac{3}{4}\delta}}{1+\omega/2}\,,
$}
\end{center}
where we have used $G\geq \tfrac{3}{2}\delta n\geq \tfrac{3}{2}\delta k$ on the first inequality from the left, and that $k\omega\geq \Delta$ on the second. The right-most expression  is bounded by below, as $\omega \leq n/k \leq \frac{1}{\nicefrac{1}{3}+\delta}<3$.

To obtain an upper-bound for the acceptance probability, we remark that $G=\lceil \tfrac{3k-n}{2}\rceil \leq n$ as follows from $\tfrac{3k-n}{2}\leq n$. But, in this regime $n\leq 3 k$. Thus $G\leq 3k$. The rest of the upper-bound follows similarly using that $k\omega \leq 2 \Delta$ from Proposition~\ref{prop:properties-omega-surjections}.
\end{proof}

\subsection{Details for Section~\ref{sec:upscaling-surjections-general}}

\label{annex:upscaling-surjections-general}
We summarize again the procedure of this section as follows:
\begin{enumerate}
    \item Sample $M_{1,2},\ldots,M_{D-1,D}$ and $M_{D+1},M_{D+2},\ldots$  by the Multinomial method. If none of $M_{1,2},\ldots,M_{D-1,D}$  satisfies $M_{j,j+1}\geq 2\frac{(D+1)k -n}{D^2}$, we reject ($p_{\tt accept}=0$) and restart.
    \item Else let $M_{j,j+1}$ denote any such pair. Draw $M_i,M_{i+1}$ given $M_{i,i+1}$ for each of the other $i\in \{1,3,\ldots,D-1\}$ with $i\neq j$.
    \item
    At this point we can only accept if
    \[
\begin{cases}
    j M_j+(j+1)M_{j+1}&=n-\sum_{k\neq j,j+1} i M_i,,\\ M_j+M_{j+1}&=M_{j,j+1} \qquad \Big(=k-\sum_{i\neq j,j+1} M_i\Big)\,.
\end{cases}
\]
Let $\hat{k}:=M_{j,j+1}$ and $\hat{n}:=n-\sum_{i\neq j,j+1} i M_i$. The only solution to the system is $m_{j+1}=\hat{n}-j\times \hat{k}$, $m_j=(j+1)\hat{k}-\hat{n}$. This is the rejection step that we rescale. Its acceptance probability\footnote{The solutions $m_j$ and $m_{j+1}$ are always integers. If any of the two were strictly negative, we reject.} $p=p(n,k,M_1,\ldots,M_{j-1},M_{j+2},\ldots)$ (before rescaling) is:

{\small
\begin{align*}
    \Pr\big(M_{j} = (j+1)\hat{k} - \hat{n},\,\, &M_{j+1} = \hat{n} - j\cdot \hat{k} \quad \big|\quad M_{j,j+1}=\hat{k}\big)
 \\ &= \binom{\hat{k}}{(j+1)\hat{k} - \hat{n}} \cdot \left(\frac{1}{1+\tfrac{\omega}{j+1}}\right)^{(j+1)\hat{k} - \hat{n}} \left(\frac{\tfrac{\omega}{j+1}}{1+\tfrac{\omega}{j+1}}\right)^{\hat{n} - j\cdot \hat{k}} \,.
\end{align*}
}
\end{enumerate}

\begin{lemma}
Let
    \label{lemma:bound-probability-surjections-2}
     \[
q := \frac{e^{\nicefrac{1}{12}}}{\sqrt{\pi G \frac{\omega/D}{(1+\omega/D)^2}}}\,,\qquad G:= 2 \frac{(D+1)k-n}{D^2}\,,\qquad D:= \lfloor\tfrac{2n}{k}\rfloor + {\bf 1}_{\lfloor\tfrac{2n}{k}\rfloor\equiv 1 \ ({\tt mod}\ 2)}\,.
\]
Then the acceptance probability $p$ satisfies $p\leq q$ uniformly.
\end{lemma}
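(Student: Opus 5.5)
The plan mirrors the proof of Lemma~\ref{lemma:bound-probability-surjections}, with the pair $(1,2)$ replaced by the selected pair $(j,j+1)$ and with $\omega/2$ replaced by $\omega/(j+1)$.

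\textbf{Step 1 (reduce to a binomial mode at $\hat{k}=G$).} Conditionally on $M_{j,j+1}=\hat{k}$, the acceptance probability $p$ is a binomial point mass $\Pr(\mathtt{Bin}(\hat{k},v_j)=m_{j+1})$ with
\[
v_j:=\frac{\omega/(j+1)}{1+\omega/(j+1)} .
\]
Hence $p\leq \max_i \Pr(\mathtt{Bin}(\hat{k},v_j)=i)$. The algorithm only reaches this step when $\hat{k}\geq G$. Lemma~\ref{lemma:max-binomial} says this maximum is non-increasing in the number of trials. We may therefore replace $\hat{k}$ by the integer $\lceil G\rceil$, and the bound only gets larger.

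\textbf{Step 2 (compare with a Poisson mode).} Take $p'\in\{v_j,1-v_j\}$ with $p'\geq 1/2$; this is allowed because a binomial is symmetric under exchanging successes and failures. Lemma~\ref{lemma:compare-binomial-poisson} then gives
\[
p\leq \frac{e^{1/12}}{\sqrt{1-p'}}\,\max_i \Pr(\Poiss{m}=i),\qquad m=(\lceil G\rceil+1)p'\geq 1 .
\]
For $m\geq 1$, Lemma~\ref{lemma:mode-poisson} and Lemma~\ref{lemma:max-probability-poisson} give $\max_i\Pr(\Poiss{m}=i)\leq 1/\sqrt{\pi m}$. Combining, and using $(\lceil G\rceil+1)p'(1-p')\geq G\,v_j(1-v_j)$, we get
\[
p\leq \frac{e^{1/12}}{\sqrt{\pi\, G\, v_j(1-v_j)}} .
\]
The condition $m\geq 1$ holds: $G>0$ because $(D+1)k>n$ (recall $k\geq 2n/(D+1)$), and $p'\geq 1/2$.

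\textbf{Step 3 (make the bound independent of $j$).} Note that
\[
v_j(1-v_j)=\frac{x}{(1+x)^2},\qquad x:=\frac{\omega}{j+1}.
\]
To reach $q$ we need $x/(1+x)^2\geq (\omega/D)/(1+\omega/D)^2$ for every $j\in\{1,3,\ldots,D-1\}$. I expect this to be the main obstacle. The function $h(x)=x/(1+x)^2$ increases on $(0,1]$ and decreases afterwards, and it is symmetric under $x\mapsto 1/x$. So the required inequality holds exactly when $x$ lies between $\omega/D$ and $D/\omega$.

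The lower end is automatic, since $j+1\leq D$ gives $\omega/(j+1)\geq \omega/D$. For the upper end we need $\omega/(j+1)\leq D/\omega$, i.e.\ $\omega^2\leq D(j+1)$. By Proposition~\ref{prop:properties-omega-surjections}, $\omega\leq n/k$, and $D\geq 2n/k-1$. So the inequality is clear whenever $(n/k)^2\leq 2(2n/k-1)$, which covers small $n/k$.

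If it fails for small $j$ with large $n/k$, I would fall back on the freedom in the algorithm: it may select any pair with $M_{j,j+1}\geq G$. I would either restrict attention to those $j$ with $j+1\geq \omega^2/D$, or check that $j\geq\omega$ is forced. Indeed $\omega^2\leq (n/k)^2\leq D^2/4$, so the worst case is $j=1$ with $\omega$ of order $D/2$, and then $h(\omega/2)$ versus $h(\omega/D)$ has to be compared directly. I would carry out that case analysis explicitly, and adjust the choice of pair if needed.
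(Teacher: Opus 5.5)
\textbf{Verdict: the proposal has a genuine gap at Step 3, and the gap cannot be closed, because the statement as given is false for large $n/k$.}

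Your Steps 1 and 2 coincide with the paper's argument. The paper likewise uses Lemma~\ref{lemma:max-binomial} and the binomial--Poisson comparison of Lemma~\ref{lemma:bound-probability-surjections} to reach the $j$-dependent bound $e^{1/12}/\sqrt{\pi G\,h(\omega/(j+1))}$, where $h(x)=x/(1+x)^2$.

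Step 3 is where you stop, and your suspicion there is justified. The paper handles this step by asserting that the bound is increasing in $j$, so that $j=D-1$ is the worst case. But $h$ is increasing only on $(0,1]$. The weaker fact actually needed is $h(\omega/(j+1))\ge h(\omega/D)$ for all odd $j\le D-1$. This is equivalent to $\omega^2\le 2D$. That condition holds for $n/k\le 2+\sqrt2$, but fails once $n/k$ is moderately large: for $n/k=50$ one has $\omega^2\approx 2500>200=2D$.

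Neither of your fallbacks helps. First, $j\ge\omega$ is not forced: the algorithm takes $j=1$ whenever $M_{1,2}\ge G$, and that event has positive probability. Second, restricting the selection to large $j$ gives up the only guarantee available. That guarantee is the pigeon-hole fact that \emph{some} pair among all $D/2$ of them reaches $G$, not a pair from a prescribed subset.

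No case analysis can rescue Step 3, because $p\le q$ itself fails for large $n/k$. Take $n=50k$ with $k$ large; inside the regime $k\ge n/\log n$ this only needs $n\ge e^{50}$. Then $D=100$, $G=0.0102\,k$ and $\omega\approx 50$. So $h(\omega/D)\approx 2/9$, while $v_1(1-v_1)=h(\omega/2)\approx 25/676$.

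Consider the positive-probability event that $M_{1,2}=\lceil G\rceil$, so $j=1$ is selected, and the other counts place $m_2$ at the mode of $\mathtt{Bin}(\lceil G\rceil,v_1)$. The normal approximation gives $p\sim(2\pi G\,h(\omega/2))^{-1/2}$. Hence $p/q\to e^{-1/12}\sqrt{h(\omega/D)/(2h(\omega/2))}\approx 1.6$, and the Bernoulli$(p/q)$ step is not even defined.

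What your Steps 1--2 do establish is a weaker bound. Since $h$ is quasi-concave, its minimum over $[\omega/D,\omega/2]$ is attained at an endpoint, so $p\le e^{1/12}/\sqrt{\pi G\min\{h(\omega/2),h(\omega/D)\}}$. This equals $q$ exactly when $\omega^2\le 2D$. In general, a corrected lemma must use this larger constant. That weakens Theorem~\ref{thm:sampling-rounds-algo-upper} to $O((n/k)^{3/2})$ rounds, which is still polylogarithmic when $k\ge n/\log n$.
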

\begin{proof}
By the same argument from Lemma~\ref{lemma:bound-probability-surjections}, it is enough to prove that $\hat{k}\geq G$ if there is a solution to the sytem above. Indeed, because we obtain the bound $\nicefrac{e^{\nicefrac{1}{12}}}{\sqrt{\pi G \frac{\omega/(j+1)}{(1+\omega/(j+1))^2}}}$ for fixed $j$, and this expression is actually increasing in $j$, thus taking $j=D-1$ maximizes it.
The inequality $\hat{k}\geq G$  follows from the pigeon-hole principle as explained and we are done.
\end{proof}

\begin{remark}
    Since $k\geq \frac{2n}{D+1}$ holds by the definition of $D$, we obtain $G\geq \tfrac{2n}{D^2}$.
\end{remark}

\theoremsamplingroundsalgoupper*
\begin{proof}
    Using Theorem~\ref{thm:order-acceptance-prob-surj} and Lemma~\ref{lemma:bound-probability-surjections-2}, up to multiplicative constants the up-scaled acceptance probability is bounded by below by:
    \[
    \sqrt{\frac{G \frac{\omega/D}{(1+\omega/D)^2}}{\Delta}} \geq \sqrt{\frac{\frac{2n}{D^2} \frac{\omega/D}{(1+\omega/D)^2}}{\Delta}} = \sqrt{\frac{\frac{2n/k}{D^2} \frac{k\omega/D}{(1+\omega/D)^2}} {\Delta}}  \geq \sqrt{\frac{(2n/k)}{D^3 (1+\omega/D)^2}}\,,
    \]
    where we have used $G\geq \frac{2n}{D^2}$ on the left-most inequality, and $k\omega \geq \Delta$ for the right-most.

    By definition $D \leq \frac{2n}{k} + 1\leq 3\frac{n}{k}$ so that $\frac{2n/k}{D} \geq \tfrac{2}{3}$. By Prop.~\ref{prop:properties-omega-surjections} we have $\omega\leq \frac{n}{k}$, while $D \geq \frac{2n}{k} - 1 \geq \frac{2n}{k}-\tfrac{n}{k}=\tfrac{n}{k}$. Thus $\omega/D\leq 1$. We obtain
    $\frac{(2n/k)}{D^3 (1+\omega/D)^2}\geq  \tfrac{1}{6} D^{-2}$. Thus the lower bound for the up-scaled acceptance probability follows upon taking square-roots because we just proved that $D=\Theta(n/k)$ too.

    Now we prove an upper-bound for the acceptance probability. Indeed $D+1\leq 2\tfrac{n}{k}+2 \leq 4 \tfrac{n}{k}$. Whence $G=\tfrac{(D+1)k-n}{D^2}\leq \tfrac{3n}{D^2}$. Then the upper-bound for the acceptance probability is proved similarly using that $k\omega \leq 2 \Delta$ from Proposition~\ref{prop:properties-omega-surjections} and $\tfrac{1}{(1+\omega/D)^2}\leq 1$. Thus we have shown that the expected number of sampling rounds is $\Theta(n/k)$.
\end{proof}

\section{Length of the profile of a random mapping}
\label{sec:length-profile}

This section is devoted to the proof of Proposition~\ref{prop:length-profiles-random}. 
\proplengthprofilesrandommapping*

\subsection{Bounds via the Poisson approximation}

\newcommand{\lengthPoisson}[2]{C_{#1,#2}}
\newcommand{\multinomial}{\mathrm{Multinomial}}

Let $n \geq 2$ and $k \geq 2$. The random vector $(Y_1,\ldots,Y_k)$ containing the sizes of the $k$ preimages of a random mapping from $[n]$ to $[k]$ is distributed as $\multinomial(k;1/k,\ldots,1/k)$. It is well-known this vector is well-approximated a random vector $(X_1,\ldots,X_k)$ of i.i.d. Poisson variables of parameter $\lambda = \frac{n}{k}$. A direct computation shows that (see for instance \cite[p. 166]{JohnsonKK2005}):
\begin{equation}
    \label{eq:poisson-approximation}
 (Y_1,\ldots,Y_k) \sim (X_1,\ldots,X_n) \;\textrm{conditioned by}\; X_1 + \cdots + X_k = n
\end{equation}

Let us denote $\lengthPoisson{n}{k}$, the length of the profile of $(X_1,\ldots,X_k)$ and $\length{n}{k}$, the length of the profile of $(Y_1,\ldots,Y_k)$.

Upper-bounds on the expected length of the profile in the Poisson case translate to the expected length of the profile of a random mapping.

\begin{theorem}[{\cite[Theorem~12]{RaabS1998}}]
\label{th:transfert-upper}
For all $n \geq 2$ and $k\geq 2$, we have $\mathbb{E}[\length{n}{k}] \leq 4 \mathbb{E}[\lengthPoisson{n}{k}]$.
\end{theorem}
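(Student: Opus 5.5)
The plan is to avoid conditioning on the full sum. Instead I split the coordinates into two halves and compare each half with unconditioned Poisson variables, using the mode bound from Lemma~\ref{lemma:max-probability-poisson}.

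Write $h := \lceil k/2\rceil$. For any vector, the profile length is subadditive under concatenation, and it is monotone under adding coordinates: the number of distinct values of $(Y_1,\ldots,Y_k)$ is at most the number of distinct values among $(Y_1,\ldots,Y_h)$ plus the number among $(Y_{h+1},\ldots,Y_k)$. By exchangeability of the multinomial vector, it therefore suffices to show
\[
\mathbb{E}[\mathrm{len}(Y_1,\ldots,Y_h)] \le 2\,\mathbb{E}[\mathrm{len}(X_1,\ldots,X_h)] \le 2\,\mathbb{E}[\lengthPoisson{n}{k}],
\]
and the same for the block of size $k-h\le h$. The last inequality holds because deleting coordinates cannot increase the number of distinct values.

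The key step is a density comparison. By \eqref{eq:poisson-approximation}, for any $(y_1,\ldots,y_h)$ with $s=\sum y_i\le n$,
\[
\Pr(Y_{1..h}=y)=\Pr(X_{1..h}=y)\cdot\frac{\Pr(\Poiss{(k-h)n/k}=n-s)}{\Pr(\Poiss{n}=n)},
\]
since $X_{h+1}+\cdots+X_k\sim\Poiss{(k-h)n/k}$. By Lemma~\ref{lemma:mode-poisson} the numerator is at most the modal probability of $\Poiss{\mu}$ with $\mu=(k-h)n/k$. The length functional is nonnegative, so the expectation under $Y$ is at most $\rho:=\max_j\Pr(\Poiss{\mu}=j)/\Pr(\Poiss{n}=n)$ times the expectation under $X$. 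I then use Lemma~\ref{lemma:max-probability-poisson}: the numerator is at most $1/\sqrt{2\pi\lfloor\mu\rfloor}$, and the denominator is at least $\frac{n^n e^{-n}}{n!}\ge \frac{e^{-1/(12n)}}{\sqrt{2\pi n}}$ by Robbins' bounds. Hence $\rho\lesssim\sqrt{n/\lfloor\mu\rfloor}$. For even $k$ this gives $\rho\approx\sqrt2\le 2$. For the other block, where $\mu = hn/k\ge n/2$, the bound is at most $\sqrt 2$ up to floor effects.

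The main obstacle is the constants at the boundary. For odd $k$, especially $k=3$, we have $\mu=n/3$, so $\rho\approx\sqrt3$, and the floor $\lfloor\mu\rfloor$ and the Stirling correction must still keep $\rho\le 2$. When $\mu<1$ the modal probability is simply bounded by $1$ (take $j=0$), and $n$ is then small. I would first check $\sqrt{n/\lfloor\mu\rfloor}\,e^{1/(12n)}\le 2$ whenever $\lfloor \mu\rfloor\ge1$ and $n/\mu\le 3$; this fails only for a finite set of small $(n,k)$. For those cases, and for $\mu<1$, I would argue directly, for instance by choosing the split $h=\lfloor k/2\rfloor$ versus $\lceil k/2\rceil$ favourably. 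If that fails, I would fall back on the trivial bound $\mathbb{E}[\length{n}{k}]\le\min(k,\sqrt{2n})$ from Lemma~\ref{prop:length-profile-surjection}, compared against $\mathbb{E}[\lengthPoisson{n}{k}]\ge \Pr(\text{not all } X_i \text{ equal})+\ldots$, which is bounded below by a constant for $k\ge2$. Summing the two blocks, each with factor at most $2$, against $\mathbb{E}[\lengthPoisson{n}{k}]$ then yields the claimed factor $4$.
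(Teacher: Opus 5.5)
Your argument is correct and takes a genuinely different route from the paper. The paper gives no proof here; it imports the inequality as a black-box Poissonization transfer from Raab and Steger. You prove it directly from three elementary facts:
\begin{itemize}
\item the number of distinct values is subadditive over a split into blocks;
\item it cannot increase when coordinates are deleted;
\item on a block of $h$ coordinates, $\Pr(Y_{1..h}=y)\le\rho\,\Pr(X_{1..h}=y)$ for every $y$, with $\rho=\max_j\Pr(\Poiss{\mu}=j)/\Pr(\Poiss{n}=n)$ and $\mu$ the Poisson mean of the complementary block.
\end{itemize}
The third fact is the mechanism the paper already uses in Proposition~\ref{prop:hp} and in the acceptance bound of Proposition~\ref{prop:up-scaled-proba-mappings}, so your proof makes the section self-contained.

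It also needs nothing about the functional beyond nonnegativity, subadditivity and monotonicity under deletion. Standard transfer theorems of this type (e.g.\ the factor-$2$ version in Mitzenmacher--Upfal) assume the expectation is monotone in the number of balls. That is delicate here, since $n\mapsto\mathbb{E}[L_{n,2}]$ oscillates ($3/2,\,2,\,13/8$ for $n=2,3,4$). The citation buys generality over arbitrary functionals. Yours buys an explicit constant $\rho_1+\rho_2$, which tends to $2\sqrt2$ for even $k$.

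One point needs repair in the small cases. The crude check $\sqrt{n/\lfloor\mu\rfloor}\,e^{1/(12n)}\le 2$ does not fail only on a finite set of pairs, because $k$ is unbounded. For $n=4$ and every odd $k\ge 3$ you get $\mu=2(k-1)/k\in[4/3,2)$, so $\lfloor\mu\rfloor=1$, and $2e^{1/48}>2$. Changing the split between $\lfloor k/2\rfloor$ and $\lceil k/2\rceil$ does not help, since one block always has a complement of size $\lfloor k/2\rfloor$.

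The clean fix is to split on $n$ alone:
\begin{itemize}
\item For $n\ge 10$, $\lfloor\mu\rfloor\ge\lfloor n/3\rfloor\ge (n-2)/3$, so $\rho\le\sqrt{3n/(n-2)}\,e^{1/(12n)}<1.96$ for every $k$.
\item For $n\le 9$, a vector of nonnegative integers summing to $n$ has at most three distinct positive entries (since $1+2+3+4>9$), hence at most four distinct values counting $0$. Since $C_{n,k}\ge 1$, this gives $\mathbb{E}[L_{n,k}]\le 4\le 4\,\mathbb{E}[C_{n,k}]$ directly.
\end{itemize}
Your proposed fallback via Lemma~\ref{prop:length-profile-surjection} must also account for the value $0$, since that lemma is stated for surjections. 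This covers all the problematic cases, including $\mu<1$, which only occurs for $n=2$.
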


To transfer lower-bounds from the Poisson case to the random mapping case, we rely on the following proposition which implies that if a property holds with high probability of for the first-half of the vector in the Poisson case than it also with high probability for the first-half of the vector of random mapping. 

Here we rely on the fact that the profile of the first half of the mapping can be drawn with a constant rejection probability by sampling $\lfloor \frac{k}{2} \rfloor$ iid Poisson variables of parameter $\lambda = \frac{n}{k}$ (cf. Algorithm~\ref{fig:procedure-final-mappings}). For the reader convenience, we provide a proof that does not rely on the correctness of the algorithm.

\begin{proposition}
\label{prop:hp}
 Let $n \geq 2$, $k\geq 2$. Let $\alpha = \lfloor\frac{k}{2}\rfloor$. There exists a constant $c>0$ indepedendent of $n$ and $k$ such that for all set $H \subset \mathbb{N}^\alpha$.
\[
P\left((Y_1,\ldots,Y_{\lfloor\frac{k}{2}\rfloor}) \in H\right) \geq 1 - c \cdot P\left((X_1,\ldots,X_{\lfloor\frac{k}{2}\rfloor}) \not\in H\right)
\]
\end{proposition}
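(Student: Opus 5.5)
We compare the two laws of the first half of the vector directly, through the density ratio induced by conditioning. Write $\alpha=\lfloor k/2\rfloor$ and $k_r=k-\alpha$. By \eqref{eq:poisson-approximation}, for every $x=(x_1,\ldots,x_\alpha)\in\mathbb{N}^\alpha$ with sum $s$, independence of the Poisson variables gives
\[
P\bigl((Y_1,\ldots,Y_\alpha)=x\bigr)=\frac{P\bigl((X_1,\ldots,X_\alpha)=x\bigr)\,\Pr(\Poiss{k_r\lambda}=n-s)}{\Pr(\Poiss{n}=n)}\,,
\]
using that $X_{\alpha+1}+\cdots+X_k\sim\Poiss{k_r\lambda}$ and $X_1+\cdots+X_k\sim\Poiss{n}$; the numerator factor is zero when $s>n$. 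The conditioned law of the first half is therefore the Poisson law reweighted by the ratio $\Pr(\Poiss{k_r\lambda}=n-s)/\Pr(\Poiss{n}=n)$.

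Next we bound this ratio uniformly in $s$, exactly as in the early-rejection argument. By Lemma~\ref{lemma:mode-poisson}, $\Pr(\Poiss{k_r\lambda}=n-s)\le\Pr(\Poiss{k_r\lambda}=\lfloor k_r\lambda\rfloor)$. Since $k_r\lambda=\lceil k/2\rceil\, n/k\ge n/2\ge 1$ for $n\ge2$, Lemma~\ref{lemma:max-probability-poisson} bounds this mode by $1/\sqrt{2\pi\lfloor k_r\lambda\rfloor}$. The same lemma bounds the denominator below by $e^{-2}/\sqrt{2\pi n}$. Using $\lfloor k_r\lambda\rfloor\ge \lfloor n/2\rfloor\ge n/4$ for $n\ge 2$, the ratio is at most $c:=e^2\sqrt{n/\lfloor k_r\lambda\rfloor}\le 2e^2$, a constant independent of $n$ and $k$.

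Summing the identity over $x\notin H$ then gives
\[
P\bigl((Y_1,\ldots,Y_\alpha)\notin H\bigr)\le c\,P\bigl((X_1,\ldots,X_\alpha)\notin H\bigr),
\]
and taking complements yields the statement.

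The only delicate step is making the constant uniform for small values, namely keeping $\lfloor k_r\lambda\rfloor$ away from $0$ and controlling $k$ odd. Both are handled by $k_r\ge k/2$, which gives $k_r\lambda\ge n/2\ge1$ whenever $n\ge 2$. The rest is a direct computation.
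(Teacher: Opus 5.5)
Your proof is correct and is essentially the paper's argument. Like the paper, you write the law of $(Y_1,\ldots,Y_\alpha)$ as the Poisson law reweighted by $\Pr(\Poiss{k_r\lambda}=n-s)/\Pr(\Poiss{n}=n)$, bound the numerator by the mode via Lemma~\ref{lemma:mode-poisson}, and sum over $x\notin H$. The only difference is that you bound the resulting ratio directly with Lemma~\ref{lemma:max-probability-poisson}, getting $c=2e^2$ for all $n\ge 2$, whereas the paper cites Proposition~\ref{prop:up-scaled-proba-mappings} to get $c=10$. Your chain of inequalities also runs in the correct direction, whereas the paper's displayed chain, with its ``$\leq$'' signs and factor $\tfrac{1}{10}$, has the signs and the constant inverted.
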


\begin{proof}
\[
\begin{array}{cl}
& P\left( \left(Y_1,\ldots,Y_{\alpha}\right) \in H \right)  \\

= & 1- P\left( (Y_1,\ldots,Y_{\alpha}) \not\in H) \right)  \\

= & 1- P\left( (X_1,\ldots,X_{\alpha}) \not\in H \; | \; X_1 + \cdots + X_k =n ) \right) \quad\quad \textrm{by Eq.~\ref{eq:poisson-approximation}} \\

= & 1 -  \dfrac{P\left( (X_1,\ldots,X_{\alpha}) \not\in H \;\wedge\; X_1 + \cdots + X_k = n) \right)}{P(X_1 + \cdots + X_k =n)}  \\

= & 1 - \displaystyle \sum_{\substack{(x_1,\ldots,x_\alpha) \not\in H \\ s = x_1+\cdots+x_\alpha}} P\left( (X_1,\ldots,X_{\alpha})=(x_1,\ldots,x_\alpha) \right)  \dfrac{P(X_{\alpha+1}+\cdots+X_k = n-s)}{P(X_1+\cdots+X_k=n)}  \\

= & 1 - \displaystyle \sum_{\substack{(x_1,\ldots,x_\alpha) \not\in H \\ s = x_1+\cdots+x_\alpha}} P\left( (X_1,\ldots,X_{\alpha})=(x_1,\ldots,x_\alpha) \right)  \dfrac{\poisson{\lceil \frac{k}{2} \rceil \frac{n}{k}}{n-s}}{\poisson{n}{n}}  \\

\leq & 1 - \displaystyle \sum_{\substack{(x_1,\ldots,x_\alpha) \not\in H \\ s = x_1+\cdots+x_\alpha}} P\left( (X_1,\ldots,X_{\alpha})=(x_1,\ldots,x_\alpha) \right)  \dfrac{\poisson{\lceil \frac{k}{2} \rceil \frac{n}{k}}{\lfloor\lceil \frac{k}{2} \rceil \frac{n}{k} \rfloor}}{\poisson{n}{n}} \quad\quad \textrm{by Lemma~\ref{lemma:mode-poisson}} \\

\leq & 1 - \frac{1}{10} P\left( (X_1,\ldots,X_{\alpha}) \not\in H \right) \quad\quad \text{by Proposition~\ref{prop:up-scaled-proba-mappings}} \\
\end{array}	
\]
\end{proof}

In particular, if a lower-bound holds with high probability for the profile of $(X_1,\ldots,X_{\lfloor \frac{k}{2} \rfloor})$ when $k$ tends to infinity and independently of $n$, it holds in expectation from the profile of the random mapping.

\begin{corollary}
\label{cor:transfert-lower-bound-from-Poisson}
  If the length of the profile of $(X_1,\ldots,X_{\lfloor \frac{k}{2} \rfloor})$ is at least $f(n,k)$ with high probability when $k$ tends to infinity and independently of $n$ then there exists a constant $d$ independent of $n$ and $k$ such that:
  \[
  \mathbb{E}(\length{n}{k}) \geq d \cdot f(n,k)
\]
\end{corollary}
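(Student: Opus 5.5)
We obtain Corollary~\ref{cor:transfert-lower-bound-from-Poisson} from Proposition~\ref{prop:hp} and Markov-type reasoning. We use the monotonicity of the profile length under taking a sub-vector: the profile of $(Y_1,\ldots,Y_k)$ contains at least as many distinct values as the profile of $(Y_1,\ldots,Y_{\lfloor k/2\rfloor})$. So $\length{n}{k} \geq \ell(Y_1,\ldots,Y_{\lfloor k/2\rfloor})$, where $\ell(\cdot)$ denotes the profile length (number of distinct values). It therefore suffices to lower-bound the expected profile length of the first half.

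Let $H := \{(x_1,\ldots,x_\alpha) : \ell(x_1,\ldots,x_\alpha) \geq f(n,k)\}$ with $\alpha=\lfloor k/2\rfloor$. The hypothesis says that $P((X_1,\ldots,X_\alpha)\notin H) \leq \varepsilon_k$, where $\varepsilon_k \to 0$ as $k\to\infty$ uniformly in $n$. Proposition~\ref{prop:hp} then gives
\[
P\bigl((Y_1,\ldots,Y_\alpha)\in H\bigr) \geq 1 - c\,\varepsilon_k .
\]
Choose $k_0$ such that $c\,\varepsilon_k \leq 1/2$ for all $k\geq k_0$; this choice is independent of $n$. For such $k$,
\[
\mathbb{E}[\length{n}{k}] \geq f(n,k)\,P\bigl((Y_1,\ldots,Y_\alpha)\in H\bigr) \geq \tfrac12 f(n,k).
\]

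The remaining, and only delicate, step is the finitely many $k<k_0$, since there the ``with high probability'' hypothesis gives no control. Here we use $\length{n}{k}\geq 1$ always. This yields $\mathbb{E}[\length{n}{k}] \geq 1 \geq f(n,k)/f^*$, provided $f(n,k)$ is bounded by some $f^*$ on $k<k_0$. This holds in all intended applications: $f(n,k)\leq \alpha \le k_0$, since a profile of $\alpha$ entries has length at most $\alpha$, so any meaningful $f$ satisfies $f\le k$. Formally we may replace $f$ by $\min(f,\alpha)$ without loss, which is harmless because the hypothesis can only hold for $f\le\alpha$ once $k\geq 2$. Taking $d=\min(1/2,1/k_0)$ concludes.
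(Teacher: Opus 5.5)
Your argument is essentially the one the paper intends. The paper reads the corollary off Proposition~\ref{prop:hp}, with $H$ the set of half-vectors whose profile has length at least $f(n,k)$, together with the fact that the profile length can only grow when passing from $(Y_1,\ldots,Y_{\lfloor k/2\rfloor})$ to the full vector. Your large-$k$ step is exactly this: choose $k_0$ with $c\,\varepsilon_k\le 1/2$ and obtain $\mathbb{E}[L_{n,k}]\ge f(n,k)/2$.

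One correction concerns the small-$k$ step, which the paper does not treat at all. It is not true that the hypothesis forces $f\le\alpha$ for every $k\ge 2$. A ``with high probability as $k\to\infty$'' hypothesis only constrains $f$ for $k$ large enough that $P\bigl((X_1,\ldots,X_\alpha)\in H\bigr)>0$; below that threshold it says nothing. In fact the corollary cannot hold as literally stated. Take $f(n,k)=n$ for $k<10$ and $f(n,k)=0$ for $k\ge 10$. The hypothesis then holds trivially, yet $\mathbb{E}[L_{n,2}]\le 2$ cannot be $\ge d\,n$ for all $n$.

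So replacing $f$ by $\min(f,\alpha)$ is not ``without loss'': it weakens the conclusion. Your fix should instead be stated as an explicit extra assumption, for instance $f(n,k)\le k$, or simply $f$ bounded on $k<k_0$. Alternatively, restrict the conclusion to $k\ge k_0$. This assumption holds in both places where the paper uses the corollary. For $f=ck$ it is immediate, and for $f=\sqrt{n/k}$ with $k\ge n^{1/3}$ one has $n/k\le k^2$, so $\sqrt{n/k}\le k$. With this assumption made explicit, your proof is complete.
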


In order to apply Corollary~\ref{cor:transfert-lower-bound-from-Poisson}, we need to establish lower bounds for the length of the profile of $k$ iid Poisson that hold with high probability. This is done in the next Section~\ref{ssec:lower-bounds-whp}. The proof of Proposition~\ref{prop:length-profiles-random} is give in Section~\ref{ssec:proof-length-profile}

\subsection{Lower-bounds for the profile of $k$ iid Poisson}
\label{ssec:lower-bounds-whp}

\begin{theorem}
    \label{thm:lower-bound-whp-small-k}
    Consider $k\in \mathbb{Z}_{\geq 1}$ and let $\lambda > 0$. Let $X_1,\ldots,X_k$ be iid random variables $\Poiss{\lambda}$. Suppose $k \leq \lambda^{{1}/{2}}$.

    Let $c \in (0, 1 - \frac{1}{\sqrt
    {2\pi}})$. Then the length $C_k$ of the profile of $(X_1,\ldots,X_k)$ satisfies $C_k \geq c k$ with probability tending to $1$ as $k\to\infty$, this convergence in probability is is uniform on $\lambda$ satisfying the inequality $k\leq \lambda^{1/2}$.
\end{theorem}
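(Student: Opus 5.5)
The length $C_k$ of the profile equals the number of distinct values among $X_1,\ldots,X_k$. The plan is to bound the number of \emph{collisions}. Let $P:=\#\{(i,j): i<j,\ X_i=X_j\}$ be the number of colliding pairs. A value taken $m\ge 1$ times contributes $m-1\le \binom{m}{2}$ to $k-C_k$, so
\[
k - C_k \le P .
\]
It therefore suffices to show $P \le (1-c)k$ with probability tending to $1$, uniformly in $\lambda\ge k^2$.

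\textbf{First moment.} Each pair collides with probability $\pi_\lambda:=\sum_j \Pr(\Poiss{\lambda}=j)^2$. Since the maximal probability bounds this sum, Lemma~\ref{lemma:max-probability-poisson} gives
\[
\pi_\lambda \le \max_j \Pr(\Poiss{\lambda}=j)\le \frac{1}{\sqrt{2\pi\lfloor\lambda\rfloor}}.
\]
Hence
\[
\E[P]=\binom{k}{2}\pi_\lambda \le \frac{k^2}{2\sqrt{2\pi\lfloor\lambda\rfloor}}.
\]
Using $\lambda\ge k^2$ we get $\lfloor\lambda\rfloor\ge k^2$ (as $k^2$ is an integer), so $\E[P]\le \frac{k}{2\sqrt{2\pi}}$. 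This is already below $\frac{k}{\sqrt{2\pi}}$.

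\textbf{Markov's inequality.} Markov alone gives
\[
\Pr\big(P>(1-c)k\big)\le \frac{1}{2\sqrt{2\pi}(1-c)},
\]
which is a constant, not $o(1)$. So a concentration step is needed, and this is the main obstacle.

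\textbf{Second moment.} I would compute $\mathrm{Var}(P)$. Pairs sharing no index are independent. Pairs sharing one index, $(i,j)$ and $(i,l)$, both collide with probability
\[
\sum_j \Pr(\Poiss{\lambda}=j)^3\le \pi_\lambda\max_j \Pr(\Poiss{\lambda}=j)\le \frac{1}{2\pi\lfloor\lambda\rfloor}.
\]
There are $O(k^3)$ such ordered configurations, contributing $O\!\left(k^3/\lambda\right)=O(k)$. Diagonal terms contribute at most $\E[P]=O(k)$. Thus $\mathrm{Var}(P)=O(k)$. Chebyshev then gives
\[
\Pr\Big(P>\E[P]+\varepsilon k\Big)=O\!\left(\frac{1}{\varepsilon^2 k}\right)\longrightarrow 0,
\]
with constants depending only on the bounds from Lemma~\ref{lemma:max-probability-poisson}, hence uniform in $\lambda$.

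\textbf{Conclusion.} Taking $\varepsilon := (1-c)-\frac{1}{2\sqrt{2\pi}}>0$, we get with probability $1-O(1/k)$
\[
C_k\ \ge\ k-P\ \ge\ k-\Big(\tfrac{1}{2\sqrt{2\pi}}+\varepsilon\Big)k = ck.
\]
The only real work is the variance computation; everything else follows from the mode bound. If the claimed range of $c$ is meant to be sharp, the factor $\tfrac12$ gives extra slack, which is fine.
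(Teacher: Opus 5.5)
Your proof is correct. It shares the paper's first-moment step but takes a different route for concentration.

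\textbf{Shared first moment.} The paper writes $C_k=\sum_j Y_j$, where $Y_j$ indicates that value $j$ is hit. It lower-bounds $\mathbb{E}[C_k]\ge k-\binom{k}{2}\sum_j p_j^2$ via the Bonferroni-type inequality $(1-t)^k\le 1-kt+\binom{k}{2}t^2$. This is exactly the expectation of your pathwise inequality $C_k\ge k-P$, so the mean bounds coincide.

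\textbf{Concentration: paper.} The paper concentrates $C_k$ itself. The occupancy indicators are negatively correlated, since $(1-p_j-p_{j'})^k\le\bigl((1-p_j)(1-p_{j'})\bigr)^k$. Hence $\mathrm{Var}(C_k)\le\mathbb{E}[C_k]\le k$ with no further input, and the hypothesis $k\le\sqrt{\lambda}$ is used only once, for the mean.

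\textbf{Concentration: yours.} You concentrate the collision count $P$ instead. Its pair indicators sharing an index are nonnegatively correlated, because $\sum_v p_v^3\ge\bigl(\sum_v p_v^2\bigr)^2$. So you must bound the triple-collision probability by $\max_v p_v\cdot\sum_v p_v^2$ and invoke $\lambda\ge k^2$ a second time to get $\mathrm{Var}(P)=O(k)$. The paper's variance step is therefore more robust, since it holds for any iid distribution. Yours is the classical birthday-problem argument: only an upper tail for $P$ is needed, and you never need the joint occupancy probabilities $\mathbb{E}[Y_jY_{j'}]$.

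\textbf{Constant.} By keeping the exact bound $\binom{k}{2}\le k^2/2$, you reach any $c<1-\frac{1}{2\sqrt{2\pi}}$. The paper's write-up bounds $\binom{k}{2}$ by $k^2$ and so only gets $1-\frac{1}{\sqrt{2\pi}}$. That loss is slack in the paper's computation rather than a limitation of its method. Your result of course implies the statement as given.
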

\begin{proof}
    The proof is a simple concentration. First we prove that $\E[C_k] =\Theta(k)$ and then we prove that $\E[C_k^2]\sim (\E[C_k])^2$. The equivalent $\E[C_k^2]\sim (\E[C_k])^2$ implies concentration due to Chebyshev's inequality. Indeed, $\E[C_k^2]\sim (\E[C_k])^2$ along with $\E[C_k]\to \infty$ implies ${\tt Var}(C_k)=o((\E[C_k])^2)$ and Chebyshev's inequality states $\Pr(|C_k - \E[C_k]|\geq \eps \E[C_k])\leq \nicefrac{{\tt Var}(C_k)}{(\eps\cdot \E[C_k])^2}$.

    {\bf Asymptotic order of the expected value. }Let us denote $Y_j:=\mathbf{1}_{\exists i : X_i=j}$. Then $\E[Y_j] = 1 - (1 -    \tfrac{\lambda^j}{j!}e^{-\lambda})^k $. As $C_k = \sum_{j=0}^\infty Y_j$, we have
    \[
        \E[C_k]=\sum_{j=0}^\infty \left(1 - (1 -    \tfrac{\lambda^j}{j!}e^{-\lambda})^k \right) \geq k - \binom{k}{2} \sum_{j=0}^\infty\left(\tfrac{\lambda^j}{j!}e^{-\lambda}\right)^2\,,
    \]
    by the inequality $(1-t)^k \leq 1- k t+ \binom{k}{2}t^2$, valid for $t\in [0,1]$.

            For every $j\geq 0$, we have $\tfrac{\lambda^j}{j!}e^{-\lambda}\leq \frac{1}{\sqrt{2\pi \lfloor\lambda\rfloor}}$. We deduce
        \[
        \binom{k}{2} \sum_{j=0}^\infty \left(\tfrac{\lambda^j}{j!}e^{-\lambda}\right)^2 \leq k\times \left(\sum_{j=0}^\infty \tfrac{\lambda^j}{j!}e^{-\lambda}\right)\times \frac{k}{\sqrt{2\pi \lfloor\lambda\rfloor}} =k \times \frac{k}{\sqrt{2\pi \lfloor\lambda\rfloor}} \,.
        \]
        By hypothesis $k\leq \lfloor\sqrt{\lambda}\rfloor$, as $k$ is an integer. Since $\lfloor\sqrt{\lambda}\rfloor\leq \sqrt{\lfloor\lambda\rfloor}$ holds for all $\lambda > 0$,  $\frac{k}{\sqrt{2\pi \lfloor\lambda\rfloor}} \leq \frac{1}{\sqrt{2\pi}}$. Thus $\mathbb{E}[C_k]\geq (1 - \tfrac{1}{\sqrt{2\pi}}) k$ and $\mathbb{E}[C_k]\leq k$ holds trivially.

        {\bf Concentration. }Now we prove that $\E[C_k^2] \sim (\E[C_k])^2$. Since $C_k = \sum_{j=0}^\infty Y_j$, we have
        \[
        C_k^2 = \sum_{j=0}^\infty Y_j^2 + 2 \sum_{0\leq j < j'} Y_j Y_{j'}\,.
        \]
        Let us write $p_j = \frac{\omega^j}{j!}e^{-\omega}$ for convenience. Since $Y_j^2=Y_j$, as this is an indicator function, we have $\E[Y_j^2]=\E[Y_j] = 1 - (1-p_j)^k$. Also $Y_j Y_{j'} = \mathbf{1}_{\exists i,i' : X_i=j,X_{i'}=j'}$, and we remark that
        $\mathbb{E}[Y_j Y_{j'}] = 1 - (1-p_j)^k- (1-p_{j'})^k + (1-p_j-p_{j'})^k $. In all
        \[
        \E[C_k^2] = \sum_j q_j + 2  \sum_{j<j'} \Big(1 - (1-p_j)^k- (1-p_{j'})^k + (1-p_j-p_{j'})^k\Big)\,,
        \]
        where  $q_j:=1 - (1-p_j)^k$.
        
        On the other hand $(\mathbb{E}[C_k])^2 = (\sum_j q_j)^2 = \sum_j q_j^2 +2 \sum_{j<j'} q_j q_{j'}$. Here it is important to note that $q_j q_{j'} = 1 - (1-p_j)^k - (1-p_{j'})^k + ((1-p_j)(1-p_{j'}))^k$.

        Hence
        \[
        \mathbb{E}[C_k^2] - \mathbb{E}[C_k]^2 = \sum_j q_j (1-q_j) + 2\sum_{j<j'}\Big( (1-p_j-p_{j'})^k-((1-p_j)(1-p_{j'}))^k\Big)\,.
        \]
        Observe here that $ (1-p_j-p_{j'})^k-((1-p_j)(1-p_{j'}))^k \leq 0$, because $(1-p_j)(1-p_{j'}) = 1-p_j-p_{j'}+p_j p_{j'} \geq 1-p_j-p_{j'}$. Also $q_j (1-q_j) \leq q_j$. Thus we conclude
        \[
        \mathbb{E}[C_k^2] - \mathbb{E}[C_k]^2 \leq \sum_j q_j  = \E[C_k] \leq k\,.
        \]
        At this point we are done, because $\mathbb{E}[C_k^2] - \mathbb{E}[C_k]^2 = {\tt Var}(C_k) \geq 0$, and therefore $\mathbb{E}[C_k^2] - \mathbb{E}[C_k]^2 = O(k)$, while $\mathbb{E}[C_k]^2 = \Theta(k^2)$.
\end{proof}

Let us consider now the case in which $k$ is larger than $\sqrt{\lambda}$.

\begin{theorem}
    \label{thm:optimal-length-profile}
    Consider $k\in \mathbb{Z}_{\geq 1}$ and let $\lambda > 0$. Let $X_1,\ldots,X_k$ be iid random variables $\Poiss{\lambda}$. Suppose $k \geq \lambda^{{1}/{2} }$.

    Then there is $c>0$ such that the length $C_k$ of the profile of $(X_1,\ldots,X_k)$ satisfies $C_k \geq  c \lfloor\sqrt{\lambda}\rfloor$ with high probability as $\lambda\to\infty$, this convergence in probability is uniform on $k$ satisfying $k \geq \lambda^{{1}/{2} }$.
\end{theorem}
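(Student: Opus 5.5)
Since $C_k$ is monotone in $k$ (adding variables can only add distinct values), I will reduce to the case $k = m := \lfloor\sqrt{\lambda}\rfloor$. If $k \geq \lambda^{1/2}$, then $k \geq m$, and $C_k \geq C_m$, where $C_m$ is the length of the profile of the first $m$ variables. So it suffices to show $C_m \geq c\, m$ with probability tending to $1$ as $\lambda \to \infty$. This reduction also gives uniformity in $k$ for free, because the bound on $C_m$ does not involve $k$.

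I will then apply Theorem~\ref{thm:lower-bound-whp-small-k} with $k$ replaced by $m$. Its hypothesis $m \leq \lambda^{1/2}$ holds by construction. For any fixed $c \in (0, 1-\tfrac{1}{\sqrt{2\pi}})$ it gives $\Pr(C_m \geq c\,m) \to 1$ as $m\to\infty$, uniformly over all $\lambda$ with $m \leq \lambda^{1/2}$. Since $m = \lfloor\sqrt\lambda\rfloor \to \infty$ as $\lambda \to \infty$, we get $\Pr(C_k \geq c\lfloor\sqrt\lambda\rfloor) \geq \Pr(C_m \geq c\,m) \to 1$.

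The one point to check carefully is that the convergence in Theorem~\ref{thm:lower-bound-whp-small-k} is genuinely uniform in $\lambda$, since here $\lambda$ and $m$ move together. Its proof gives the explicit Chebyshev bound $\Pr(C_m < c\,m) \leq \frac{m}{((1-\tfrac{1}{\sqrt{2\pi}}) - c)^2 m^2}$. This follows from $\mathrm{Var}(C_m)\leq m$ and $\E[C_m]\geq (1-\tfrac{1}{\sqrt{2\pi}})m$, and the right-hand side tends to $0$ depending only on $m$. I expect no other obstacle. If one prefers not to rely on the uniformity statement, I would simply rerun that second-moment computation at size $m$ directly, which gives the same bound.
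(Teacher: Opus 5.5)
Your proposal is correct and follows the paper's proof: since the profile length is non-decreasing in $k$, the claim reduces to $k=\lfloor\sqrt{\lambda}\rfloor$, where Theorem~\ref{thm:lower-bound-whp-small-k} applies. The one addition is your explicit Chebyshev bound $\frac{m}{(1-\tfrac{1}{\sqrt{2\pi}}-c)^2m^2}$, which depends only on $m$ and so makes precise the uniformity in $k$ that the paper leaves implicit.
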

\begin{proof}
    This is a corollary of the previous theorem as the length of the profile, for any realization, is actually non-decreasing in $k$. 
    
    Thus the length is at least that of $k=\lfloor\sqrt{\lambda}\rfloor$.
\end{proof}

\subsection{Proof of Proposition~\ref{prop:length-profiles-random}}
\label{ssec:proof-length-profile}

\proplengthprofilesrandommapping*

\begin{proof}
We first consider the upper-bounds. For the case $k \leq n^{1/3}$, the $k$ upper bound is trivial.
For $n^{1/3} \leq k \leq n/\log n$, Proposition~\ref{prop:prop-window-2} shows that $\mathbb{E}[\lengthPoisson{n}{k}]$, the expected value of the profile of $k$ iid Poisson random variable of parameter $\frac{n}{k}$ is at most $c \sqrt{\frac{n}{k}\log k}$ for some constant $c$ independent of $n$ and $k$.
The upper bound on $\mathbb{E}[\length{n}{k}]$ follows from Theorem~\ref{th:transfert-upper}. For $n / \log n \leq k$, again  Proposition~\ref{prop:prop-window-2} shows that $\mathbb{E}[\lengthPoisson{n}{k}]$ is at most $c' \log k$ and it is transfered to 
$\mathbb{E}[\length{n}{k}]$ using Theorem~\ref{th:transfert-upper}.

We now consider the lower-bounds. For $k \leq n^{1/3}$, the lower bound is obtain by combining  Theorem~\ref{thm:lower-bound-whp-small-k} and Corollary~\ref{cor:transfert-lower-bound-from-Poisson}. For $n^{1/3} \leq k \leq n/\log n$, let $c$ be the constant of Proposition~\ref{prop:hp}. By Corollary~\ref{cor:transfert-lower-bound-from-Poisson}, there exists a value of $L$ such that the probability that the profile of $(X_1,\ldots,X_{\lfloor \frac{k}{2} \rfloor})$ has length less than $\sqrt{\frac{n}{k}}$ with probability at most $\frac{1}{2c}$ provided that  $\frac{n}{k} \geq L$. It follows that for the same values of $n$ and $k$, the profile of $(Y_1,\ldots,Y_k)$ has length at least $\sqrt{\frac{n}{k}}$ with probability at least $\frac{1}{2}$. We conclude by remarking that there are only finite many value in the region $n^{1/3} \leq k \leq \frac{n}{\log n}$ such that $\frac{n}{k}<L$. Indeed, this implies that $k \leq  n \leq e^L$. 
\end{proof}

\section{Length of the profile of a random surjection}
\label{sec:length-profile-surjection}

\lengthprofilesurjection*

\begin{proof}
Let $\ell$ be the length of the profile and let $1\leq n_1 < n_2<\ldots<n_\ell$ be the distinct sizes of the preimages occuring in the profile, in sorted order\footnote{Note that this does not correspond to the size-vector.}.
We have $n \geq n_1 + n_2 + ... + n_\ell \geq 1 + \ldots + \ell = \frac{\ell(\ell+1)}{2}$.
As the latter is at least $\nicefrac{\ell^2}{2}$, the announced inequality follows.
\end{proof}

In this section we show how to deduce the order of the profile of a
random surjection from that of random mappings. We prove the following
\begin{proposition}
    Let $L_{n,k}$ be the length of the profile of a random mapping from $[n]$ to $[k]$, and let $R_{n,k}$ be the length of the profile of a random surjection from $[n]$ to $[k]$. On the region $k\leq \tfrac{n}{\log n}$, we have $\mathbb{E}[R_{n,k}] = \Theta(\mathbb{E}[L_{n,k}])$ uniformly. 
\end{proposition}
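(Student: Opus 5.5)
The plan is to couple the two models through conditioning. A uniform random surjection from $[n]$ to $[k]$ has the law of a uniform random mapping conditioned on the event $\mathcal{S}$ that the mapping is surjective. Write $L=\length{n}{k}$ and $R=R_{n,k}$, with $R$ distributed as $L$ conditioned on $\mathcal{S}$. By Lemma~\ref{lemma:coupon-collector}, $p(n,k)=\Pr(\mathcal{S})\geq 1-\tfrac{1}{\log n}$ in the region $k\leq \tfrac{n}{\log n}$. For $n\geq 9$ this gives $p(n,k)\geq 1/2$. The finitely many remaining pairs $(n,k)$ only affect the constants, since both expectations lie between $1$ and $\sqrt{2n}$ (Lemma~\ref{prop:length-profile-surjection}).

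\textbf{Upper bound.} I will write $\mathbb{E}[R]=\mathbb{E}[L\,\mathbf{1}_{\mathcal{S}}]/p(n,k)\leq \mathbb{E}[L]/p(n,k)\leq 2\,\mathbb{E}[L]$. This uses only $L\geq 0$.

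\textbf{Lower bound.} I will write $\mathbb{E}[L]=p\,\mathbb{E}[R]+\mathbb{E}[L\,\mathbf{1}_{\mathcal{S}^c}]$, so it suffices to show $\mathbb{E}[L\,\mathbf{1}_{\mathcal{S}^c}]\leq \tfrac12\mathbb{E}[L]$, or at least that it is $O(\mathbb{E}[L])$ with a constant below $1$. The main obstacle is that $L$ could be large on the small event $\mathcal{S}^c$. I handle this with the deterministic bound $L\leq \sqrt{2n}$ from Lemma~\ref{prop:length-profile-surjection}, together with a sharper tail bound for $\mathcal{S}^c$. The union bound in the proof of Lemma~\ref{lemma:coupon-collector} actually gives $\Pr(\mathcal{S}^c)\leq k e^{-n/k}$, which is tiny away from the boundary $k\approx \tfrac{n}{\log n}$.

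I split the region into two cases.

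\begin{itemize}
\item If $k\leq \tfrac{n}{2\log n}$, then $\Pr(\mathcal{S}^c)\leq k n^{-2}\leq n^{-1}$. Hence $\mathbb{E}[L\,\mathbf{1}_{\mathcal{S}^c}]\leq \sqrt{2n}/n=o(1)$, which is negligible since $\mathbb{E}[L]\geq 1$.
\item If $\tfrac{n}{2\log n}< k\leq \tfrac{n}{\log n}$, then $n/k=\Theta(\log n)$. Proposition~\ref{prop:length-profiles-random} gives $\mathbb{E}[L]\geq c_2\sqrt{n/k}=\Omega(\sqrt{\log n})$, but the crude bound $\sqrt{2n}\cdot\tfrac{1}{\log n}$ is too weak here.
\end{itemize}

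For the second case I will instead use the Poisson representation. Conditionally on $\mathcal{S}^c$, the profile length is still controlled by the window of Proposition~\ref{prop:prop-window-2}: $L\leq W$ where $W$ is the max-minus-min spread of the multinomial vector. The min is at least $0$, and the max is dominated via Theorem~\ref{th:transfert-upper}-type transfer, giving $O(\log n)$ in expectation. The same computation should bound the max in expectation even on the conditioning event, since conditioning on one fibre being empty only increases the others stochastically by a factor $k/(k-1)$. This gives $\mathbb{E}[L\,\mathbf{1}_{\mathcal{S}^c}]\leq \mathbb{E}[L\mid\mathcal{S}^c]\,\Pr(\mathcal{S}^c)=O(\log n)\cdot\tfrac{1}{\log n}=O(1)$, which is at most a constant multiple of $\mathbb{E}[L]$.

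The remaining issue is that this constant must be below $p$. To avoid tracking that constant, I would rather prove the lower bound directly. The upper bound above already yields $\mathbb{E}[R]\leq 2\,\mathbb{E}[L]$. For the reverse, I apply Proposition~\ref{prop:hp} and Corollary~\ref{cor:transfert-lower-bound-from-Poisson} to the event $H$ that ``the profile of the first half is long''. Then $\Pr(R\geq f)\geq \Pr(L\geq f,\mathcal{S})\geq \Pr(L\geq f)-\tfrac{1}{\log n}\geq \tfrac12-\tfrac{1}{\log n}$, where $f$ is the lower bound used in the proof of Proposition~\ref{prop:length-profiles-random}. This gives $\mathbb{E}[R]\geq f/4=\Omega(\mathbb{E}[L])$, using the matching upper bounds of that proposition up to the $\sqrt{\log k}$ loss. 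That loss is the step I expect to need care: closing it requires the $\Pr(L\geq c\,\mathbb{E}[L])$ form, which I would obtain from the Paley–Zygmund inequality with the variance bound $\mathrm{Var}(C_k)\leq \mathbb{E}[C_k]$ proved in Theorem~\ref{thm:lower-bound-whp-small-k}.
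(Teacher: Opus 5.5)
Your upper bound is the paper's: its Wald-type rejection lemma amounts to $\mathbb{E}[R]=\mathbb{E}[L\mathbf{1}_{\mathcal{S}}]/p\le\mathbb{E}[L]/p$.

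For the lower bound, your \emph{first} route already closes, and you abandoned it unnecessarily. In the case $\tfrac{n}{2\log n}<k\le\tfrac{n}{\log n}$ you yourself note $\mathbb{E}[L]\ge c_2\sqrt{n/k}\ge c_2\sqrt{\log n}\to\infty$. So once $\mathbb{E}[L\mathbf{1}_{\mathcal{S}^c}]\le K$ for an absolute constant $K$, you get $\mathbb{E}[R]\ge p\,\mathbb{E}[R]=\mathbb{E}[L]-\mathbb{E}[L\mathbf{1}_{\mathcal{S}^c}]\ge\tfrac12\mathbb{E}[L]$ for large $n$. No constant has to be compared with $p$, because the ratio $\mathbb{E}[L\mathbf{1}_{\mathcal{S}^c}]/\mathbb{E}[L]$ is $O(1/\sqrt{\log n})$.

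The $O(1)$ bound does need a cleaner justification than conditioning on a single empty fibre ($\mathcal{S}^c$ is a union). Theorem~\ref{th:transfert-upper} is also the wrong tool, since it concerns profile lengths, not maxima. Instead write
\[
\mathbb{E}[L\mathbf{1}_{\mathcal{S}^c}]\le\sum_{i=1}^k\mathbb{E}\bigl[L\mathbf{1}_{\{Y_i=0\}}\bigr]=k\bigl(1-\tfrac1k\bigr)^n\,\mathbb{E}[L\mid Y_1=0]\le\tfrac{k}{n}\,\mathbb{E}[L\mid Y_1=0],
\]
using $n/k\ge\log n$. Given $Y_1=0$, the other coordinates are multinomial over $k-1$ cells, and $L\le 1+\max_i Y_i$. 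The Jensen-plus-union-bound argument of Lemma~\ref{lemma:inequality-max-exp-rv} never uses independence; with $a=e$ it gives $\mathbb{E}[\max_i Y_i\mid Y_1=0]\le\log k+(e-1)\tfrac{n}{k-1}=O(\log n)$ in this case. Since $k/n\le 1/\log n$, this gives $O(1)$. (For mappings the deterministic bound is $L\le\sqrt{2n}+1$, since $0$ may be a preimage size; this is harmless in your first case.)

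Your second route is essentially the paper's own argument. The paper lower-bounds $\mathbb{E}[R]$ by combining a high-probability lower bound $f$ on the profile length, coming from Theorems~\ref{thm:lower-bound-whp-small-k} and~\ref{thm:optimal-length-profile}, with a rejection lemma. This yields $\Omega(k)$ for $k\le n^{1/3}$ but only $\Omega(\sqrt{n/k})$ in the middle region. As written, it therefore carries exactly the $\sqrt{\log k}$ slack you flagged against the upper bound of Proposition~\ref{prop:length-profiles-random}. Your first route is the stronger one: it compares $\mathbb{E}[R]$ with $\mathbb{E}[L]$ directly and needs no matching bounds on $\mathbb{E}[L]$, only that $\mathbb{E}[L]$ grows near the boundary.

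If you prefer to repair the second route, two links are missing from your sketch.
\begin{itemize}
\item[(i)] You need $\mathbb{E}[L]\le 4\mathbb{E}[C_k]\le 4\bigl(2\mathbb{E}[C_{\lfloor k/2\rfloor}]+1\bigr)$, by Theorem~\ref{th:transfert-upper} and subadditivity of the number of distinct values. Then a lower bound in terms of the half-vector is a lower bound in terms of $\mathbb{E}[L]$.
\item[(ii)] You need a failure probability below $1/c$ for the Poisson half-vector, because Proposition~\ref{prop:hp} transfers nothing weaker. A mere constant success probability from Paley--Zygmund is therefore the wrong target. Chebyshev with $\mathrm{Var}(C)\le\mathbb{E}[C]$ provides it once $\mathbb{E}[C_{\lfloor k/2\rfloor}]$ is large; the proof of that variance bound in Theorem~\ref{thm:lower-bound-whp-small-k} does not use $k\le\sqrt{\lambda}$. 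When $\mathbb{E}[C_{\lfloor k/2\rfloor}]$ is bounded, (i) gives $\mathbb{E}[L]=O(1)$, and $\mathbb{E}[R]\ge 1$ suffices.
\end{itemize}
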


We have shown in Lemma~\ref{lemma:coupon-collector} that, on the regime $k\leq \tfrac{n}{\log n}$, random mappings are surjections, with probability tending to one. This will actually imply the result for surjections in this region.

The following lemma shows the upper-bound, by applying it to the process of rejection of random mappings until finding a surjection:
\begin{lemma}
    \label{lemma-upper-bound-wald}
    Suppose a random variable $X_n \geq 0$ is thrown from a process of iid random variables $Y_1(n),Y_2(n),\ldots \geq 0$ by a rejection procedure with acceptance probability uniformly bounded from below by $\delta>0$. Then we have $\mathbb{E}[X_n] =O(\mathbb{E}[Y(n)])$ where $Y(n)$ is distributed as the $Y_i(n)$.
\end{lemma}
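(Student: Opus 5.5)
The statement is Lemma~\ref{lemma-upper-bound-wald}. I would identify the accepted output with a stopped sum of costs. Let $A_1,A_2,\ldots$ be the acceptance events and $T=\min\{t: A_t\}$ the index of the first accepted trial. Then $X_n = Y_T(n)$. I read ``acceptance probability uniformly bounded from below'' as follows: the decision at trial $t$ depends only on $Y_t(n)$ (and possibly fresh independent randomness), the pairs $(Y_t(n),A_t)$ are iid across $t$, and $\Pr(A_t)\geq\delta$. The key observation is that $X_n \leq \sum_{t=1}^{T} Y_t(n)$, since all the $Y_t$ are nonnegative.

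The plan has three steps.
\begin{enumerate}
\item \textbf{Stopping time.} The event $\{T\geq t\}$ depends only on $A_1,\ldots,A_{t-1}$. These are independent of $Y_t(n)$. So $T$ is a stopping time for the filtration generated by the pairs $(Y_t,A_t)$.
\item \textbf{Apply Wald.} Wald's equation, which the paper already uses via~\cite{Blackwell}, gives $\mathbb{E}\big[\sum_{t\leq T} Y_t(n)\big] = \mathbb{E}[T]\,\mathbb{E}[Y(n)]$. Nonnegativity lets me invoke it through monotone convergence, without integrability side conditions: write $\sum_{t\le T}Y_t=\sum_{t\ge1}Y_t\mathbf{1}_{T\ge t}$ and use the independence of $Y_t$ and $\mathbf{1}_{T\geq t}$ to get $\sum_t \mathbb{E}[Y]\Pr(T\ge t)$.
\item \textbf{Bound $\mathbb{E}[T]$.} $T$ is geometric with success probability $p\geq\delta$, so $\mathbb{E}[T]=1/p\leq 1/\delta$. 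Combining the three steps yields $\mathbb{E}[X_n]\leq \mathbb{E}[Y(n)]/\delta$, with a constant independent of $n$.
\end{enumerate}

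A sharper alternative avoids Wald entirely: $\mathbb{E}[X_n]=\mathbb{E}[Y\mid A]=\mathbb{E}[Y\mathbf{1}_A]/\Pr(A)\leq \mathbb{E}[Y]/\delta$. I would present this one-line argument, since it follows directly from $X_n$ being distributed as $Y$ conditioned on acceptance.

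The only delicate point is the modelling assumption: the accept/reject decision at trial $t$ must not depend on future trials, so that the conditioning is legitimate. In the intended application (rejecting random mappings until a surjection is obtained, with $\delta\geq 1-1/\log n\ge 1/2$ by Lemma~\ref{lemma:coupon-collector}), acceptance is a deterministic function of the current mapping, so the assumption holds trivially.
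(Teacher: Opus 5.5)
Your proposal is correct. Steps 1--3 are the paper's proof. The paper also dominates the output by the stopped sum, $X_n\le\sum_{i\le\tau}Y_i(n)$, applies Wald's identity to get $\mathbb{E}[\tau]\,\mathbb{E}[Y(n)]$, and uses $\mathbb{E}[\tau]=1/a(n)\le 1/\delta$. You add two checks the paper leaves implicit: that $\{T\ge t\}$ is independent of $Y_t$, and that Wald holds for nonnegative summands by monotone convergence.

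The one-line argument you say you would actually present is a genuinely different route. Rather than bounding the output by the total work of the loop, it computes the law of the output. Writing $p=\Pr(A)$, you get $\Pr(Y_T\in B)=\sum_{t\ge1}(1-p)^{t-1}\Pr(A\cap\{Y\in B\})=\Pr(A\cap\{Y\in B\})/p$, hence the exact identity $\mathbb{E}[X_n]=\mathbb{E}[Y\mathbf{1}_A]/\Pr(A)$. This needs no stopping-time theory. Because it is an identity, it also gives for free the inequality $\mathbb{E}[X_n]\ge\mathbb{E}[Y\mathbf{1}_A]$, which the paper's next (lower-bound) lemma derives separately from $X_n\ge Y_1\mathbf{1}_{A_1}$.

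It is not sharper in the final constant, though: both routes end at $\mathbb{E}[Y]/\Pr(A)\le\mathbb{E}[Y]/\delta$. What the Wald route buys instead is control of the total cost $\mathbb{E}\big[\sum_{t\le T}Y_t\big]=\mathbb{E}[T]\,\mathbb{E}[Y]$ of the whole rejection loop. That is the form the paper needs for its running-time bounds.

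A minor point on your application remark: $1-1/\log n\ge 1/2$ holds only for $n\ge 8$. For the finitely many smaller $n$ the acceptance probability is still positive, so a uniform $\delta>0$ exists regardless.
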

\begin{proof}
    Let $\tau = \tau(n) \geq 1$ be the stopping time associated with the acceptance. Then Wald's identity implies
    \[
    \mathbb{E}[\tau(n)] \times \mathbb{E}[Y(n)] = \mathbb{E}[\sum_{i\leq \tau} Y_i(n)]  \geq \mathbb{E}[X(n)]\,.
    \]
    Observe that $\mathbb{E}[\tau(n)]=1/a(n)$, where $a(n)$ is the acceptance probability. By hypothesis $a(n)\geq \delta$ so that $\mathbb{E}[\tau(n)]=1/a(n)\leq 1/\delta$. Therefore $\mathbb{E}[X(n)] = O(\mathbb{E}[Y(n)])$.
\end{proof}

Finally, the lower-bound is slightly more involved; we use Theorem~\ref{thm:lower-bound-whp-small-k} and Theorem~\ref{thm:optimal-length-profile}, which show a lower-bound for the random mappings, that holds with high probability.
\begin{lemma}
    Suppose a random variable $X_n > 0$ is thrown from a process of iid random variables $Y_1(n),Y_2(n),\ldots > 0$ by a rejection procedure with acceptance probability uniformly bounded from below by $\delta>0$. Suppose that $Y_i(n)$ satisfy that $Y_i(n)\geq f(n) > 0$ with probability tending to $1$. Then we have $\mathbb{E}[X_n] = \Omega(f(n))$.
\end{lemma}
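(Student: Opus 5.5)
The plan is to compare $X_n$ directly with the first proposal $Y_1(n)$, using the fact that conditioning on an event of probability at least $\delta$ can inflate the probability of a bad event by at most a factor $1/\delta$. I will model the rejection procedure in the natural way. At round $i$ we draw $Y_i(n)$ together with an independent acceptance decision $A_i$, where $A_i$ may depend on $Y_i(n)$ and on further independent randomness. The variable $X_n$ equals $Y_\tau(n)$, where $\tau$ is the first accepted round. Since the rounds are iid, $X_n$ has the conditional law of $Y_1(n)$ given $A_1$. This covers the intended application, rejecting random mappings until a surjection appears: there $A_1$ is the event that $Y_1$ comes from a surjection, and $\Pr(A_1)\geq 1-\tfrac{1}{\log n}$ by Lemma~\ref{lemma:coupon-collector}.

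\textbf{Bad event.} Let $a(n)=\Pr(A_1)\geq \delta$. Then
\[
\Pr\bigl(X_n< f(n)\bigr)=\frac{\Pr\bigl(Y_1(n)<f(n)\wedge A_1\bigr)}{a(n)}\leq \frac{\Pr\bigl(Y_1(n)<f(n)\bigr)}{\delta}.
\]
The right-hand side tends to $0$ by hypothesis. Hence there is $n_0$ such that for all $n\geq n_0$ we have $\Pr(X_n\geq f(n))\geq \tfrac12$.

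\textbf{Markov step.} Since $X_n>0$, for $n\geq n_0$ we get $\mathbb{E}[X_n]\geq f(n)\Pr(X_n\geq f(n))\geq \tfrac12 f(n)$.

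\textbf{Finite range.} For the finitely many $n<n_0$, both $\mathbb{E}[X_n]>0$ and $f(n)>0$, so the ratio $\mathbb{E}[X_n]/f(n)$ is bounded below by a positive minimum. Together with the previous step this gives $\mathbb{E}[X_n]=\Omega(f(n))$.

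The main obstacle is uniformity, since in the paper's application the parameter is really the pair $(n,k)$. The hypotheses come from Theorems~\ref{thm:lower-bound-whp-small-k} and~\ref{thm:optimal-length-profile}, and there "with probability tending to 1" is uniform in the other parameter. I would therefore state the convergence as $\sup \Pr(Y<f)\to 0$ along the index that tends to infinity. The displayed inequality is uniform because $\delta$ is a uniform constant, so this formulation carries through. The remaining finitely many index values are then handled as in Proposition~\ref{prop:length-profiles-random}: when the threshold parameter is bounded, there are only finitely many pairs $(n,k)$.
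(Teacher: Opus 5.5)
Your proof is correct and is essentially the paper's argument. The paper uses the pathwise bound $X_n\geq Y_1(n)\mathbf{1}_{A_1}$ to get $\mathbb{E}[X_n]\geq f(n)\Pr(A_1\cap B_1)\geq f(n)\,\delta/2$ for large $n$, where $B_1=\{Y_1(n)\geq f(n)\}$. You instead use the exact conditional law of $Y_1(n)$ given $A_1$, which only improves the constant to $\tfrac12$, and you also handle the finitely many small $n$ that the paper passes over with ``for large enough $n$'' (one small slip in your closing aside: for $k\leq n^{1/3}$, bounded $k$ does not leave only finitely many pairs $(n,k)$, but there the target $ck$ is bounded and the trivial fact that a profile has length at least $1$ covers it).
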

\begin{proof}
    Let $A_i=A_i(n)$ be the event that $Y_i(n)$ is accepted, while let $B_i=B_i(n)$ be the event that $Y_i(n)$ satisfies $Y_i(n)\geq f(n)$.

    By hypothesis $\Pr(A_i(n))\geq \delta$ while $\Pr(B_i(n))\to 1$ as $n\to \infty$. We show that this means that $\Pr(A_i(n)\cap B_i(n)) \geq \delta/2 $ for all large enough $n$. Otherwise, $\Pr(B_i(n))=\Pr(A_i(n)\cap B_i(n)) + \Pr(B_i(n)\setminus A_i(n)) \leq \delta/2+ 1 - \Pr_i(A_i(n)) \leq 1 - \delta/2$ for arbitrarily large $n$, contradicting  $\Pr(B_i(n))\to 1$.

    For the proof of the lower bound, it will actually be enough to look at the case in which we accept $Y_1(n)$. Write $X_n \geq  Y_1(n) {\bf 1}_{A_1}$. Taking expected values and noticing that $\mathbb{E}[Y_1(n) {\bf 1}_{A_1}] \geq f(n) \Pr(A_1\cap B_1)$, we deduce $\mathbb{E}[X_n] \geq f(n) \delta/2$ for large enough $n$, completing the proof.
\end{proof}

The only thing we need to explain is the passage from one parameter to two in applying these lemmas. In fact, this works because the convergence in probability is uniform in the second parameter, the one that is largest in both cases. This means that the $\Omega$ constant can be made uniform. In the case of Theorem~\ref{thm:optimal-length-profile}, $\lambda$ is in principle a continuous parameter, but, since we are dealing with a lower bound, this is not a problem, we can increase the constant if necessary.

Finally, for the case $k\geq \tfrac{n}{\log n}$, we have shown in Proposition~\ref{maximum-high-part} that the length of the profile for $k$ variables that are $\Poissone{\omega}$, with $\omega=\omega(n,k)$ satisfying the saddle-point equation $k  \tfrac{\omega e^\omega}{e^\omega - 1} = n$, their profile is $O(\log n)$. The argument of Lemma~\ref{lemma-upper-bound-wald} shows that the expected length of the profile of a surjection is $O((\log n)^2)$ as we have shown that the number of sampling rounds for our algorithm in this region is $O(n/k)$, see Theorem~\ref{thm:sampling-rounds-algo-upper}. Thus we have shown that the expected length of the profile is $\tilde{O}(1)$.

\section{Expected cost of the profile sampler for random mapping (Algorithm~\ref{fig:procedure-final-mappings})}

\subsection{Expected window size: Proposition~\ref{prop:prop-window-2}}
\label{sec:proof-window}

We are now ready to finish the proof of Proposition~\ref{prop:prop-window-2}
\propwindowmapping*
\begin{proof}
The lower-bound follows at once from $L_{k,\lambda}\geq |X_1-\lambda|$ and Proposition~\ref{prop:deviation-mean-poisson}. For the upper-bounds, we consider the cases separately.

    -- Let us consider first the case $\lambda \leq \log k$.

    It is clear that we also have:
    \[
    L_{k,\lambda} \leq 2 + \max(X_1,\ldots,X_k)+\lambda\,.
    \]
    Thus
    \[
    \E[L_{k,\lambda}] \leq 2 + \E[\max(X_1,\ldots,X_k)] + \lambda \leq 2 + 3 \log k + \log k\,,
    \]
    and the inequality for this case follows.

    -- Let us now consider the case $\lambda > \log k$. We recall that $L_{k,\lambda}$ satisfies,
    \[
    L_{k,\lambda} \leq 2 + \max(X_1,\ldots,X_k)-\min(X_1,\ldots,X_k)+|X_1-\lfloor\lambda\rfloor|\,.
    \]

    We distinguish according to whether $\lambda\geq 1$ or not. If ever $\lambda < 1$, we have $|X_1-\lfloor\lambda\rfloor|=X_1$ and
        \[
    \E[L_{k,\lambda}] \leq 2 + \E[\max(X_1,\ldots,X_k)-\min(X_1,\ldots,X_k)]+\E[X_1] \,.
    \]
    Now $\mathbb{E}[X_1]=\lambda < 1$, thus $\E[L_{k,\lambda}] \leq 3 + \E[\max(X_1,\ldots,X_k)-\min(X_1,\ldots,X_k)] \,,$ and we deduce
    \[
    \E[L_{k,\lambda}] \leq 3 + 4 \sqrt{\lambda \log k}\,.
    \]

    Finally,  if $\lambda \geq 1$ we apply Proposition~\ref{prop:diff-max-min-k-poisson} and Proposition~\ref{prop:deviation-mean-poisson},
    \[
    \E[L_{k,\lambda}] \leq 2 + 4 \sqrt{\lambda \log k} + \sqrt{\frac{2}{\pi}\lambda} + \sqrt{\frac{2}{\pi}}\,.
    \]
    Observe that, since $k\geq 2$, we have $\log k \geq \log 2 > \frac{2}{\pi}$ and $1>\frac{2}{\pi}$ so that
    \[
    \E[L_{k,\lambda}] \leq 3 + 5 \sqrt{\lambda \log k} \,.
    \]
    Combining both inequalities the proof is complete.
\end{proof}

\subsection{Expected overall cost of Algorithm~\ref{fig:procedure-final-mappings}}
\label{sec:proof-main-rm}

This section is dedicated to the proof of Theorem~\ref{thm:random-mapping}.

\thmrmmain*

The depth of recursion is clearly $O(\log k)$. What is not immediately clear is that the acceptance probability and the expected cost of each sampling round remain roughly the same throughout each recursive call.

Let us begin with the acceptance probability. When the algorithm is called with $(n,k)$, the probability that it accepts $X_1,\ldots,X_{k_\ell}$ and continues recursively is \[a'(n,k)=\frac{\Pr(X_1+\ldots+X_k = n)}{q} = \frac{\Pr(\Poiss{n}=n)}{\Pr(\Poiss{k_r \omega}=\lfloor k_r\omega\rfloor)}\,.\]

\begin{proposition}
    \label{prop:up-scaled-proba-mappings}
    The acceptance probability $a'(n,k)$ converges asymptotically to $\tfrac{\sqrt{2}}{2}$ as $n\to \infty$. Moreover, it is bounded by below by $\frac{\sqrt{2}}{2}\times (1-\tfrac{2.5}{n})\geq 0.1$ for all $n\geq 3$ and $k<n$.
\end{proposition}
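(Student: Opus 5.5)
The plan is to write the ratio explicitly and then bound numerator and denominator separately, using Stirling-type estimates in the sharp form of Robbins~\cite{Rob55}. Set $m:=k_r\omega = \lceil k/2\rceil\, n/k$. Then
\[
a'(n,k)=\frac{n^n e^{-n}/n!}{m^{\lfloor m\rfloor}e^{-m}/\lfloor m\rfloor!}.
\]

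For the numerator, Robbins' inequalities give
\[
\frac{n^ne^{-n}}{n!}\ \geq\ \frac{e^{-1/(12n)}}{\sqrt{2\pi n}}.
\]
For the denominator, Lemma~\ref{lemma:max-probability-poisson} gives $\Pr(\Poiss{m}=\lfloor m\rfloor)\leq 1/\sqrt{2\pi\lfloor m\rfloor}$. Combining the two,
\[
a'(n,k)\ \geq\ e^{-1/(12n)}\sqrt{\lfloor m\rfloor/n}.
\]
It remains to lower-bound $\lfloor m\rfloor/n$.

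Since $\lceil k/2\rceil\geq k/2$, we have $m\geq n/2$, hence $\lfloor m\rfloor\geq n/2-1/2$ when $n$ is odd and $\lfloor m\rfloor\geq n/2$ otherwise. In general $\lfloor m\rfloor\geq (n-1)/2$, which gives
\[
\sqrt{\lfloor m\rfloor/n}\ \geq\ \tfrac{\sqrt2}{2}\sqrt{1-1/n}\ \geq\ \tfrac{\sqrt2}{2}\,(1-1/n).
\]
Multiplying by $e^{-1/(12n)}\geq 1-1/(12n)$ yields $a'(n,k)\geq \tfrac{\sqrt2}{2}(1-\tfrac{1}{n}-\tfrac{1}{12n})\geq \tfrac{\sqrt2}{2}(1-\tfrac{2.5}{n})$, which matches the claimed constant with room to spare. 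For $n\geq 3$ this is at least $\tfrac{\sqrt2}{2}\cdot\tfrac16\approx 0.118\geq 0.1$. The hypothesis $k<n$ is not really needed here; it only guarantees $\lambda\geq 1$ where Lemma~\ref{lemma:max-probability-poisson} is applied, and indeed $m\geq n/2\geq 1.5$ as required. The only step needing care is the floor: $\lfloor m\rfloor\geq (n-1)/2$ follows because $m\geq n/2$ and $\lfloor x\rfloor> x-1$, together with integrality of $2\lfloor m\rfloor$ relative to $n$.

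For the limit, I would sandwich $a'(n,k)$. The numerator satisfies $\sim 1/\sqrt{2\pi n}$. For the denominator, Lemma~\ref{lemma:max-probability-poisson} combined with the factor $(1+\{m\}/\lfloor m\rfloor)^{\lfloor m\rfloor}e^{-\{m\}}$ gives $\sim 1/\sqrt{2\pi m}$, since that factor equals $\exp(O(1/\lfloor m\rfloor))\to 1$. Hence $a'(n,k)\sim\sqrt{m/n}$.

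Here lies the obstacle. We have $m/n=\lceil k/2\rceil/k$, which tends to $1/2$ only if $k\to\infty$ or $k$ is even; for odd fixed $k$ it equals $(k+1)/(2k)$. So the convergence to $\tfrac{\sqrt2}{2}$ must be read as taking $k$ even or $k\to\infty$ as well, or else the statement should be interpreted via the lower bound only. I would state the limit as $\sqrt{\lceil k/2\rceil/k}$ and note that it equals $\tfrac{\sqrt2}{2}$ in the intended cases. If a uniform upper bound is also wanted, $\lceil k/2\rceil/k\leq 2/3$ for $k\geq 2$ suffices.
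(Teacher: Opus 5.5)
Your proof is correct and follows essentially the same route as the paper: Robbins' lower bound $D(n,n)\geq e^{-1/(12n)}/\sqrt{2\pi n}$ for the numerator, the mode bound $q\leq 1/\sqrt{2\pi\lfloor m\rfloor}$ with $\lfloor m\rfloor\geq\lfloor n/2\rfloor$ for the denominator, and then elementary inequalities. Your bound $\lfloor m\rfloor\geq (n-1)/2$ is slightly sharper than the paper's $(n-2)/2$. Your caveat about the limit is also justified: the paper's step $q\sim 1/\sqrt{2\pi n/2}$ tacitly assumes $\lceil k/2\rceil/k\to\tfrac12$, i.e.\ $k$ even or $k\to\infty$ (the main text's ``when $k$ is even'' hints at this), whereas for fixed odd $k$ the limit is $\sqrt{\lceil k/2\rceil/k}$.
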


\begin{proof}
    Without conditioning, the acceptance probability is
    \[
    \Pr(X_1+\ldots+X_k = n) = D(\omega\cdot k,n) = D(n,n) = \frac{n^n e^{-n}}{n!} \sim \frac{1}{\sqrt{2\pi n}}\,.
    \]
    The speed-up divides this probability by $q=D(\omega\cdot k_r,\lfloor \omega\cdot k_r\rfloor) \sim \frac{1}{\sqrt{2\pi n/2}}$\,.

    More generally  $D(a,\lfloor a \rfloor ) = D(\lfloor a \rfloor ,\lfloor a \rfloor ) (1+\{a\}/\lfloor a \rfloor)^{\lfloor a\rfloor} e^{-\{a\}}\leq  D(\lfloor a \rfloor ,\lfloor a \rfloor ) \leq \frac{1}{\sqrt{2\pi \lfloor a\rfloor}}$.

    Thus $q\leq \frac{1}{\sqrt{2\pi \lfloor \tfrac{n}{2}\rfloor }}$, while $D(n,n)\geq \tfrac{1}{\sqrt{2\pi n}} \exp(-\tfrac{1}{12 n})$ holds due to Stirling (see e.g., \cite{Rob55}). We deduce the bound
    \[
    \frac{p}{q}=\frac{D(n,n)}{D\left(\tfrac{n}{k}\cdot k_r,\lfloor \tfrac{n}{k}\cdot k_r\rfloor\right)} \geq  \sqrt{\frac{\lfloor n/2\rfloor}{n}} \exp(-\tfrac{1}{12 n}) \geq \frac{\sqrt{2}}{2}\sqrt{1 - \frac{2}{n}} \exp(-\tfrac{1}{12 n}) \,,
    \]
    holds for all $n\geq 2$. As $\sqrt{1 - \frac{2}{n}}\geq 1-\tfrac{2}{n}$, $\exp(-\tfrac{1}{12 n})\geq 1-\tfrac{1}{12 n}$ and $(1-\tfrac{2}{n})(1-\tfrac{1}{12n})\geq 1 - \tfrac{2+1/12}{n}$, we are done because $2+1/12\leq 2.5$.
\end{proof}

This means that the number of rejections, regardless of $n$ and $k$, is expected to be constant. It remains to study the cost of the sampling rounds. For \emph{deterministic} $n$ and $k$, a sampling round costs $\mathbb{E}[S_{n,k}] \leq 5\sqrt{\tfrac{n}{k}\log k} + 4 \log k + 3$ due to Prop.~\ref{prop:prop-window-2}. In our case, as the recursion proceeds we obtain $N_0=n$, $N_1$, $N_2$, and so on, that are random variables (except for $N_0$ that is constant). On the other hand $k_{i+1}=k_i-\lfloor k_i/2\rfloor$, $k_0=k$, are deterministic.

Still, we can prove that $\mathbb{E}[\nicefrac{N_i}{k_i}]=\nicefrac{n}{k}$ for each $i$. This, together with the concavity of $x\mapsto \sqrt{x}$ imply the following proposition:
\begin{restatable}{proposition}{expectedtimerecurrence}
\label{prop:recurrence-nlogn}
    For each recursive call $(N_i,k_i)$ in Algorithm~\ref{fig:procedure-final-mappings}, the expected cost of the sampling rounds satisfy
$\mathbb{E}[S_{N_i,k_i}] \leq 5\sqrt{\tfrac{n}{k}\log k} + 4 \log k+ 3$.
\end{restatable}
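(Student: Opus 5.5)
We will prove the bound in two steps: first show that the ratio $N_i/k_i$ is a martingale-like quantity with $\mathbb{E}[N_i/k_i]=n/k$, and then plug the random parameters into the deterministic cost bound, using concavity.

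\textbf{Step 1: the expectation of $N_i/k_i$.} By the correctness of Algorithm~\ref{fig:procedure-final-mappings}, conditionally on $N_i=m$, the profile accepted at level $i$ is that of the first $\lfloor k_i/2\rfloor$ coordinates of a uniform random mapping from $[m]$ to $[k_i]$. Equivalently, it is that of $\multinomial(m;1/k_i,\ldots,1/k_i)$. Hence $N_{i+1}=m-s$, where $s$ is the sum of these first $\lfloor k_i/2\rfloor$ coordinates. By exchangeability of the multinomial coordinates,
\[
\mathbb{E}[N_{i+1}\mid N_i=m] = m\cdot \frac{k_i-\lfloor k_i/2\rfloor}{k_i}=\frac{m\,k_{i+1}}{k_i}.
\]
Taking expectations gives $\mathbb{E}[N_{i+1}/k_{i+1}]=\mathbb{E}[N_i/k_i]$, and induction from $N_0/k_0=n/k$ yields $\mathbb{E}[N_i/k_i]=n/k$.

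\textbf{Step 2: the deterministic bound.} For fixed $(m,k')$, Wald's equation combined with Prop.~\ref{prop:prop-window-2} and the constant lower bound on the acceptance probability (Prop.~\ref{prop:up-scaled-proba-mappings}) bounds $\mathbb{E}[S_{m,k'}]$. Prop.~\ref{prop:prop-window-2} covers both regimes $\lambda\le\log k'$ and $\lambda>\log k'$, so
\[
\mathbb{E}[S_{m,k'}]\le 5\sqrt{\tfrac{m}{k'}\log k'}+4\log k'+3.
\]
This is the bound the text already states for deterministic parameters, with the constants as in the proposition.

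\textbf{Step 3: combining.} Condition on $N_i$ and take the total expectation. Since $k_i\le k$, we have $\log k_i\le\log k$, so
\[
\mathbb{E}[S_{N_i,k_i}]\le 5\sqrt{\log k}\;\mathbb{E}\Bigl[\sqrt{N_i/k_i}\Bigr]+4\log k+3.
\]
Jensen's inequality for the concave map $x\mapsto\sqrt{x}$ then gives $\mathbb{E}[\sqrt{N_i/k_i}]\le\sqrt{\mathbb{E}[N_i/k_i]}=\sqrt{n/k}$, which is the claimed bound.

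\textbf{Main obstacle.} The delicate point is Step 1, specifically justifying that the conditional law of the accepted left half is the marginal of the multinomial. This is exactly the correctness argument behind the early rejection: the acceptance probability $p_s/q$ reweights the Poisson law of the left half by $\Pr(\Poiss{k_r\omega}=m-s)$, which yields the Poisson vector conditioned on its total being $m$. By Eq.~\eqref{eq:poisson-approximation}, that conditioned vector is the multinomial.

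A secondary subtlety is the base case. Recursive calls with $k_i\le N_i^{1/3}$ switch to the procedure of Section~\ref{sec:profile-mapping-k-small}, which has no sampling rounds. The bound holds there trivially, or one can simply restrict the statement to calls that perform sampling rounds.
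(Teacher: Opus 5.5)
Your proposal follows the paper's route. You establish $\mathbb{E}[N_i/k_i]=n/k$, and you justify it properly via the correctness of the early-rejection step and the exchangeability of the multinomial, whereas the paper only asserts it. You then combine the deterministic bound from Proposition~\ref{prop:prop-window-2} with Jensen's inequality for $x\mapsto\sqrt{x}$ and $\log k_i\le\log k$.

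One small inconsistency is in Step~2. The constants $5,4,3$ are those of a \emph{single} sampling round (Proposition~\ref{prop:prop-window-2} applied to \texttt{SimulateProfile}), so no Wald argument is needed to obtain them. If instead $S_{N_i,k_i}$ denotes the total cost over all rounds of the $i$-th call, Wald's equation multiplies the bound by the expected number of rounds $1/a'(N_i,k_i)$. This factor is bounded by a constant by Proposition~\ref{prop:up-scaled-proba-mappings} but is not $1$, so the stated constants change, though not the order.
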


\begin{remark}
   On the regime $k\leq \tfrac{n}{\log n}$,  Proposition~\ref{prop:prop-window-2} implies that $\mathbb{E}[S_{n,k}] \leq 5\sqrt{\tfrac{n}{k}\log k} + 3$. Still, it is possible for recursive calls $(N_1,k_1),(N_2,k_2)\ldots$ to fall outside of this regime. That is why we use the most general $\mathbb{E}[S_{N_i,k}] \leq 5\sqrt{\tfrac{N_i}{k_i}\log k_i} + 4 \log k_i + 3$, implied by Proposition~\ref{prop:prop-window-2}.
\end{remark}

In our analysis we count the number of times we sample from a simple random variable such as Poisson, Binomial or Bernoulli.

Still, we remark that summing the profiles is not costly.
Referring to the Algorithm~\ref{fig:procedure-final-mappings}, the expected length $\mathbb{E}[{\tt Len}(profile_\ell)]$  is bounded by the expected value of the sum of the costs of all costs of the sampling rounds effectuated before accepting $profile_\ell$. Since the number of sampling rounds is expected to be constant, by Wald's equation we deduce $\mathbb{E}[{\tt Len}(profile_\ell)] = O(\sqrt{\tfrac{n}{k}\log k} + \log k)$.  If $k$ is even, by symmetry $A=(X_1,\ldots,X_{k/2})$ and $B=(X_{k/2+1},\ldots,X_k)$, are identically distributed, even given that the sum is $n$. Thus $\mathbb{E}[{\tt Len}(profile_r)]=O(\sqrt{\tfrac{n}{k}\log k} + \log k)$. If $k$ were odd, the argument is easily adapted. This implies that the sums cost, overall  $O((\log k)(\sqrt{\tfrac{n}{k}\log k} + \log k))$. As a by-product, we have proved that the expected length of the profile of a random mapping is  $O(\sqrt{\tfrac{n}{k}\log k} + \log k)$. To achieve this complexity, we represent the profiles by two dynamic arrays that contains in sequence the value to the right (resp. to the left) of the mode $\lfloor \frac{n}{k} \rfloor$.

\end{document}